\documentclass[sigconf]{acmart}
\usepackage[utf8]{inputenc}
\usepackage{textgreek}
\usepackage{xcolor}
\usepackage{amsmath}
\usepackage[ruled, vlined, linesnumbered]{algorithm2e}
\usepackage{bm}
\usepackage{enumitem}
\usepackage{lipsum} 
\usepackage[most]{tcolorbox}
\usepackage{tikz}
\usepackage{pgfplots}
\usetikzlibrary{arrows.meta, positioning, backgrounds, fit}
\usepackage[aboveskip=-4pt]{subcaption}
\usepgfplotslibrary{fillbetween}
\usepackage{amsthm}
\usepackage{mdframed}

\usepackage{tabularx}
\usepackage{array}
\newcolumntype{R}{>{\raggedleft\arraybackslash}X} 

\theoremstyle{definition}
\newtheorem{definition}{Definition}

\newcommand{\eat}[1]{} 

\newcommand{\stitle}[1]{\noindent{\bf #1.\/}}

\newtheorem{corollary}{Corollary}
\newtheorem{problem}{Problem}

\newtheorem{assumption}{Assumption}

\definecolor{hlbg}{RGB}{200,240,240}

\newcommand{\ourproblem}{constrained LLM selection for compound AI systems\xspace}
\newcommand{\ours}{\textsf{SCOPE}\xspace}
\newcommand{\gours}{\textsf{SCOPE-Coarse}\xspace}
\newcommand{\ggours}{\textsf{SCOPE-Rand}\xspace}
\newcommand{\warmstart}{\textsf{Calibrate}\xspace}

\newcommand{\abacus}{\textsf{Abacus}\xspace}

\newcommand{\BO}{\textsf{cEI}\xspace}
\newcommand{\config}{\textsf{CONFIG}\xspace}
\newcommand{\llambo}{\textsf{LLAMBO}\xspace}
\newcommand{\llmselect}{\textsf{LLMSelector}\xspace}

\newcommand{\rand}{\textsf{Random}\xspace}
\newcommand{\safeopt}{\textsf{SafeOpt}\xspace}

\newcommand{\budget}{\Lambda}

\newcommand{\bx}{\bm{x}}
\newcommand{\by}{\bm{y}}

\newcommand{\bI}{\bm{I}}

\newcommand{\bK}{\bm{K}}

\newcommand{\bX}{\bm{X}}

\newcommand{\cE}{\mathcal{E}}
\newcommand{\cF}{\mathcal{F}}

\newcommand{\cH}{\mathcal{H}}
\newcommand{\cI}{\mathcal{I}}
\newcommand{\cJ}{\mathcal{J}}

\newcommand{\cM}{\mathcal{M}}
\newcommand{\cN}{\mathcal{N}}

\newcommand{\cQ}{\mathcal{Q}}

\newcommand{\cS}{{\mathcal{S}}}
\newcommand{\cT}{{\mathcal{T}}}

\newcommand{\EE}{\mathbb{E}}

\newcommand{\NN}{\mathbb{N}}

\newcommand{\RR}{\mathbb{R}}

\newcommand{\btheta}{\bm{\theta}}

\newcommand{\argmin}{\mathop{\mathrm{argmin}}}

\newcommand{\KL}{{\rm KL}}

\newenvironment{customlegend}[1][]{%
    \begingroup
    \csname pgfplots@init@cleared@structures\endcsname
    \pgfplotsset{#1}%
}{%
    \csname pgfplots@createlegend\endcsname
    \endgroup
}%

\def\addlegendimage{\csname pgfplots@addlegendimage\endcsname}

\makeatletter
\newcommand\footnoteref[1]{\protected@xdef\@thefnmark{\ref{#1}}\@footnotemark}
\makeatother

\definecolor{RedPPAlt}{HTML}{DC143C}
\definecolor{Red}{HTML}{E81123}
\definecolor{Orange}{HTML}{FFB000}
\definecolor{Green}{HTML}{009E49}
\definecolor{LightBlue}{HTML}{00BCF2}
\definecolor{DeepBlue}{HTML}{001BA3}
\definecolor{Pink}{HTML}{7C3AED}
\definecolor{LightGreen}{HTML}{8BC63E}
\definecolor{Gray}{HTML}{6B7280}

\newcounter{myremark}
\renewcommand{\themyremark}{\arabic{myremark}}

\newtcolorbox{remark}[1][]{
    enhanced,
    breakable,
    before skip=10pt,
    after skip=10pt,
    colback=gray!5,
    colframe=navyblue!70!black,
    boxrule=0.3pt,
    leftrule=3pt,
    left=8pt,
    right=5pt,
    top=5pt,
    bottom=5pt,
    sharp corners,
    fonttitle=\bfseries,
    coltitle=black,
    title={Remark \refstepcounter{myremark}\themyremark #1},
    attach title to upper,
    after title={.\hspace{0.5em}},
}
\definecolor{navyblue}{RGB}{0, 0, 128}
\newcommand{\trref}[1]{Appendix~{#1}}

\AtBeginDocument{%
  }

\copyrightyear{2026}
\acmYear{2026}
\setcopyright{cc}
\setcctype{by}
\acmConference[KDD '26]{Proceedings of the 32nd ACM SIGKDD Conference on Knowledge Discovery and Data Mining V.2}{August 09--13, 2026}{Jeju Island, Republic of Korea}
\acmBooktitle{Proceedings of the 32nd ACM SIGKDD Conference on Knowledge Discovery and Data Mining V.2 (KDD '26), August 09--13, 2026, Jeju Island, Republic of Korea}
\acmDOI{10.1145/3770855.3818067}
\acmISBN{979-8-4007-2259-2/2026/08}

\begin{document}

\title{\textsc{SCOPE}: Cost-Efficient Model Selection for Compound AI Systems under Quality Constraints}
\subtitle{[Technical Report]}

\author{Yiqian Huang}
\orcid{0009-0001-5601-3439}
\affiliation{%
  \institution{National University of Singapore}
  \country{Singapore}
}
\email{yiqian@comp.nus.edu.sg}

\author{Shiqi Zhang}
\authornote{Corresponding author.}
\orcid{0000-0002-7155-9579}
\affiliation{%
  \institution{National University of Singapore}
  \country{Singapore}
}
\affiliation{%
  \institution{PyroWis AI}
  \country{Singapore}
}
\email{shiqi@pyrowis.ai}

\author{Tianyuan Jin}
\orcid{0009-0006-0266-5788}
\affiliation{%
  \institution{The Hong Kong University of Science and Technology (Guangzhou)}
    \country{China}
}
\email{tianyuan1044@gmail.com}

\author{Xiaokui Xiao}
\orcid{0000-0003-0914-4580}
\affiliation{%
  \institution{National University of Singapore}
    \country{Singapore}
}
\email{xkxiao@nus.edu.sg}

\begin{abstract}
A compound AI system consists of multiple LLM modules, together handling complex and multi-step tasks that exceed the capabilities of a single model. Existing systems often use a single expensive LLM across all modules to improve the result quality of the whole system. However, this configuration incurs prohibitive costs, particularly for data management and analytics tasks at scale, such as data manipulation.
To this end, we formalize the problem of \textit{constrained LLM selection for compound AI systems}, leveraging the diverse pricing and capabilities of different LLMs to achieve competitive quality at lower cost. Given a query dataset and a user-specified quality threshold, we aim to select an LLM for each module to minimize the system's average cost while ensuring that overall quality meets the required threshold.
To solve this problem, we propose \textsf{SCOPE}, a cost-efficient optimization algorithm. Unlike existing approaches that rely on expensive dataset-level evaluations, \textsf{SCOPE} exploits per-query results to rapidly estimate the system's cost and quality, and constructs confidence bounds to guide the search for promising LLM combinations. 
Furthermore, \textsf{SCOPE} provides theoretical guarantees for meeting the quality threshold and achieving near-optimal average cost.
We evaluate \textsf{SCOPE} against 7 baselines on three data processing tasks, demonstrating that it outperforms all baselines. Under the same search budget and quality constraint, it finds solutions with up to $20\times$ lower cost than the best competitor during the search and achieves up to $6\times$ lower final cost in the returned solution.
\end{abstract}

\maketitle
\newcommand{\ty}[1]{{\color{blue}[tianyuan: #1]}}
\section{Introduction}
A compound AI system is a multi-module system in which each module is implemented by a specific agent powered by a large language model (LLM). The system as a whole processes complex and multi-step tasks beyond the capability of any single model, thus finding applications in many scenarios, such as question answering~\cite{10.1145/3711896.3737012}, software development~\cite{hong2024metagpt}, and data processing~\cite{pourreza2023din, qian2024unidm}.
To ensure high result quality, existing compound AI systems typically use a state-of-the-art LLM for every module and emphasize frontier system design. However, more advanced LLMs usually incur higher inference costs. This monetary overhead becomes prohibitive for tasks that require digesting large volumes of data, which are common in the data management and analytics area~\cite{10.5555/3666122.3667957}. To exemplify, in data imputation, which fills missing entries in a table based on observed values, employing the flagship \texttt{GPT-5.2}~\cite{openai_gpt5.2_system_card_2025} across all modules of UniDM~\cite{qian2024unidm} costs around \$$180,000$ on a dataset with one million rows, each containing ten entities with rich semantic value spaces.

To mitigate this immense overhead, we formalize a new problem, {\it \ourproblem}, which shifts the focus from quality alone to both quality and cost. Given an $N$-module compound AI system, a candidate LLM set, a query dataset, and a quality threshold, this problem aims to assign a specific LLM to each module to minimize the system's average cost over the dataset while ensuring that the average quality meets the threshold. This formulation exploits the cost--quality Pareto frontier observed in modern LLMs~\cite{10.1145/3772429.3772445}, where differences in model pricing and capabilities create a rich search space for optimization. Importantly, \ourproblem admits solutions that achieve high quality at significantly lower cost, for example, by aligning model capabilities with the varying complexity of module-specific tasks.

Solving this problem remains challenging, since one must carefully balance cost and quality, using only limited observations of the system's performance on the given dataset. A natural approach is to treat the system as a black box and apply generic constrained optimization methods, such as Bayesian optimization~\cite{wang2025convergence,10.24963/ijcai.2023/486} and bandit algorithms~\cite{xu2023constrained,pmlr-v37-sui15}. Specifically, these solutions search over possible LLM assignments without knowing the system's internal structure, yet they fail to guarantee convergence to a near-optimal assignment that satisfies the quality constraint at minimum cost.
Moreover, they require a prohibitive number of trials, each of which executes the system on the \textit{entire dataset}, often incurring search costs that even exceed the potential savings.
Recently, compound-AI-specific methods~\cite{russo2025abacuscostbasedoptimizersemantic, chen2025optimizingmodelselectioncompound} attempt to accelerate the search by making strong structural assumptions. For instance, \abacus~\cite{russo2025abacuscostbasedoptimizersemantic} assumes independence between modules to estimate total cost and quality, while \llmselect~\cite{chen2025optimizingmodelselectioncompound} relies on monotonicity, assuming that upgrading a module's model always improves global quality. Unfortunately, these assumptions rarely hold in practice. Relying on ad-hoc heuristics without theoretical guarantees, these methods often fail to strictly satisfy the quality threshold.

To bridge this gap, we propose \ours, a cost-efficient optimization algorithm with rigorous theoretical guarantees. In contrast to existing methods that rely on coarse-grained dataset-level evaluations, \ours uses a query-level search strategy. Specifically, it treats individual query evaluations as atomic observations to iteratively tighten statistically valid confidence bounds on both cost and quality. This query-level granularity allows the algorithm to quickly prune unpromising LLM assignments without full-dataset evaluations, while the confidence bounds guide the search toward near-optimal solutions in a principled way.
For the underlying noisy constrained optimization problem, \ours guarantees that the returned solution simultaneously (i) satisfies the quality threshold with high probability and (ii) has a cost that converges to the optimum as the search budget increases.
In experiments, we extensively evaluate \ours against seven competitors across three distinct data processing tasks involving compound AI systems. Our evaluation spans a massive search space of tens of candidate models and millions of potential LLM assignments. Notably, \ours strictly satisfies the quality constraint while reducing average costs by over 20$\times$ compared to the best competing baseline. Even at the maximum budget, it achieves a 6$\times$ cost reduction. Furthermore, on test-time datasets, \ours demonstrates robust generalization, attaining an average cost of 2--5\% of a high-quality reference configuration while maintaining competitive average quality.

To summarize, we make the following contributions in this work:
\begin{itemize}[topsep=2pt,itemsep=1pt,parsep=0pt,partopsep=0pt,leftmargin=11pt]
\item We formalize \ourproblem as a constrained optimization problem (Section~\ref{sec:prelim}).
\item We propose \ours, a cost-efficient optimization algorithm that reduces the search cost (Section~\ref{sec:scope-general}).
\item We provide theoretical guarantees on both the cost and the quality of the solution returned by \ours (Section~\ref{sec:theory}).
\item We conduct experiments showing that \ours outperforms existing methods (Section~\ref{sec:exp}).
\end{itemize}

\section{Preliminaries}\label{sec:prelim}

\subsection{Problem Formulation}

\stitle{Compound AI system}
Consider a query dataset $\mathcal{Q}$ (where $|\mathcal{Q}|=Q$), a set of candidate LLMs $\cM$ (where $|\cM|=M$), and a compound AI system consisting of $N$ modules, each integrated with an LLM. A configuration is defined as a vector $\btheta = [\theta_1, \dots, \theta_N]$, where each $\theta_i \in \cM$ denotes the LLM selected for the $i$-th module. The configuration space is denoted by $\Theta = \cM^N$, representing the set of all possible configurations. For a configuration $\btheta \in \Theta$ and a query $q \in \mathcal{Q}$, system execution is measured by two metrics:
\begin{itemize}[topsep=2pt,itemsep=1pt,parsep=0pt,partopsep=0pt,leftmargin=11pt]
\item $\ell_s(\btheta, q) \in [0,1]$, the expected quality (e.g., accuracy) of the system's output on query $q$ under configuration $\btheta$;
\item $\ell_c(\btheta, q) \in [C_{\min}, C_{\max}]$, the corresponding expected monetary cost, bounded by known limits $C_{\min}$ and $C_{\max}$.
\end{itemize}
We focus on the {\it average quality} and {\it average cost} over the dataset $\mathcal{Q}$, which are defined, respectively, as
\begin{equation*}\label{equ:sum-cost-perf}
c(\btheta) = \frac{1}{Q}\sum_{q\in \mathcal Q}\ell_c(\btheta,q), \quad s(\btheta) = \frac{1}{Q}\sum_{q\in \mathcal Q}\ell_s(\btheta,q).
\end{equation*}

In practice, users typically begin with a {\it reference configuration} $\btheta_0$ (e.g., employing a powerful LLM for all modules), which yields high quality $s(\btheta_0)$ but incurs a high cost. Given $\epsilon \in (0,1)$ specifying the maximum allowable quality degradation, we define the {\it quality threshold} as $s_0 = (1-\epsilon)\cdot s(\btheta_0)$. Let $g(\btheta)=s_0-s(\btheta)$ and $\ell_g(\btheta, q)=s_0-\ell_s(\btheta, q)$ for each $q\in \cQ$.
We define \ourproblem as follows.

\begin{problem}[Constrained LLM Selection for Compound AI Systems]\label{problem-main}
Given an $N$-module compound AI system, a dataset $\mathcal Q$, a configuration space $\Theta=\cM^N$, and a reference configuration $\btheta_0 \in \Theta$ with known quality $s(\btheta_0)$, the objective is to find a configuration that minimizes the average cost while satisfying the quality constraint:
\begin{align}\label{eq:def:theta*}
    \btheta^\star \in \argmin_{\btheta \in \Theta: ~g(\btheta) \leq 0} c(\btheta).
\end{align}
\end{problem}

A configuration $\btheta \in \Theta$ is called {\it feasible} if it satisfies $g(\btheta) \leq 0$. By definition, $\btheta^\star$ exists since at least one configuration $\btheta_0$ is feasible.

\stitle{Notations}
Throughout this paper, we denote $[n]=\{1,\dots, n\}$ and $\NN=\{0, 1,\dots\}$. A boldface lowercase letter $\bx$ represents a vector, and a boldface uppercase letter $\bX$ represents a matrix, where $(\bX)_{a,b}$ denotes the value in its $a$-th row and $b$-th column. An algorithm for Problem~\ref{problem-main} runs over discrete time steps $t \in \mathbb{N}$. At each time $t$, it selects a pair $(\btheta_t, q_t) \in \Theta\times \cQ$ to evaluate (hereafter, to {\it observe}) and receives a pair of noisy observations of $\ell_c(\btheta_t, q_t)$ and $\ell_g(\btheta_t, q_t)$, denoted by $(y_{c,t}, y_{g,t})$.
Table~\ref{tab:notation} summarizes frequently used notations.

\begin{table}[!t]
\centering
\renewcommand{\arraystretch}{1.1}
\begin{small}
\caption{Frequently used notations.}\vspace{-2.4mm} \label{tab:notation}
\begin{tabular}{rp{2.2in}}	
    \toprule
    \bf Notation & \bf Description \\
    \midrule
    $N$   &  The number of modules in the system.\\
    $\cM, M$   &  The candidate LLM set with $|\cM|=M$.\\
    $\mathcal{Q}, Q$   &  The query dataset with $|\mathcal{Q}|=Q$.\\
    $\btheta, \Theta$   &  A configuration $\btheta=[\theta_1,\dots,\theta_N]$ from $\Theta=\cM^N$.\\
    $\ell_s(\btheta,q)$   &  Query-wise quality of $\btheta$ and $q$.\\
    $\ell_c(\btheta,q)$   &  Query-wise cost of $\btheta$ and $q$.\\
    $s(\btheta), c(\btheta)$   &  Average quality and average cost of $\btheta$.\\
    $g(\btheta)$   &  Constraint function $(=s_0-s(\btheta))$.\\
    $\ell_g(\btheta, q)$   &  Query-wise constraint function $(=s_0-\ell_s(\btheta, q))$.\\
    $y_{c,t}, y_{g,t}$ & The observations made by an algorithm at time $t$.\\
    
    \bottomrule
\end{tabular}
\end{small}
\end{table}

\subsection{Existing Solutions}\label{sec:generic}
Problem~\ref{problem-main} is challenging because both $c$ and $g$ are expensive black-box functions that do not admit closed-form expressions.
A natural baseline is to cast it as a generic constrained optimization problem, where one has access only to \textit{noisy observations} (i.e., values with random noise) of $c(\btheta)$ and $g(\btheta)$. Under this framing, one could apply sequential search methods for constrained optimization, such as constrained Bayesian optimization, among others~\cite{xu2023constrained,pmlr-v37-sui15,wang2025convergence,10.24963/ijcai.2023/486,10.5555/3600270.3600272}.

\stitle{Generic optimization approaches}
Generic optimization methods, such as Bayesian optimization (BO)~\cite{wang2025convergence,10.24963/ijcai.2023/486} and bandit algorithms~\cite{xu2023constrained,pmlr-v37-sui15,10.5555/3600270.3600272}, have been used to tune compound AI systems for cost or quality~\cite{opsahl-ong-etal-2024-optimizing,khattab2024dspy, he2025cognify}, and can be rigorously extended to handle the quality constraint. In the unconstrained setting, these methods come with theoretical guarantees in terms of \textit{regret}, which quantify how quickly the returned configuration approaches the optimal configuration.
Such guarantees typically rely on standard assumptions: informally, \textit{similar configurations yield similar function values}, so that past evaluations can generalize to unobserved configurations.

\stitle{Compound-AI-specific approaches}
Other recent methods aim to improve empirical \eat{efficiency}search efficiency by making extra assumptions about the compound AI system, rather than treating it as a pure black box.
For example, \abacus~\cite{russo2025abacuscostbasedoptimizersemantic}, the optimizer used by Palimpzest~\cite{palimpzestCIDR}, estimates $c(\cdot)$ and $s(\cdot)$ by assuming independence across modules and employing additive or multiplicative estimators.
\llmselect~\cite{chen2025optimizingmodelselectioncompound} assumes that quality is monotonically non-decreasing when upgrading LLMs according to a fixed ranking, and accordingly proposes a round-robin procedure that updates one module at a time. \llambo~\cite{liu2024large} treats the LLM itself as an estimation model, using its internal knowledge and the observation history to predict the cost and quality of unseen configurations.

\stitle{Limitations}
Existing approaches have two main limitations in the noisy constrained setting~\cite{xu2023constrained, wang2025convergence}.  First, it is hard for them to guarantee \textit{correctness} and \textit{effectiveness} at the same time. 
Here, correctness means the returned solution $\btheta_{\rm out}$ is feasible, i.e., $g(\btheta_{\rm out})\le 0$, and effectiveness means its cost is close to the best feasible value, i.e.,
$c(\btheta_{\rm out}) \approx c(\btheta^\star)$.
For instance, methods such as \config~\cite{xu2023constrained} prioritize effectiveness but risk violating correctness, while \safeopt~\cite{pmlr-v37-sui15} ensures correctness but often converges to suboptimal solutions. While recent work on {\it constrained expected improvement} (\BO)~\cite{wang2025convergence}, a variant of constrained BO, theoretically addresses both, it relies on a \textit{noiseless} assumption (i.e., without random noise) that does not hold for stochastic LLM outputs.
Second, existing approaches remain {\it cost-inefficient} as they do not fully exploit query-level signals $\ell_c$ and $\ell_g$. By treating average cost and average quality as pure black-box functions, these methods must evaluate configurations on the entire dataset $\cQ$ to obtain an observed signal for average cost or quality. As shown in Section~\ref{sec:exp}, fully evaluating an infeasible configuration with $Q=500$ can cost around $\$12$, whereas analyzing just a few queries using $\ell_c$ and $\ell_g$ could reveal infeasibility immediately. Yet, it remains an open problem to exploit query-level signals without compromising theoretical guarantees. These limitations motivate a principled solution that addresses Problem~\ref{problem-main} by integrating rigorous guarantees with a cost-efficient, query-level evaluation scheme.

\section{\ours}\label{sec:scope-general}
We propose \textbf{S}equential \textbf{C}onfidence-bound-based \textbf{O}ptimization via \textbf{P}artial \textbf{E}valuation (\ours), which solves Problem~\ref{problem-main} with guarantees of correctness and effectiveness. When $Q=1$, \ours also applies to generic noisy constrained optimization.

\begin{algorithm}[t]
\caption{\ours $(\budget, \delta, \alpha, \theta_{\rm base})$}
\label{alg:scope-main}
\KwIn{Budget $\budget > 0$, failure probability $\delta \in (0,1)$, parameter $\alpha\in (0, 1/2)$, base model $\theta_{\rm base}\in \cM$}
\KwOut{Returned solution $\btheta_{\mathrm{out}}$, stopping time $\tau$}
$t_0, \{\btheta_i, q_i, y_{c,i}, y_{g,i}\}_{i=1}^{t_0}\gets{\warmstart}(\theta_{\rm base})$ (Algorithm~\ref{alg:bound-calibration})\label{alg:scope-main:init0}\;
Define $L_{\zeta, t}, U_{\zeta, t}\gets$~\eqref{eq:lcb-ours} for $t\in \mathbb{N}, \zeta\in\{c, g\}$\label{alg:scope-main:rules}\;
$t\gets t_0,~\btheta_{\rm out}\gets \btheta_0,~U_{\rm out}\gets U_{c,t_0}(\btheta_0)$\label{alg:scope-main:init1}\;
\For{$i=1,2,\dots$\label{alg:scope-main:while}}{
${\displaystyle \btheta_{\text{cand}} \gets \argmin_{\btheta\in\Theta: L_{g,t}(\btheta)\leq -i^{-\alpha}} L_{c,t}(\btheta)}$\;\label{alg:scope-main:select}
Sort $\cQ$ as $\{q^{(1)}, \dots, q^{(Q)}\}$ by decreasing $\varphi_i(q)$ in~\eqref{eq:uncertainty-score}\label{alg:scope-main:sort}\;
\For{$j=1,\dots, Q$\label{alg:scope-main:for}}{
    $t\gets t+1$\;
    Observe at $(\btheta_{\text{cand}}, q^{(j)})$ to obtain $(y_{c,t}, y_{g, t})$\label{alg:scope-main:observe}\;
\If{$U_{c,t}(\btheta_{\rm cand}) \le U_{\rm out}$ $\mathbf{and}$ $\min\{U_{g,t}(\btheta_{\rm cand}), U_{g,t-1}(\btheta_{\rm cand})\} \le 0$\label{alg:scope-main:update}}{
    $U_{\rm out}\gets U_{c,t}(\btheta_{\rm cand})$\;
    $\btheta_{\rm out}\gets \btheta_{\rm cand}$\;
}
\lIf{$\sum_{t'=1}^{t}y_{c,t'}>\budget$\label{alg:scope-main:stop}}{\textbf{break all loops}}
\lIf{$L_{g,t}(\btheta_{\text{cand}}) > 0~\mathbf{or}$
     $L_{c,t}(\btheta_{\text{cand}}) > U_{\rm out}$\label{alg:scope-main:break}}{\textbf{break the $j$-loop}}
}
}
\Return{$(\btheta_{\mathrm{out}}, \tau\gets t)$}

\end{algorithm}

\subsection{Main Procedure}

At the core of \ours are carefully designed {\it confidence bounds} for the cost $c(\cdot)$ and the constraint $g(\cdot)$. Given a failure probability $\delta\in(0,1)$, for time $t\in\NN$, configuration $\btheta\in\Theta$, and function $\zeta\in\{c,g\}$, \ours maintains lower and upper bounds $L_{\zeta,t}(\btheta)$ and $U_{\zeta,t}(\btheta)$ such that they enclose the true function values with probability at least $1-\delta$: 
\begin{equation}\label{eq:lcb-ucb-property}
\Pr\left[\forall \zeta, t, \btheta:\
L_{\zeta,t}(\btheta) \le \zeta(\btheta) \le U_{\zeta,t}(\btheta)
\right] \ge 1-\delta.
\end{equation}
Building on these bounds, \ours proceeds in iterations. In each iteration $i\in\{1,2,\dots\}$, \ours selects a candidate configuration $\btheta_{\rm cand}$ that is promising under the current bounds and then evaluates $\btheta_{\rm cand}$ sequentially on queries in $\cQ$ until the search budget is exhausted. Notably, the guarantees of \ours are ensured by two key mechanisms.
\begin{enumerate}
[topsep=2pt,itemsep=1pt,parsep=0pt,partopsep=0pt,leftmargin=14pt]
    \item \textbf{Candidate selection:} In iteration $i$, it selects $\btheta_{\rm cand}$ using a dynamic constraint $L_{g,t}(\btheta)\le -i^{-\alpha}$, where $\alpha\in(0,1/2)$ is a fixed parameter, to filter out candidates that are likely infeasible. Among the remaining candidates, $\btheta_{\rm cand}$ is chosen greedily by minimizing the cost lower bound $L_{c,t}(\btheta)$.
    \item \textbf{Sequential query evaluation:} After selecting $\btheta_{\rm cand}$, \ours evaluates it on queries $q^{(1)},\dots,q^{(Q)}\in\cQ$ in decreasing order of a score $\varphi_i(q)$. This score is defined to quantify how much information we gain from evaluating $q$ and, in turn, to tighten the bounds for both $c$ and $g$ using as few observations as possible.
\end{enumerate}
In addition, the cost efficiency of \ours hinges on two further design choices that exploit query-level signals.
\begin{enumerate}
[start=3,topsep=2pt,itemsep=1pt,parsep=0pt,partopsep=0pt,leftmargin=14pt]
\item \textbf{Bound calibration:} During initialization, \ours invokes a subroutine, \warmstart, which calibrates accurate confidence bounds within a small initial budget.
\item \textbf{Pruning mechanism:} \ours incorporates a pruning mechanism enabled by the sequential, query-level evaluation. As the bounds are updated, it stops evaluating $\btheta_{\rm cand}$ early once $\btheta_{\rm cand}$ appears infeasible (i.e., $L_{g,t}(\btheta_{\rm cand})>0$) or cannot improve on the current best solution in terms of cost.
\end{enumerate}

Algorithm~\ref{alg:scope-main} gives the pseudocode of \ours; we defer the description of \warmstart to Section~\ref{sec:calibration} and defer the definition of $L_{\zeta,t}$, $U_{\zeta,t}$, and $\varphi_i$ to Section~\ref{sec:confidence-bounds}.
It takes four additional inputs: a {search budget} $\budget>0$ which limits the cumulative monetary cost of observing $y_{c,t}$ over time, a failure probability $\delta\in(0,1)$ used in the confidence bounds, a trade-off parameter $\alpha\in(0,1/2)$, and a {base model} $\theta_{\rm base}\in\cM$, which represents a user-chosen cost-efficient LLM used by the \warmstart subroutine.
In Line~\ref{alg:scope-main:init0}, the algorithm first invokes \warmstart with parameter $\theta_{\rm base}$, which performs $t_0$ initial observations, and sets $t\gets t_0$. In Lines~\ref{alg:scope-main:rules}--\ref{alg:scope-main:init1}, the algorithm initializes the confidence bounds $L_{\zeta,t}$ and $U_{\zeta,t}$, the current best configuration $\btheta_{\mathrm{out}} \gets \btheta_0$, and the current best upper bound $U_{\rm out}$.
After initialization, at each outer loop $i$ (Line~\ref{alg:scope-main:while}), \ours selects a configuration $\btheta_{\mathrm{cand}}$ according to the selection criterion in Line~\ref{alg:scope-main:select}. 
Subsequently, the queries in $\cQ$ are ordered as $q^{(1)},\dots,q^{(Q)}$ in decreasing order of the query-wise surrogate uncertainty $\varphi_i(q^{(j)})$ (Line~\ref{alg:scope-main:sort}), and are sequentially evaluated with $\btheta_{\rm cand}$ in the inner $j$-loop (Line~\ref{alg:scope-main:for}). At each inner iteration $j$, the algorithm increments the time $t$ by one and observes $(\btheta_{\rm cand}, q^{(j)})$ to obtain $y_{c,t}$ and $y_{g,t}$ (Line~\ref{alg:scope-main:observe}), which in turn refines the confidence bounds at time $t$. In Line~\ref{alg:scope-main:update}, if $U_{c,t}(\btheta_{\rm cand})\le U_{\rm out}$ and $\min\{U_{g,t}(\btheta_{\rm cand}), U_{g,t-1}(\btheta_{\rm cand})\}\le 0$ hold (i.e., $\btheta_{\rm cand}$ is certified feasible with high probability at either time), the algorithm updates $U_{\rm out}$ and $\btheta_{\rm out}$. Additionally, if the search budget is exhausted, the algorithm breaks out of all loops and terminates (Line~\ref{alg:scope-main:stop}). Otherwise, $\btheta_{\rm cand}$ is subject to early pruning where \ours breaks the $j$-loop (Line~\ref{alg:scope-main:break}) if $L_{g,t}(\btheta_{\rm cand}) > 0$ or $L_{c,t}(\btheta_{\rm cand}) > U_{\rm out}$.
Finally, \ours returns $(\btheta_{\rm out},\tau)$ where $\tau=t$ represents the stopping time after which it makes no further observations.

\subsection{Bound Calibration}\label{sec:calibration}

\begin{algorithm}[t]
\caption{\warmstart$(\theta_{\rm base})$}
\label{alg:bound-calibration}
\KwIn{Base model $\theta_{\rm base}\in\cM$}
\KwOut{Time step $t_0$, history $\{(\btheta_t,q_t,y_{c,t},y_{g,t})\}_{t=1}^{t_0}$}

$t_0 \gets 0,~\Theta_{\text{init}} \gets$~\eqref{eq:theta-1-def}, $\cQ_{0}\gets \emptyset$\label{alg:scope-main:bc-sample-theta}\;
Sort $\cQ$ as $\{q^{(1)},\dots,q^{(Q)}\}$ in a random order\;
\For{$j=1,\dots,\lceil\log_2 Q\rceil+1$}{\label{alg:scope-main:bc-for}
    $\cQ_{j} \gets\{q^{(1)},\dots, q^{(\min\{2^{j-1},Q\})}\}$\;\label{alg:scope-main:bc-qprefix}
    \ForEach{$\btheta, q \in \Theta_{\text{init}}\times (\cQ_{j}\setminus \cQ_{j-1})$}{\label{alg:scope-main:bc-foreach}
            $t_0\gets t_0+1$\;
            Observe at $(\btheta,q)$ to obtain $(y_{c,t_0}, y_{g,t_0})$\;
    }
    $\Theta_{\text{init}} \gets$ top-$\lceil{|\Theta_{\text{init}}|}/{2}\rceil$ configurations in $\Theta_{\text{init}}$
    ranked by $S(\btheta)$, where $S(\btheta)=-\sum_{t\in[t_0]:\,\btheta_t=\btheta,~q_t\in \cQ_{j}} y_{g,t}$\;\label{alg:scope-main:bc-halving}
}\label{alg:scope-main:bc-end}
\Return{$(t_0,\;\{(\btheta_t,q_t,y_{c,t},y_{g,t})\}_{t=1}^{t_0})$}
\end{algorithm}

To identify the rough scale of confidence bounds for $c(\cdot)$ and $g(\cdot)$, \ours invokes a heuristically designed \warmstart subroutine. \warmstart proceeds in a hybrid manner: it first collects observations from a large, diverse pool of configurations using a single query, and then iteratively doubles the size of the query set while halving the configuration pool. As a result, higher-quality configurations receive additional evaluation, until a single remaining configuration in the pool is evaluated on the entire query set.

This design is motivated by three empirical observations. First, cost differences are mainly driven by the selected LLMs rather than by which query is executed, since model prices often differ by orders of magnitude. As a result, evaluating many diverse configurations on only a few queries provides early estimates of the cost scale across configurations.
Second, evaluating a high-potential configuration on only a few queries is often insufficient, because the query-wise quality $\ell_s(\btheta,q)$ can vary significantly across queries $q\in\cQ$. Given a limited search budget $\budget$, we increase the number of sampled queries only for configurations that appear high-quality, to obtain a reliable estimate of their average quality. Finally, it is often beneficial to choose a lightweight model $\theta_{\rm base}$ and focus on configurations related to this model, thereby saving search cost by avoiding evaluations of configurations that include many expensive models.

The detailed pseudocode is shown in Algorithm~\ref{alg:bound-calibration}, which takes as input the base model $\theta_{\rm base}\in\cM$. The subroutine first constructs an initial pool $\Theta_{\text{init}}\subseteq\Theta$ as
\begin{equation}\label{eq:theta-1-def}
\textstyle
\Theta_{\rm init}
=\big\{[\theta_1,\dots,\theta_N]\in\Theta:\ 
\sum_{i=1}^{N}\bm 1\{\theta_i\neq \theta_{\text{base}}\}\le 1
\big\},
\end{equation}
and randomly shuffles $\cQ$ as $\{q^{(1)},\dots,q^{(Q)}\}$. Then, it runs for $\lceil\log_2(Q+1)\rceil$ rounds. In round $j$, it evaluates each remaining configuration in $\Theta_{\text{init}}$ on additional queries in $\cQ_j\setminus \cQ_{j-1}$, where $\cQ_j=\{q^{(1)},\dots,q^{(\min\{2^{j-1},Q\})}\}$. Afterward, it retains the top half of configurations ranked by their cumulative observed quality over $\cQ_j$ and discards the rest. By the final round, the pool is reduced to a single configuration, which has been evaluated on all $Q$ queries.

\subsection{Confidence Bounds}\label{sec:confidence-bounds}
For each $\zeta\in\{c,g\}$, $t\in \NN$, and $\btheta\in\Theta$, the confidence bounds in Line~\ref{alg:scope-main:rules} are of the following form:
\begin{equation}\label{eq:lcb-ours}
\begin{aligned}
L_{\zeta,t}(\btheta)&= \bar\mu_{\zeta,t}(\btheta) - \beta_{\zeta,t}\,\bar\sigma_{\zeta,t}(\btheta),\\
U_{\zeta,t}(\btheta)&= \bar\mu_{\zeta,t}(\btheta) + \beta_{\zeta,t}\,\bar\sigma_{\zeta,t}(\btheta).
\end{aligned}
\end{equation}
Here, $\bar\mu_{\zeta,t}(\btheta)$ is the {\it surrogate mean} estimating $\zeta(\btheta)$, $\bar\sigma_{\zeta,t}(\btheta)$ is the {\it surrogate standard deviation} representing the uncertainty in this estimate, and the coefficient $\beta_{\zeta,t}$ scales this uncertainty to ensure valid confidence bounds.
To instantiate \eqref{eq:lcb-ours}, we fix hyperparameters $R_c, R_g \ge 0$, $B_c, B_g > 0$, and a symmetric positive definite kernel (hereafter, {\it SPD kernel}) function $k:\Theta\times\Theta\to\mathbb{R}$ with $k(\btheta,\btheta)=1$ for all $\btheta\in\Theta$. Note that these hyperparameters are used in Assumptions~\ref{ass:observation}--\ref{ass:rkhs} (see Section~\ref{sec:assumptions}), under which we derive the guarantees.
$\bar\mu_{\zeta,t}(\btheta)$ and $\bar\sigma_{\zeta,t}(\btheta)$ are defined based on standard zero-mean Gaussian process (GP) regression with kernel $k$, as described below.

\begin{definition}[Zero-mean GP regression~\cite{rasmussen2003gaussian}]\label{def:gp}
Fix a kernel $k:\Theta\times \Theta\rightarrow\RR$ and a constant $\lambda > 0$. Given $J$ observations $\{(\btheta^{(i)},y^{(i)})\}_{i=1}^{J}$ in an arbitrary order, where $y^{(i)}$ is obtained by observing a particular function $f:\Theta\rightarrow \RR$ at $\btheta^{(i)}\in\Theta$, define
\begin{align*}
&\by=[y^{(1)}, \dots, y^{(J)}]^\top, \quad \bx=[\btheta^{(1)},\dots,\btheta^{(J)}],\\
&\bm k_{\bx}(\btheta)=[k(\btheta,\btheta^{(1)}),\dots,k(\btheta,\btheta^{(J)})]^\top,\\
&(\bm{K}_{\bx})_{a,b} = k(\btheta^{(a)}, \btheta^{(b)}), \quad a,b\in [J].
\end{align*}
Then, for any $\btheta\in \Theta$, the zero-mean GP regression defines
\begin{equation}\label{equ:gp-regression}
\begin{aligned}
 \hat\mu_{\bx,\by}(\btheta)
&= \bm k_{\bx}(\btheta)^\top(\bm K_{\bx}+\lambda \bm I)^{-1}\by,\\
(\hat\sigma_{\bx,\by}(\btheta))^2
&= k(\btheta,\btheta)-\bm k_{\bx}(\btheta)^\top(\bm K_{\bx}+\lambda \bm I)^{-1}\bm k_{\bx}(\btheta),
\end{aligned}
\end{equation}
In particular, if $\by=\emptyset$, then $\hat\mu_{\bx,\by}(\btheta)=0$ and $\hat\sigma_{\bx,\by}(\btheta)^2=k(\btheta,\btheta)$.
\end{definition}

For each $t\in \NN$ under \ours, define $\cJ_{q,t}=\{j\in[t]: q_j=q\}$, ordered as $j_q(1)<\cdots<j_q({|\cJ_{q,t}|})$, and denote
\begin{equation}\label{eq:j_max}
J_{q, t}=|\cJ_{q,t}|, \qquad J_{\max,t}\ =\ \max_{q\in\cQ}J_{q,t}.
\end{equation}
Let $\bx_{q,t}=[\btheta_{j_q(1)}, \dots, \btheta_{j_q({J_{q,t}})}], \;\by_{\zeta,q,t}=[y_{\zeta,j_q(1)},\dots,y_{\zeta,j_q({J_{q,t}})}]^\top$ if $J_{q,t}>0$, otherwise $\bx_{q,t}=\by_{\zeta,q,t}=\emptyset$. We then define the surrogate mean and standard deviation in \eqref{eq:lcb-ours} as
\begin{equation}\label{eq:scope-mean-variance}
\bar\mu_{\zeta,t}(\btheta)=\sum_{q\in\mathcal{Q}}\frac{\hat\mu_{\bx_{q,t},{\by}_{\zeta,q,t}}(\btheta)}{Q},\quad
\bar\sigma_{\zeta,t}(\btheta)^2=\sum_{q\in\mathcal{Q}}\left(\frac{\hat\sigma_{\bx_{q,t},{\by}_{\zeta,q,t}}(\btheta)}{Q}\right)^2,
\end{equation}
where we apply Definition~\ref{def:gp} to $f=\ell_\zeta(\cdot, q)$ with $\bx_{q,t}$ and $\by_{\zeta,q,t}$ for each $q\in \cQ$, and set $\lambda=\max\{R_{c}^2, R_{g}^2, 10^{-9}\}$ in Definition~\ref{def:gp} for simplicity.
The coefficient $\beta_{\zeta,t}$ is based on the {\it maximum information gain}~\cite{srinivas2009gaussian}, defined as
\begin{equation*}
\textstyle
\gamma(J) =
\max_{A\subseteq \Theta:\ |A|\leq J} \frac{1}{2}\log\det(\bI+\lambda^{-1}\bK_A),
\end{equation*}
where $\bK_A\in\RR^{J'\times J'}$ $(J'\leq J)$ is defined by $(\bK_A)_{i,j}=k(\btheta^{(i)},\btheta^{(j)})$, with $A=\{\btheta^{(1)},\dots,\btheta^{(J')}\}\subseteq \Theta$ in any order.
Fix $\delta\in (0,1)$.
Given $\gamma(J)$ and $\lambda$ above, we set $\beta_{\zeta,t}$ as
\begin{equation}\label{eq:beta-def}
\beta_{\zeta,t}\;=\;
\sqrt{Q}\cdot\left(B_\zeta
\;+\;
\frac{R_\zeta}{\sqrt\lambda}\sqrt{2\Big(\gamma(J_{\max,t})+\log\frac{2Q}{\delta}\Big)}\right),
\end{equation}
where $J_{\max,t}$ is defined in~\eqref{eq:j_max}.
Furthermore, in Line~\ref{alg:scope-main:sort} in iteration $i$ at time $t$, we define
\begin{equation}\label{eq:uncertainty-score}
\varphi_{i}(q)=\hat\sigma_{\bx_{q,t},{\by}_{c,q,t}}(\btheta_{\rm cand}),
\end{equation}
where $\hat\sigma_{\bx_{q,t},{\by}_{c,q,t}}$ is defined in~\eqref{eq:scope-mean-variance}. By definition, one can verify that it also equals $\hat\sigma_{\bx_{q,t},\by_{g,q,t}}$.

\section{Theoretical Analysis}\label{sec:theory}

\subsection{Assumptions}\label{sec:assumptions}
We state the theoretical assumptions under which \ours's guarantees are derived. They extend the standard {\it frequentist setting} for generic constrained optimization~\cite{JMLR:v12:bull11a, chowdhury2017kernelized, srinivas2009gaussian} to $Q \ge 1$. When $Q=1$, they reduce to the standard ones in~\cite{srinivas2009gaussian}. These assumptions are parameterized by $R_c, R_g \ge 0$, $B_c, B_g > 0$, and the SPD kernel $k$ with $k(\btheta,\btheta)=1$ for all $\btheta\in\Theta$.

\begin{assumption}[Sub-Gaussian noise]\label{ass:observation}
The observations satisfy
\begin{equation}\label{eq:sub-gaussian-noise}
\forall t\in \NN, \;\;
y_{c, t} = \ell_c(\btheta_t, q_t) + \eta_{c,t},
\quad
y_{g, t} = \ell_g(\btheta_t, q_t) + \eta_{g,t},
\end{equation}
where $\eta_{c,t}$ and $\eta_{g,t}$ are zero-mean random variables. For each $\zeta\in\{c,g\}$, the sequence $\{\eta_{\zeta,t}\}_{t=1}^{\infty}$ is {conditionally $R_\zeta$-sub-Gaussian}, i.e., 
$$
\forall t\in \NN, \forall u\in\RR,~\EE[\exp{(u\cdot\eta_{\zeta, t})} | \cF_{t-1}]\leq \exp{\big({(u^2 R_\zeta^2)} / 2\big)},
$$
where $\cF_{t-1}$ is the $\sigma$-algebra generated by $\{\btheta_i, q_i, y_{c,i},y_{g,i}\}_{i=1}^{t-1}$.
\end{assumption}

\begin{assumption}[Boundedness in RKHS]\label{ass:rkhs}
Let $\cH_k$ be the \emph{reproducing kernel Hilbert space} (RKHS) induced by $k$, with norm $\|\cdot\|_{\cH_k}$.
\footnote{Concretely, RKHS $\cH_k$ is a collection of functions equipped with an inner product $\langle\cdot,\cdot\rangle_{\cH_k}$ such that for every $\btheta\in\Theta$ and $f\in\cH_k$, it holds that $k(\btheta,\cdot)\in\cH_k$ and $\langle f, k(\btheta,\cdot)\rangle_{\cH_k}=f(\btheta)$. The norm is then $\|f\|_{\cH_k}=\sqrt{\langle f,f\rangle_{\cH_k}}$.} We assume for each $q\in \cQ$, $\|\ell_c(\cdot, q)\|_{\cH_k}\leq B_c$ and $\|\ell_g(\cdot, q)\|_{\cH_k}\leq B_g$.
\end{assumption}

\stitle{Interpretation}
Regarding Assumption~\ref{ass:observation}, upper bounds for $R_c$ and $R_g$ are often available in practice. For example, if system evaluations satisfy $y_{c,t}\in [C_{\min}, C_{\max}]$ and $y_{g,t}\in [s_0-1, s_0]$, a sufficient choice is $R_c = (C_{\max}-C_{\min})/2$ and $R_g = 1/2$. If evaluations are noiseless, it suffices to set any $R_c=R_g \geq 0$. In Assumption~\ref{ass:rkhs}, $B_{\zeta}$ bounds how similar $\ell_\zeta(\btheta, q)$ and $\ell_\zeta(\btheta', q)$ are for every $q\in\cQ$, $\zeta\in\{c,g\}$, and $\btheta, \btheta'\in\Theta$. In particular, one can infer the following property:
\[
|\ell_\zeta(\btheta, q)-\ell_\zeta(\btheta', q)|\leq B_\zeta \sqrt{2-2k(\btheta, \btheta')}.
\]
Since $\Theta$ is finite in Problem~\ref{problem-main}, for any SPD kernel $k$ there exist $B_c, B_g < \infty$ such that Assumption~\ref{ass:rkhs} holds. In implementation, this allows us to adopt a commonly used kernel and treat $B_c$ and $B_g$ as hyperparameters, following prior work~\cite{xu2023constrained, srinivas2009gaussian}.

\subsection{Guarantees}
Recall that $\tau$ is the stopping time after which the algorithm makes no further observations, and $\btheta_{\rm out}$ is the returned configuration. We say the algorithm is \textit{$\delta$-correct} if $\Pr[g(\btheta_{\rm out})\le 0]\ge 1-\delta$, where the randomness is over the observations. We measure effectiveness by \textit{simple regret}, defined as $\mathrm{SR}(\tau)=c(\btheta_{\rm out})-c(\btheta^\star)$, where $c(\btheta^\star)$ is the optimal cost defined in~\eqref{eq:def:theta*}.
The following theorem shows that the confidence bounds hold with high probability. {\it All proofs are provided in \trref{\ref{appendix:proof}}.}

\begin{theorem}\label{cla:bounds}
Fix $\delta\in (0,1)$. For any $t\in\NN$, $\zeta\in\{c,g\}$, and $\lambda>0$ which is used in Definition~\ref{def:gp}, let the confidence bounds in~\eqref{eq:lcb-ours} be defined using~\eqref{eq:beta-def} and \eqref{eq:scope-mean-variance}.
Under Assumptions~\ref{ass:observation}--\ref{ass:rkhs}, the resulting bounds satisfy the high-probability event in~\eqref{eq:lcb-ucb-property}.
\end{theorem}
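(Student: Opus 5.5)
Our plan is to reduce the claim to the standard per-function self-normalized concentration bound for kernelized bandits (Chowdhury and Gopalan, 2017, Theorem 2, in the form also used by Srinivas et al.), applied separately to each query $q\in\cQ$ and each $\zeta\in\{c,g\}$, and then aggregate across queries by a union bound and the triangle inequality. For fixed $q$ and $\zeta$, consider the subsequence of time steps $\cJ_{q,t}$ at which $q$ was observed. Along this subsequence, the observations are $y_{\zeta,j}=\ell_\zeta(\btheta_j,q)+\eta_{\zeta,j}$ with $\|\ell_\zeta(\cdot,q)\|_{\cH_k}\le B_\zeta$ by Assumption~\ref{ass:rkhs}. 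The first step is to check that the noise, reindexed by the $q$-subsequence, is still conditionally $R_\zeta$-sub-Gaussian with respect to a suitable filtration. The times at which $q_j=q$ are predictable: $q_j$ and $\btheta_j$ are $\cF_{j-1}$-measurable, since the algorithm (including \warmstart's random shuffle, which we condition on or absorb into $\cF_0$) chooses them from past data. A cleaner route is to avoid reindexing altogether. We use the version of the self-normalized martingale inequality in which the design vector $\phi(\btheta_j)\bm 1\{q_j=q\}$ is allowed to be zero at irrelevant steps. Equivalently, we stop the martingale $\sum_j \bm 1\{q_j=q\}\eta_{\zeta,j}\phi(\btheta_j)$, which remains a valid supermartingale construction.

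With this in hand, the cited theorem (with regularizer $\lambda\ge R_\zeta^2$ chosen as in the paper, and with the bound stated for general $\lambda$ via the factor $R_\zeta/\sqrt{\lambda}$) gives, with probability at least $1-\delta/(2Q)$, simultaneously for all $t$ and all $\btheta$,
\[
|\hat\mu_{\bx_{q,t},\by_{\zeta,q,t}}(\btheta)-\ell_\zeta(\btheta,q)|\le \Big(B_\zeta+\tfrac{R_\zeta}{\sqrt\lambda}\sqrt{2(\gamma_{q,t}+\log\tfrac{2Q}{\delta})}\Big)\hat\sigma_{\bx_{q,t},\by_{\zeta,q,t}}(\btheta).
\]
Here $\gamma_{q,t}=\tfrac12\log\det(\bI+\lambda^{-1}\bK_{\bx_{q,t}})$. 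We bound $\gamma_{q,t}\le\gamma(J_{q,t})\le\gamma(J_{\max,t})$ by monotonicity of $\gamma$. We expect this step to be the main obstacle. The definition of $\gamma(J)$ maximizes over \emph{sets} $A$ of distinct configurations, whereas $\bx_{q,t}$ may contain repeated configurations. We would therefore either verify the intended multiset reading of $\gamma$ or show directly that the log-determinant with repetitions is dominated appropriately; if needed, we would use the standard identity $\tfrac12\sum\log(1+\lambda^{-1}\hat\sigma^2)$ and the fact that the paper's $\lambda\ge R^2$ makes the Chowdhury--Gopalan constant fit. The case $J_{q,t}=0$ is trivial: $|\ell_\zeta(\btheta,q)|\le\|\ell_\zeta(\cdot,q)\|_{\cH_k}\sqrt{k(\btheta,\btheta)}\le B_\zeta=B_\zeta\hat\sigma$.

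Finally, we take a union bound over the $2Q$ pairs $(q,\zeta)$, which gives total failure probability $\delta$. On the resulting event, write $\beta'_{\zeta,t}=\beta_{\zeta,t}/\sqrt Q$ and $a_q=\hat\sigma_{\bx_{q,t},\by_{\zeta,q,t}}(\btheta)/Q$. Averaging over $q$ and applying the triangle inequality gives $|\bar\mu_{\zeta,t}(\btheta)-\zeta(\btheta)|\le\beta'_{\zeta,t}\sum_q a_q$. For $\zeta=g$ we note that $g(\btheta)=\frac1Q\sum_q\ell_g(\btheta,q)$ by definition. Cauchy--Schwarz gives $\sum_q a_q\le\sqrt Q\,(\sum_q a_q^2)^{1/2}=\sqrt Q\,\bar\sigma_{\zeta,t}(\btheta)$. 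This accounts exactly for the $\sqrt Q$ factor in~\eqref{eq:beta-def} and yields~\eqref{eq:lcb-ucb-property}.
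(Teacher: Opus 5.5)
Your argument follows the paper's proof: per-$(q,\zeta)$ kernelized self-normalized concentration, a union bound with $\delta/(2Q)$ per pair, and the triangle inequality plus Cauchy--Schwarz to produce the $\sqrt Q$ in $\beta_{\zeta,t}$. The differences are minor.
\begin{itemize}
\item The paper re-derives the per-query bound from the self-normalized inequality rather than citing Chowdhury--Gopalan's Theorem~2. It uses the split $f-\hat\mu=\lambda\langle f,V^{-1}\varphi(\btheta)\rangle_{\cH_k}-\langle S,V^{-1}\varphi(\btheta)\rangle_{\cH_k}$ together with $\hat\sigma^2=\lambda\langle\varphi(\btheta),V^{-1}\varphi(\btheta)\rangle_{\cH_k}$.
\item That derivation gives the $R_\zeta/\sqrt\lambda$ factor for every $\lambda>0$, so you do not need to invoke $\lambda\ge R_\zeta^2$.
\item The paper simply asserts that the reindexed noise on the $q$-subsequence is conditionally sub-Gaussian. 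Your masked design $\bm 1\{q_j=q\}\varphi(\btheta_j)$ justifies that step properly.
\end{itemize}

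On the obstacle you flag, the paper settles it only by asserting $\mathcal I_J\le\gamma(J)$ inside its self-normalization lemma. That inequality is correct exactly under the multiset reading of $\gamma$ (as in Srinivas et al.), so commit to that reading.

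Your alternative, dominating the log-determinant with repeated points by the set-based $\gamma$, cannot work.
\begin{itemize}
\item Repeats genuinely occur in \textsf{SCOPE}, because a re-selected $\btheta_{\rm cand}$ is re-evaluated on the same queries.
\item $J$ copies of one configuration give $\mathcal I_J=\frac12\log(1+J/\lambda)$, which is unbounded in $J$. The set-based $\gamma(J)\le\gamma(|\Theta|)$ is bounded.
\item With the resulting bounded $\beta_{\zeta,t}$, a configuration re-observed indefinitely under, e.g., Gaussian noise has error of order $|\sum_a\eta_a|/J$ against a width of order $\sqrt{\lambda/J}$. By the law of the iterated logarithm, this violates the uniform-in-$t$ event almost surely.
\end{itemize}
So under the literal set definition the claimed guarantee cannot hold in general. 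The paper's later remark that $\gamma(\tau)$ is bounded because $\Theta$ is finite tacitly uses the set reading, which is where the ambiguity you noticed comes from.
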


The next theorem shows that \ours is $\delta$-correct and provides an explicit upper bound on $\mathrm{SR}(\tau)$.

\begin{theorem}\label{thm:main-T-ours}
Fix $\budget>0$, $\delta\in(0,1)$, and $\alpha\in(0,1/2)$, and let $t_0$ be the time returned by \warmstart in Line~\ref{alg:scope-main:init0}.
Under Assumptions~\ref{ass:observation}--\ref{ass:rkhs},
\ours is $\delta$-correct.
Moreover, for any $\tau$ returned by \ours, with probability at least $1-\delta$, the simple regret satisfies
\begin{equation}\label{eq:sr-bound-T}
\mathrm{SR}(\tau)
\ \le\
\frac{\mathsf{A}}{\sqrt{(\tau-t_0)\ -\ \mathsf{B}\cdot(\tau-t_0)^{2\alpha}\ -\ \mathsf{C}}}
\end{equation}
whenever $g(\btheta^\star)<0$ and $(\tau-t_0)- \mathsf{B}\cdot(\tau-t_0)^{2\alpha}- \mathsf{C}>0,$
where
\begin{equation*}\label{eq:sr-constants}
\begin{gathered}
\mathsf A=
8\beta_{c,\tau}\sqrt{Q(\lambda+1)\gamma(\tau)},\quad \mathsf B=16Q(\lambda+1)\beta_{g,\tau}^2\gamma(\tau),\\
\mathsf C=Q\lceil{(-g(\btheta^\star)})^{-1/\alpha}\rceil.
\end{gathered}
\end{equation*}
\end{theorem}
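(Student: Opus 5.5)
We work throughout on the event $\cE$ of Theorem~\ref{cla:bounds}, which has probability at least $1-\delta$; on $\cE$ we have $L_{\zeta,t}(\btheta)\le\zeta(\btheta)\le U_{\zeta,t}(\btheta)$ for all $\zeta,t,\btheta$.

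\textbf{Correctness.} We argue by induction over the updates of $\btheta_{\rm out}$. Initially $\btheta_{\rm out}=\btheta_0$, and $g(\btheta_0)=-\epsilon s(\btheta_0)\le 0$, so it is feasible. Whenever Line~\ref{alg:scope-main:update} replaces $\btheta_{\rm out}$ by $\btheta_{\rm cand}$, we have $U_{g,t'}(\btheta_{\rm cand})\le 0$ for $t'\in\{t-1,t\}$. On $\cE$ this gives $g(\btheta_{\rm cand})\le U_{g,t'}(\btheta_{\rm cand})\le0$. Hence $\btheta_{\rm out}$ is feasible on $\cE$, so $\Pr[g(\btheta_{\rm out})\le0]\ge1-\delta$. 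Along the way we record an invariant for later use: $c(\btheta_{\rm out})\le U_{\rm out}$ at every time, since $U_{\rm out}$ only ever takes values $U_{c,t}(\btheta_{\rm out})$.

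\textbf{Regret.} The plan is to control $U_{\rm out}-c(\btheta^\star)$ by summing posterior standard deviations along the trajectory.

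\emph{Step 1 (iterations of a stable candidate).} Let $\Delta=-g(\btheta^\star)>0$. In any iteration with $i\ge i^\star:=\lceil\Delta^{-1/\alpha}\rceil$, we have $L_{g,t}(\btheta^\star)\le g(\btheta^\star)=-\Delta\le -i^{-\alpha}$, so $\btheta^\star$ is admissible in Line~\ref{alg:scope-main:select}. Greedy selection then gives $L_{c,t}(\btheta_{\rm cand})\le L_{c,t}(\btheta^\star)\le c(\btheta^\star)$. Each iteration costs at most $Q$ observations, so the first $i^\star-1$ iterations use at most $\mathsf C=Q\lceil\Delta^{-1/\alpha}\rceil$ steps; this accounts for the $-\mathsf C$ term.

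\emph{Step 2 (classifying steps).} For each later step $t$ inside iteration $i$, one of three things happens. (a) The candidate remains un-pruned through the whole $j$-loop, or it gets certified. (b) It is pruned by $L_{g,t}>0$. (c) It is pruned by $L_{c,t}>U_{\rm out}$.

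In case (b), $g(\btheta_{\rm cand})$ lies between $L_{g}>0$ and the selection threshold $-i^{-\alpha}$. So the width satisfies $2\beta_{g}\bar\sigma_{g}\gtrsim i^{-\alpha}$, i.e., the step pays an uncertainty of at least order $i^{-\alpha}$. Using $i\lesssim t$, the number of such "infeasibility-type" steps over the run is bounded via $\sum \bar\sigma_g^2\cdot i^{2\alpha}/\beta_g^2$. This is what produces the $\mathsf B(\tau-t_0)^{2\alpha}$ term.

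Outside case (b) the candidate is never declared infeasible. The point of the two-time check $\min\{U_{g,t},U_{g,t-1}\}$ is that, once the width of $g$ drops below $i^{-\alpha}$, the candidate is certified feasible, so $U_{\rm out}\le U_{c,t}(\btheta_{\rm cand})\le c(\btheta^\star)+2\beta_{c,t}\bar\sigma_{c,t}(\btheta_{\rm cand})$.

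\emph{Step 3 (sum of variances).} We need the per-query elliptical potential. For each $q$, standard GP-UCB arguments give
\[
\sum_{t:q_t=q}\hat\sigma^2_{\bx_{q,t-1}}(\btheta_t)\le 2(\lambda+1)\gamma(J_{q,\tau})\le 2(\lambda+1)\gamma(\tau).
\]
The sorting by $\varphi_i$ in Line~\ref{alg:scope-main:sort} ensures that the query observed first has the largest individual variance. Then $Q^2\bar\sigma^2\le Q\max_q\hat\sigma_q^2=Q\varphi_i(q^{(1)})^2$ links $\bar\sigma_{c,t}(\btheta_{\rm cand})$ to the variance actually reduced by the observation; this is where the factor $Q$ enters $\mathsf A$ and $\mathsf B$. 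Summing gives
\[
\sum_{\rm steps}\bar\sigma^2\lesssim Q(\lambda+1)\gamma(\tau).
\]
Among the $(\tau-t_0)-\mathsf B(\tau-t_0)^{2\alpha}-\mathsf C$ remaining "good" steps, the minimum of $\bar\sigma_c$ is then at most
\[
\sqrt{Q(\lambda+1)\gamma(\tau)/(\text{count})}.
\]
Since $U_{\rm out}$ is nonincreasing and $\beta_{c,t}\le\beta_{c,\tau}$, this yields
\[
\mathrm{SR}(\tau)\le U_{\rm out}-c(\btheta^\star)\le \mathsf A/\sqrt{\cdot},
\]
with the constant $8$ absorbing the factors of $2$.

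The main obstacle is Step 2. It requires showing rigorously that each good step either lowers $U_{\rm out}$ to within $O(\beta_c\bar\sigma_c)$ of $c(\btheta^\star)$, or pays a counted amount of $g$-uncertainty. A further difficulty is handling the pruning case (c): there $L_c>U_{\rm out}\ge c(\btheta_{\rm out})$ while also $L_c\le c(\btheta^\star)$ at selection time. From this we would deduce $U_{\rm out}<c(\btheta^\star)+$width, which is already the desired bound.
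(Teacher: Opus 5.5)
Your correctness argument is the same as the paper's. The regret part, however, has a genuine gap in Step~2. You sort iterations by \emph{why} they stop and charge only case (b) to the $\mathsf B$ term. But your case (a) lumps certified iterations together with ones that are merely ``un-pruned through the whole $j$-loop.'' The latter include iterations where, at every step,
\[
L_{g,t}(\btheta_{\rm cand})\le 0<U_{g,t}(\btheta_{\rm cand})
\quad\text{and}\quad
L_{c,t}(\btheta_{\rm cand})\le U_{\rm out}.
\]
An example is a slightly infeasible candidate whose $g$-interval never separates from $0$ within $Q$ queries. The last iteration, cut off by the budget, can also be of this kind. Such an iteration never passes the test in Line~\ref{alg:scope-main:update}, so it gives no control of $U_{\rm out}$, and nothing in your accounting bounds how many of them occur.

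The missing idea is to classify iterations by certification status rather than by pruning reason. Call iteration $i$ certified if $U_{g,t}(\btheta^{(i)}_{\rm cand})\le 0$ for some $t\in[t^{(i)},t^{(i+1)}]$, where $t^{(i)}$ is its selection time. Your remark about the two-time check is the right ingredient, but it holds only at $t^{(i)}$: later, $L_{g,t}$ can rise above $-i^{-\alpha}$. The $U_{g,t-1}$ half of the test exists precisely so that $t=t^{(i)}$ counts. Used contrapositively, it shows that every uncertified iteration has $U_{g,t^{(i)}}(\btheta^{(i)}_{\rm cand})>0$, whether it was pruned by $g$, pruned by cost, exhausted $\cQ$, or was cut by the budget. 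Selection gives $L_{g,t^{(i)}}(\btheta^{(i)}_{\rm cand})\le -i^{-\alpha}$, so
\[
2\beta_{g,\tau}\bar\sigma_{g,t^{(i)}}(\btheta^{(i)}_{\rm cand})>i^{-\alpha}.
\]
Squaring, summing, and using that $i^{-2\alpha}$ is nonincreasing bounds the number of uncertified iterations by $16(\lambda+1)\beta_{g,\tau}^2\gamma(\tau)I^{2\alpha}$. Converting with $(\tau-t_0)/Q\le I\le\tau-t_0$ then produces $\mathsf B$ and $\mathsf C$.

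The accounting must also be per iteration, not per step. The width lower bound holds only at $t^{(i)}$, since at later steps $L_{g,t}$ may sit just below $0$. So individual ``infeasibility-type steps'' cannot each be charged $i^{-\alpha}$.

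Likewise, in Step~3 the minimum must be taken over the start-time widths $\bar\sigma_{c,t^{(i)}}(\btheta^{(i)}_{\rm cand})$ of good iterations. That is the width appearing in the selection inequality
\[
c(\btheta^{(i)}_{\rm cand})\le c(\btheta^\star)+2\beta_{c,\tau}\bar\sigma_{c,t^{(i)}}(\btheta^{(i)}_{\rm cand}),
\]
and a small width at some late step does not control it. For the same reason, your bound $U_{c,t}(\btheta_{\rm cand})\le c(\btheta^\star)+2\beta_{c,t}\bar\sigma_{c,t}(\btheta_{\rm cand})$ is unjustified for $t>t^{(i)}$. You must pass through $c(\btheta_{\rm cand})$ and pick up a second width, giving $4\beta_{c,\tau}\bar\sigma_{c,t^{(i)}}$; this factor $4$ sits behind the $8$ in $\mathsf A$.

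Two of your ingredients are sound and worth keeping:
\begin{itemize}
\item Your case-(c) argument does yield $c(\btheta_{\rm out})<c(\btheta^\star)+2\beta_{c,\tau}\bar\sigma_{c,t^{(i)}}(\btheta^{(i)}_{\rm cand})$ for $i\ge i^\star$. So cost-pruned iterations can be counted as good, whereas the paper charges uncertified ones to $g$-uncertainty.
\item Your first-query observation $\bar\sigma^2_{\zeta,t^{(i)}}(\btheta^{(i)}_{\rm cand})\le\varphi_i(q^{(1)})^2/Q$, summed over iterations, bounds exactly the start-time sum the argument needs by $2(\lambda+1)\gamma(\tau)$. This is more direct than the paper's Lemma~\ref{lem:sum-variance-bound}, which bounds the sum over all steps.
\end{itemize}
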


As shown in Theorem~\ref{thm:main-T-ours}, $\mathrm{SR}(\tau)$ in Algorithm~\ref{alg:scope-main} converges at a rate of $O(1/\sqrt{\tau})$, since $\Theta$ is finite and hence $\gamma(\tau)$ is bounded by a constant. 
The parameter $\alpha$ trades off the $(\tau-t_0)^{2\alpha}$ term against the dependence on $g(\btheta^\star)$ via $\mathsf C$. Since $g(\btheta^\star)$ is unknown, a convenient choice is to set $\alpha$ to a moderate constant, e.g., $1/3$. The dependence on $g(\btheta^\star) < 0$ is necessary and, as $\alpha \to 1/2$, the term $(-g(\btheta^\star))^{-2}$ becomes nearly tight, in the sense that it is unavoidable in the worst case for any $\delta$-correct algorithm; see \trref{\ref{appendix:impossibility}} for details. Moreover, the following corollary translates the $\tau$-based bound into an $O(\sqrt{(\log \budget)/\budget})$ bound in expectation.

\begin{corollary}\label{cor:exp-sr-budget}
Fix $\alpha\in(0,1/2)$ and suppose Assumptions~\ref{ass:observation}--\ref{ass:rkhs} and $g(\btheta^\star)<0$ hold.
For each search budget $\budget>0$, let $\delta= \budget^{-2}$.
Then there exists $m\ge 2$, independent of $\budget$, such that for all $\budget\ge m$,
\(
\EE[\mathrm{SR}(\tau)]
=
O(\sqrt{(\log \budget)/\budget}),
\)
where the expectation is over the randomness of the observations and the stopping time $\tau$.
\end{corollary}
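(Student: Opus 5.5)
The plan is to convert the high-probability, $\tau$-based bound of Theorem~\ref{thm:main-T-ours} into an expectation bound by splitting on the good event. Let $\cE$ be the event in~\eqref{eq:lcb-ucb-property} on which all confidence bounds hold. With $\delta=\budget^{-2}$, Theorem~\ref{cla:bounds} gives $\Pr[\cE^c]\le \budget^{-2}$. Since costs lie in $[C_{\min},C_{\max}]$, we always have $\mathrm{SR}(\tau)\le C_{\max}-C_{\min}$. Hence
\[
\EE[\mathrm{SR}(\tau)]\le \EE[\mathrm{SR}(\tau)\bm 1_{\cE}]+(C_{\max}-C_{\min})\budget^{-2},
\]
and the second term is $o(\sqrt{(\log\budget)/\budget})$. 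It therefore suffices to bound $\mathrm{SR}(\tau)$ deterministically on $\cE$.

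The first key step is a lower bound on $\tau$. The algorithm stops only when the cumulative observed cost $\sum_{t'\le\tau}y_{c,t'}$ exceeds $\budget$. I would show that this forces $\tau\gtrsim \budget/C_{\max}$ with high probability. Write $\sum_{t'\le \tau} y_{c,t'} = \sum_{t'\le\tau}\ell_c + \sum_{t'\le\tau}\eta_{c,t'}$. The first sum is at most $\tau C_{\max}$. For the martingale noise sum, I would apply a sub-Gaussian maximal (Azuma/Freedman-type) inequality uniformly over $t$ with failure probability $\budget^{-2}$. This gives $|\sum_{t'\le t}\eta_{c,t'}|\le R_c\sqrt{2t\log(\budget^2 t^2)}$ for all $t$. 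On the intersection of this event with $\cE$, the stopping condition then gives $\budget < \tau C_{\max}+O(\sqrt{\tau\log(\budget\tau)})$, so $\tau\ge c_1\budget$ for $\budget$ large. The extra failure probability also contributes only $O(\budget^{-2})$ to the expectation. I also need $t_0$ to be bounded independently of $\budget$. This holds because \warmstart makes at most $|\Theta_{\rm init}|\cdot 2Q$ observations, a constant in $\budget$; so $\tau-t_0\ge c_1\budget/2$ for large $\budget$.

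The second key step is to plug this into~\eqref{eq:sr-bound-T} and track the $\budget$-dependence of the constants. Because $\Theta$ is finite, $\gamma(J)\le \gamma_\infty<\infty$ for all $J$; indeed $\log\det(\bI+\lambda^{-1}\bK_A)$ is at most a constant determined by the at most $|\Theta|$ distinct points, since repeated points do not enlarge the max over sets. Then~\eqref{eq:beta-def} gives $\beta_{\zeta,\tau}=O(\sqrt{\log(2Q\budget^2)})=O(\sqrt{\log\budget})$. Consequently $\mathsf A=O(\sqrt{\log\budget})$, $\mathsf B=O(\log\budget)$, and $\mathsf C$ is a constant because $g(\btheta^\star)<0$ is fixed. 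Since $2\alpha<1$, we have $\mathsf B(\tau-t_0)^{2\alpha}+\mathsf C\le (\tau-t_0)/2$ once $\tau-t_0\ge c_1\budget/2$ and $\budget\ge m$. Here $m$ is chosen large enough that $\log\budget\cdot\budget^{2\alpha-1}$ is sufficiently small, which depends only on the problem constants. The bound~\eqref{eq:sr-bound-T} then yields $\mathrm{SR}(\tau)\le \mathsf A\sqrt{2/(\tau-t_0)}=O(\sqrt{\log\budget/\budget})$ on the good event.

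I expect the main obstacle to be the first step, lower-bounding $\tau$ in terms of $\budget$. The stopping rule uses noisy $y_{c,t}$ rather than true costs, and $\tau$ is random and correlated with the noise. So the concentration must be uniform in time, and its failure probability must be folded into the expectation alongside $\Pr[\cE^c]$. A secondary subtlety is that the algorithm might terminate for reasons other than the budget. Inspecting Algorithm~\ref{alg:scope-main}, though, the outer loop only ends via the budget break: the $\argmin$ set is nonempty because $\btheta_0$ or other candidates satisfy the filter. If needed, I would argue under $\cE$ that $\btheta^\star$ eventually passes the filter $L_{g,t}\le -i^{-\alpha}$, so the loop never stalls.
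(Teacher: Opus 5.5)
Your proposal is correct and follows essentially the same route as the paper. You split the expectation on the confidence event and on a large-$\tau$ event, paying $C_{\max}-C_{\min}$ on the complements, lower-bound $\tau$ by roughly $\budget/C_{\max}$ via a sub-Gaussian martingale maximal inequality, and use finiteness of $\Theta$ to get $\beta_{\zeta,\tau}=O(\sqrt{\log\budget})$. You then take $\budget$ large enough that $\mathsf B(\tau-t_0)^{2\alpha}+\mathsf C\le(\tau-t_0)/2$. The only difference is the concentration step: the paper applies Ville's inequality at the fixed horizon $n=\lfloor\budget/(2C_{\max})\rfloor$ to get an exponentially small $\Pr[\tau\le n]$, whereas your uniform-in-time union bound gives an $O(\budget^{-2})$ failure probability, which is equally sufficient.
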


\subsection{Computational Complexity}
While simple regret captures the main search cost, we also analyze the additional computational cost to assess the practical runtime of \ours.
In the algorithm, the time and space complexity are dominated by the selection of $\btheta_{\rm cand}$ in Line~\ref{alg:scope-main:select}. As implied by~\eqref{eq:scope-mean-variance}, this step requires computing $\hat\mu_{\bx_{q,t},{\by}_{\zeta,q,t}}$ and $\hat\sigma_{\bx_{q,t},{\by}_{\zeta,q,t}}$ for all $\btheta\in\Theta$, $\zeta\in\{c,g\}$, and $q\in\cQ$ at time $t$.
To compute each $\hat\mu_{\bullet}$ and $\hat\sigma_{\bullet}$, we use standard GP updates as detailed in \citet{chowdhury2017kernelized}, which costs $O(t^2)$ time and $O(t^2)$ space. Therefore, the time complexity of Line~\ref{alg:scope-main:select} at time $t$ is $O(|\Theta|\sum_{q\in\cQ} J_{q,t}^2)
= O(Q\cdot|\Theta|\cdot J_{\max,t}^2).$ Summing over $t\in [\tau]$, the worst-case total time complexity of \ours in terms of $\tau$ is
\[
 O\!\left(\sum_{t=1}^{\tau} Q\cdot|\Theta|\cdot J_{\max,t}^2\right)=O\!\left(Q\cdot|\Theta|\cdot \tau^3\right).
\]
This is larger than the $O\left(|\Theta|\cdot \tau^3\right)$ running time of existing BO and bandit approaches~\cite{wang2025convergence, xu2023constrained}, due to the query-wise exploitation in \ours.
In addition, computing $\beta_{\zeta, t}$ in~\eqref{eq:beta-def} also involves $\gamma(J)$ for all $J\leq \tau$. Since $\frac{1}{2}\log\det(\bI+\lambda^{-1}\bK_A)$ is a non-decreasing submodular set function over $A\subseteq \Theta$~\cite{srinivas2009gaussian}, $\gamma(J)$ can be approximated within a constant factor using a greedy algorithm~\cite{nemhauser1978analysis} prior to running \ours. Although the obtained value may be slightly larger than the exact $\gamma(J)$, it does not affect the correctness of Theorems~\ref{cla:bounds}-\ref{thm:main-T-ours}, as the confidence bounds become looser but still satisfy~\eqref{eq:lcb-ucb-property}.
For space, standard GP computations imply that calculating the per-query GP surrogates costs
$O(\sum_{q\in\cQ}J_{q,\tau}^2)=O(Q\cdot \tau^2)$, plus $O(\tau)$ to store the observation history.
Note that the empirical running time can be significantly lower than the above worst-case complexity, as $J_{\max,t}$ is usually smaller than $t$.
\section{Additional Related Work}
In this section, we review several lines of work that may seem related to ours, and explain how they differ from the setting in \ourproblem.

\stitle{Selecting a single model}
A first line of work selects a single model for a single task.
For instance, SpareLLM~\cite{jo2025sparellm} evaluates all candidate LLMs but may stop early based on a quality confidence interval.
BARGAIN~\cite{10.1145/3769776} considers metrics such as accuracy and recall, and proposes finer-grained estimation methods. These methods differ fundamentally from our setting, as they select a single model for an isolated task. In contrast, we select a specific LLM for each module within a compound AI system that handles multi-step tasks.

\stitle{Using multiple models at runtime}
Another line of work utilizes multiple LLMs via online routing or cascades, where queries are adaptively forwarded to different models during execution. FrugalGPT~\cite{chen2024frugalgpt} invokes models sequentially from cheapest to most expensive, stopping once a learned confidence criterion is met. LLM-Cascade~\cite{yue2024large} combines lightweight models with a decision mechanism and a stronger fallback model. ThriftLLM~\cite{10.14778/3749646.3749702} selects a subset of LLMs to maximize quality under a cost constraint by aggregating their outputs at runtime.
Similarly, the work by \citet{atalar2025neuralbanditbasedoptimal} studies model selection along a pipeline with partial feedback using bandit-style algorithms. For a comprehensive survey of using multiple LLMs during execution, we refer readers to~\citet{chen2025harnessingmultiplelargelanguage}.
However, these approaches target specific system architectures to enable query-dependent or dynamic routing. In contrast, our work focuses on configuring arbitrary compound AI systems. Furthermore, their objectives differ from ours: we minimize cost subject to a strict quality constraint.

\begin{table*}[t]
\centering
\caption{Statistics of tasks, systems, datasets, and search budgets.}
\label{tab:task-settings}
\small
\vspace{-2mm}
\renewcommand{\arraystretch}{1.1}
\begin{tabular}{lrrrrrrr}
\toprule
\textbf{Scenario} & \textbf{AI system} & \textbf{$N$} & \textbf{$|\Theta|$} & \textbf{Query dataset ($\cQ$)} & \textbf{$Q$} & \textbf{$\budget_{\rm max}$ in USD} & \textbf{Test-time dataset (for RQ2)} \\
\midrule
Text-to-SQL          & DIN-SQL~\cite{pourreza2023din}     &  4 & 279,841 & BIRD-mini-dev~\cite{10.5555/3666122.3667957} & 500 & 30.0 & BIRD-dev (1534 queries)~\cite{10.5555/3666122.3667957} \\
Data transformation   & UniDM-DT~\cite{qian2024unidm}    & 5 & 6,436,343 & Bing-QueryLogs~\cite{10.14778/3231751.3231766}  & 102 & 5.0 & StackOverflow (710 queries)~\cite{10.14778/3231751.3231766} \\
Data imputation      & UniDM-DI~\cite{qian2024unidm}    & 3 & 12,167 &  Restaurant-dev~\cite{9458712} & 156 & 2.0 & Restaurant-test (86 queries)~\cite{9458712} \\
\bottomrule
\end{tabular}
\end{table*}

\input{figure/main-dev}

\section{Experiments}\label{sec:exp}

In this section, we evaluate \ours on various tasks with different compound AI systems, addressing the following research questions:

\begin{itemize}[topsep=2pt,itemsep=1pt,parsep=0pt,partopsep=0pt,leftmargin=11pt]
\item \textbf{RQ1}: How does \ours improve correctness, effectiveness, and cost-efficiency compared to baseline approaches?
\item \textbf{RQ2}: Does the solution returned by \ours generalize in test-time deployments?
{\item \textbf{RQ3}: How sensitive is the performance of \ours to different problem and algorithmic settings?}
\end{itemize}
{We also conduct ablation studies on key components of \ours and additional scalability evaluations. Results are presented in 
\trref{\ref{appendix:ablation}}.}

\subsection{Experimental Setup}

\stitle{Problem settings}
We run experiments on three data management tasks: Text-to-SQL, data transformation, and data imputation. We use the compound AI systems in DIN-SQL~\cite{pourreza2023din}, UniDM-DT, and UniDM-DI~\cite{qian2024unidm} for these tasks, respectively.
For each task, we adopt the dataset $\cQ$ used in the system’s original paper, and set the maximum budget $\budget_{\max}$ to approximately match the monetary cost incurred when random search runs for 20 iterations, which is sufficient for most methods to converge to a stable solution. Details are listed in Table~\ref{tab:task-settings}.
In addition, we construct the candidate model set $\cM$ with 23 models covering a wide range of prices and capabilities, resulting in search spaces containing tens of thousands to millions of configurations. {\it The full LLM list is provided in \trref{\ref{appendix:experiments}}.} We set $\btheta_0$ to the configuration in which all modules use \texttt{GPT-5.2}, the most expensive model in $\cM$, and set $\epsilon=0.01$. We compute each $y_{g,t}$ using the standard evaluation metric of the task, i.e., execution accuracy for Text-to-SQL and accuracy for data transformation and data imputation. We compute each $y_{c,t}$ as the actual incurred cost in USD.
For any configuration $\btheta$ reported in our experiments, we estimate $c(\btheta)$ and $s(\btheta)$ offline by evaluating $\btheta$ on $\cQ$ multiple times and taking the average; these evaluations are not counted toward the budget of any method.

\stitle{Competitors and configurations}
We evaluate \ours against seven competitors, divided into two categories as follows.

\begin{itemize}[topsep=2pt,itemsep=1pt,parsep=0pt,partopsep=0pt,leftmargin=11pt]
\item {\it Generic optimizations}: random search (\rand), \BO~\cite{wang2025convergence}, \config~\cite{xu2023constrained}, and \safeopt~\cite{pmlr-v37-sui15}. 
For these methods, we follow the standard implementations from prior work~\cite{xu2023constrained}.

\item {\it Compound-AI-specific approaches}: \llambo~\cite{liu2024large}, \llmselect~\cite{chen2025optimizingmodelselectioncompound}, and \abacus~\cite{russo2025abacuscostbasedoptimizersemantic}. 
We make the necessary minimal adaptations to align with the setting in Problem~\ref{problem-main}; see \trref{\ref{appendix:experiments}} for details.
\end{itemize}
For \ours and generic approaches that require an SPD kernel $k$, we follow prior work~\cite{snoek2012practical} and use the Mat\'ern $5/2$ kernel:
\vspace{-3mm}
\begin{equation*}
\textstyle
k(\btheta,\btheta')=\left(1+\sqrt5\,d(\btheta,\btheta')+\tfrac53 d(\btheta,\btheta')^2\right)e^{-\sqrt5 d(\btheta,\btheta')},
\vspace{-3mm}
\end{equation*}
where $d(\btheta,\btheta')=\sqrt{\sum_{i=1}^{N}\bm 1\{\theta_i\neq \theta'_i\}}$ counts the number of modules on which $\btheta$ and $\btheta'$ choose different LLMs. To ensure a fair comparison, we use the same kernel across these methods, although kernel choices may affect empirical performance. For \ours, we set $R_c=R_g=10^{-3}$, $\delta=10^{-4}$, $\alpha=1/3$, and $\theta_{\rm base}$ to \texttt{Gemini-2.5-flash-lite}. We treat $B_c$ and $B_g$ as hyperparameters and tune them before the main loop (Line~\ref{alg:scope-main:init1} of Algorithm~\ref{alg:scope-main}), so that Line~\ref{alg:scope-main:select} of Algorithm~\ref{alg:scope-main} has at least one eligible configuration when executing for the first time.

\stitle{Evaluation metrics}
For each task and search budget $\budget \in [0,\budget_{\max}]$, we mainly report metrics for each algorithm's returned configuration, denoted as $\btheta_{\rm out,\budget}$.
{To evaluate effectiveness and cost-efficiency}, we report the {\it best feasible cost} achieved by each algorithm up to budget $\budget$, defined as
\[
c_{\rm bf}(\budget)=\min_{\budget'\leq \budget:~s(\btheta_{\rm out,\budget'})\ge s_0} c(\btheta_{\rm out,\budget'}).
\]
{Here, effectiveness is represented by how small the best feasible cost is at the maximum budget $\budget_{\max}$. Cost-efficiency is measured by how quickly the best feasible cost decreases as the budget increases.}
For \ours and its variants, $c_{\rm bf}(\budget)=\mathrm{SR}(\tau) + c(\btheta^\star)$ at the returned stopping time $\tau$ corresponding to budget $\budget$. For other methods, it similarly represents the best cost found up to budget $\budget$, where infeasible configurations are ruled out.
To assess {correctness, i.e., whether $s({\btheta}_{\rm out}) \geq s_0$}, we also report the average {\it constraint violation} metric~\cite{xu2023constrained} up to budget $\budget$, defined as 
\[
\textstyle
V(\budget)=\frac{1}{\budget}\int_{0}^{\budget}\frac{\max\{s_0-s(\btheta_{{\rm out}, u}),0\}}{s_0}du,
\]
with $V(0)=0$. $V(\budget)$ measures the extent to which the returned configuration violates the quality constraint, {and $V(\budget)=0$ means that correctness holds at all budgets from $0$ to $\budget$.} Since there are finitely many different $\btheta_{{\rm out}, u}$, we compute the integral exactly by summing over the corresponding budget intervals.
We repeat each experiment three times, each with a fixed random seed, and report the median, minimum, and maximum values. All experiments were conducted on a Linux server equipped with an Intel Xeon(R) Gold 6240 CPU @ 2.60GHz and 377GB of RAM, using multi-threading to maximize throughput when invoking LLMs. The implementations of all algorithms are available at {\color{blue}\url{https://github.com/waetr/SCOPE-LLM-optimizer/}}.


\newcommand{\NumW}{2.2em}
\newcommand{\ParenW}{3.6em} 
\newcommand{\GapW}{0.2em}
\newcommand{\costcell}[2]{%
  \makebox[\NumW][l]{#1}\hspace{\GapW}%
  \makebox[\ParenW][l]{\textcolor{Green}{(#2\%)}}%
}
\newcommand{\qposcell}[2]{%
  \makebox[\NumW][l]{#1}\hspace{\GapW}%
  \makebox[\ParenW][l]{\textcolor{Green}{(+#2\%)}}%
}
\newcommand{\qnegcell}[2]{%
  \makebox[\NumW][l]{#1}\hspace{\GapW}%
  \makebox[\ParenW][l]{\textcolor{Red}{(-#2\%)}}%
}

\begin{table*}[t]
\centering
\small
\caption{Avg. cost and quality on test-time datasets. The best and second-best values are \textbf{bold} and \underline{underlined}, respectively.}
\label{tab:test_performance}
\vspace{-2mm}
\begin{tabularx}{0.95\textwidth}{l*{6}{R}}
\toprule
\textbf{Method}
& \multicolumn{2}{c|}{\textbf{Text-to-SQL (BIRD-dev)}}
& \multicolumn{2}{c|}{\textbf{Data transformation (StackOverflow)}}
& \multicolumn{2}{c}{\textbf{Data imputation (Restaurant-test)}} \\
\cmidrule(lr){2-3}\cmidrule(lr){4-5}\cmidrule(lr){6-7}
& \multicolumn{1}{c}{\textbf{Avg.\ cost ($\times 10^{-2}$)}} & \multicolumn{1}{c|}{\textbf{Avg.\ quality}}
& \multicolumn{1}{c}{\textbf{Avg.\ cost ($\times 10^{-3}$)}} & \multicolumn{1}{c|}{\textbf{Avg.\ quality}}
& \multicolumn{1}{c}{\textbf{Avg.\ cost ($\times 10^{-3}$)}} & \multicolumn{1}{c}{\textbf{Avg.\ quality}} \\
\midrule
\textbf{Reference} ($\btheta_0$)
& \makebox[\NumW][l]{2.35}\hspace{\GapW}\makebox[\ParenW][l]{} & \multicolumn{1}{r|}{\makebox[\NumW][l]{0.34}\hspace{\GapW}\makebox[\ParenW][l]{}}
& \makebox[\NumW][l]{4.38}\hspace{\GapW}\makebox[\ParenW][l]{} & \multicolumn{1}{r|}{\makebox[\NumW][l]{0.37}\hspace{\GapW}\makebox[\ParenW][l]{}}
& \makebox[\NumW][l]{4.28}\hspace{\GapW}\makebox[\ParenW][l]{} & \makebox[\NumW][l]{0.74}\hspace{\GapW}\makebox[\ParenW][l]{} \\
\BO
& \costcell{1.05}{45} & \multicolumn{1}{r|}{\qposcell{0.35}{3}}
& \costcell{3.04}{69} & \multicolumn{1}{r|}{\qnegcell{0.34}{8}}
& \costcell{1.02}{24} & \qposcell{0.87}{18} \\
\config
& \costcell{1.41}{60} & \multicolumn{1}{r|}{\qposcell{0.34}{0}}
& \costcell{0.93}{21} & \multicolumn{1}{r|}{\qnegcell{0.24}{35}}
& \costcell{\underline{0.21}}{5} & \qposcell{\textbf{0.90}}{22} \\
\safeopt
& \costcell{2.35}{100} & \multicolumn{1}{r|}{\qposcell{0.34}{0}}
& \costcell{4.38}{100} & \multicolumn{1}{r|}{\qposcell{0.37}{0}}
& \costcell{1.16}{27} & \qposcell{\underline{0.88}}{19} \\
\llambo
& \costcell{2.35}{100} & \multicolumn{1}{r|}{\qposcell{0.34}{0}}
& \costcell{0.76}{17} & \multicolumn{1}{r|}{\qnegcell{0.20}{46}}
& \costcell{0.43}{10} & \qposcell{0.87}{18} \\
\llmselect
& \costcell{1.60}{68} & \multicolumn{1}{r|}{\qposcell{\underline{0.36}}{6}}
& \costcell{4.38}{100} & \multicolumn{1}{r|}{\qposcell{0.37}{0}}
& \costcell{4.28}{100} & \qposcell{0.74}{0} \\
\abacus
& \costcell{1.73}{74} & \multicolumn{1}{r|}{\qposcell{0.35}{3}}
& \costcell{4.38}{100} & \multicolumn{1}{r|}{\qposcell{0.37}{0}}
& \costcell{0.63}{15} & \qposcell{0.87}{18} \\
\rand
& \costcell{\underline{1.04}}{44} & \multicolumn{1}{r|}{\qposcell{0.35}{3}}
& \costcell{\underline{0.50}}{11} & \multicolumn{1}{r|}{\qposcell{\textbf{0.51}}{38}}
& \costcell{1.00}{23} & \qposcell{\underline{0.88}}{19} \\
{\bf \ours (Ours)}
& \costcell{\textbf{0.08}}{3} & \multicolumn{1}{r|}{\qposcell{\textbf{0.41}}{21}}
& \costcell{\textbf{0.22}}{5} & \multicolumn{1}{r|}{\qposcell{\underline{0.39}}{5}}
& \costcell{\textbf{0.10}}{2} & \qposcell{\underline{0.88}}{19} \\
\bottomrule
\end{tabularx}
\end{table*}

\subsection{Performance Evaluation (RQ1)}\label{sec:rq1}

In the first set of experiments, we evaluate \ours against competitors in the three tasks, varying the search budget $\budget \in [0,\budget_{\max}]$. Figure~\ref{fig:main-dev} reports the best feasible cost $c_{\rm bf}(\budget)$ and the constraint violation $V(\budget)$ of each algorithm. For better visualization, we report only the median of $V(\budget)$ for \llmselect across the three runs, as its maximum (up to $0.64$) would dominate most of the figure.

\stitle{Effectiveness and efficiency} 
As shown in the upper panels of Figure~\ref{fig:main-dev}, \ours consistently outperforms all competitors in terms of the final best feasible cost and how fast the cost converges. Specifically, at the maximum budget $\budget_{\max}$, $c_{\rm bf}(\budget_{\max})$ found by \ours is $83.7\%$, $53.5\%$, and $21.1\%$ lower than that of the best competitor in Text-to-SQL, data transformation, and data imputation, respectively. Furthermore, across budgets $\budget \in [0,\budget_{\max}]$, \ours achieves up to $95.5\%$, $63.1\%$, and $93.6\%$ lower best feasible cost $c_{\rm bf}(\budget)$ than the best competitor in the three tasks, respectively.
The lower best feasible cost at $\budget_{\max}$ suggests that the candidates $\btheta_{\rm cand}$ selected by \ours have strong cost and quality under the confidence-bound-based selection criterion.
During search, the rapid improvement in $c_{\rm bf}(\budget)$ is driven by \ours's exploitation of query-level observations: compared with competitors that rely on dataset-level evaluation and thus must spend a full pass over $\cQ$ for each configuration, \ours can rule out many infeasible or suboptimal candidates after evaluating only a few queries (e.g., in data imputation, we observe that 79\% of $\btheta_{\rm cand}$ considered by \ours are ruled out after evaluating the first 40 queries), saving substantial search budget. In addition, the \warmstart subroutine evaluates many good configurations at a small budget, unlike other competitors that often fail to obtain a good starting point during search.
Regarding other competitors, the performance among \BO, \config, \llambo, and \abacus varies substantially across different tasks, in contrast to the consistently strong performance of \ours.
\llmselect rarely finds a better solution than $\btheta_0$ across the three tasks, as it overemphasizes quality while overlooking cost.
\safeopt also performs conservatively, as it enforces that every evaluated configuration must be feasible, which limits exploration of potentially better ones. To summarize, \ours is significantly more cost-efficient and effective than all competitors.

\stitle{Correctness} As shown in the lower panels of Figure~\ref{fig:main-dev}, \ours exhibits \textit{zero constraint violations} $V(\budget)$ across all tasks and budgets, aligning with its $\delta$-correctness guarantee in theory. Regarding other methods, \safeopt also has zero violations and, in theory, ensures correctness under the same assumptions as \ours. \BO offers correctness only under a noiseless assumption; nonetheless, it also has zero violations in our experiments. The remaining methods exhibit oscillating or increasing $V(\budget)$, often due to the inductive biases of their estimation models, for which correctness guarantees are unknown.

\subsection{Deployment Performance (RQ2)}
In the second set of experiments, we evaluate, on a test-time dataset distinct from $\cQ$, the reported configuration at budget $\budget_{\rm max}$ in Figure~\ref{fig:main-dev} (i.e., the one achieving the best feasible cost on $\cQ$ within budget $\budget_{\rm max}$). The test-time datasets are listed in Table~\ref{tab:task-settings} and have been used as held-out test sets corresponding to $\cQ$~\cite{qian2024unidm}. Table~\ref{tab:test_performance} reports the resulting average cost and average quality on these datasets.
Overall, the average costs at test time are within the same order of magnitude as those on $\cQ$. In addition, some methods achieve cost reductions while maintaining average quality comparable to the reference configuration $\btheta_0$, demonstrating the presence of configurations with lower cost than the expensive single-model baseline without sacrificing potential quality.

\stitle{Performance of \ours} Notably, the configuration found by \ours on $\cQ$ generalizes well: \ours attains the lowest average cost among all methods while achieving higher average quality than the reference configuration $\btheta_0$. Concretely, in Text-to-SQL, data transformation, and data imputation, the average costs of \ours are only $3\%$, $5\%$, and $2\%$ of those of $\btheta_0$, respectively. Furthermore, its average quality improves upon $\btheta_0$ by $21\%$, $5\%$, and $19\%$. We attribute this surprising quality improvement to the way \ours ensures $\delta$-correctness under noisy evaluations. As feasibility must be certified with high probability, \ours is less likely to return configurations that sit right near the quality threshold on $\cQ$. Instead, it tends to select configurations with a meaningful quality margin (i.e., the part exceeding the quality threshold). Under dataset shift, this margin makes the selected configuration more likely to remain feasible and, in some cases, even to outperform $\btheta_0$ in average quality.

\stitle{Performance of other competitors}
In contrast to \ours, other competitors translate less reliably to test time. They achieve much smaller cost savings and may fail to match the average quality of $\btheta_0$. For example, \BO's average costs are $13.1\times$, $13.8\times$, and $10.2\times$ larger than those of \ours across the three tasks. In data transformation, its average quality is $8\%$ lower than that of $\btheta_0$. This reflects the brittleness of its correctness property under a noiseless assumption: by choosing solutions very close to the quality threshold, \BO can overfit to $\cQ$ rather than yield robust test-time performance. Similarly, \config and \llambo exhibit lower average quality than $\btheta_0$ in data transformation, as they lack mechanisms to generalize beyond $\cQ$. Meanwhile, while \safeopt maintains feasibility on these datasets, it realizes cost reductions much smaller than \ours, similar to its behavior during search. The remaining methods, \abacus and \llmselect, achieve average quality no worse than that of $\btheta_0$, but still incur average costs at least $6\times$ higher than \ours. In sum, \ours reduces more cost than other competitors, while delivering improved quality relative to the reference configuration on held-out queries.

\subsection{Sensitivity Analysis (RQ3)}

In the third set of experiments, we evaluate whether \ours's performance is sensitive to two crucial settings: the reference configuration $\btheta_{0}$ in Problem~\ref{problem-main} and the kernel $k(\cdot, \cdot)$ used in \ours. We conduct the experiments on data imputation, while keeping the other experimental settings the same as in the previous experiments. We compare the best feasible cost of \ours with that of the strongest competitors, \BO and \config, over the search budget $\budget$.
We first replace the default reference configuration with one in which all modules use \texttt{Claude~Haiku~4.5}. As shown in Figure~\ref{fig:sensitivity}(a), \ours remains effective under this change and attains the lowest best feasible cost once the budget exceeds $1.2$ USD. This shows that the improvement of \ours is not tied to a specific reference configuration.
We next replace the Mat\'ern $5/2$ kernel with the squared exponential (SE) kernel, defined as
$
k(\btheta,\btheta')=\exp(-d(\btheta,\btheta')^2/2).
$
As shown in Figure~\ref{fig:sensitivity}(b), \ours continues to achieve lower best feasible cost than the other kernel-based competitors without intensive kernel tuning. Specifically, at $\budget_{\rm max}$, the best feasible cost of \ours is at most $50\%$ of that of the second-best method. As a practical guideline, one can start with the default Mat\'ern $5/2$ kernel and use the SE kernel only if needed.

\begin{figure}[!t]
\centering
\begin{tikzpicture}
    \begin{customlegend}[legend columns=8,
        legend entries={\ours, \BO, \config, \safeopt}
        ,
        legend style={at={(0.45,1.05)},anchor=north,draw=none,font=\footnotesize,column sep=0.1cm}]
        
    \addlegendimage{line width=0.35mm,mark size=3pt,mark=square*,color=RedPPAlt}
    \addlegendimage{line width=0.35mm,mark size=3pt,mark=o,color=Orange}
    \addlegendimage{line width=0.35mm,mark size=3pt,mark=diamond,color=LightGreen}
    \end{customlegend}
\end{tikzpicture}
\\[-\lineskip]
\hspace{-4mm}
\subfloat[Changing $\btheta_0$]{
\begin{tikzpicture}[scale=1]
    \begin{axis}[
        axis line style={line width=0.3pt},
        height=\columnwidth/2.2,
        width=\columnwidth/1.7,
        ylabel={Best feasible cost},
        xmin=0, xmax=2.0,
        ymin=0.0004, ymax=0.004,
        xtick={0, 1,2,3,4,5},
        xticklabel style = {font=\footnotesize},
        xticklabels={0, 1,2,3,4,5},
        ymode=log,
        log basis y=10,
        grid=both,
major grid style={line width=0.3pt, draw=gray!35},
minor grid style={line width=0.2pt, draw=gray!20},
        ytick={3e-4,1e-3,3e-3},
        yticklabels={$3\!\times\!10^{-4}$,$10^{-3}$,$3\!\times\!10^{-3}$},
        every axis y label/.style={at={(current axis.north west)},anchor=south west, xshift=0mm, yshift=0mm},
        label style={font=\footnotesize},
        tick label style={font=\footnotesize},
after end axis/.code={
  \node[font=\footnotesize,anchor=north] at (rel axis cs:0.5,-0.09) {Search budget in USD};
},
        label style={font=\footnotesize},
        tick label style={font=\footnotesize},
    ]

\addplot[name path=SCOPEOptimizerUpper, draw=none] plot coordinates {
(0.000000, 0.002489)
(0.784312, 0.002489)
(0.784312, 0.001878)
(1.083170, 0.001878)
(1.083170, 0.001803)
(1.234462, 0.001803)
(1.234462, 0.000538)
(2.000000, 0.000538)
};
\addplot[name path=SCOPEOptimizerLower, draw=none] plot coordinates {
(0.000000, 0.002489)
(0.784312, 0.002489)
(0.784312, 0.001878)
(1.083170, 0.001878)
(1.083170, 0.001803)
(1.234462, 0.001803)
(1.234462, 0.000538)
(2.000000, 0.000538)
};
\addplot[fill=RedPPAlt, fill opacity=0.1, draw=none] fill between[of=SCOPEOptimizerUpper and SCOPEOptimizerLower];
\addplot[color=RedPPAlt, line width=0.5mm] plot coordinates {
(0.000000, 0.002489)
(0.784312, 0.002489)
(0.784312, 0.001878)
(1.083170, 0.001878)
(1.083170, 0.001803)
(1.234462, 0.001803)
(1.234462, 0.000538)
(2.000000, 0.000538)
};
\addplot[only marks, color=RedPPAlt, mark size=3.0pt, mark=square*, line width=0.5mm] coordinates {
(0.181818, 0.002489)
(0.363636, 0.002489)
(0.545455, 0.002489)
(0.727273, 0.002489)
(0.909091, 0.001878)
(1.090909, 0.001803)
(1.272727, 0.000538)
(1.454545, 0.000538)
(1.636364, 0.000538)
(1.818182, 0.000538)
};

\addplot[name path=BayesianOptimizerUpper, draw=none] plot coordinates {
(0.000000, 0.002489)
(0.978097, 0.002489)
(0.978097, 0.001941)
(1.217316, 0.001941)
(1.217316, 0.001533)
(1.315406, 0.001533)
(1.315406, 0.000629)
(2.000000, 0.000629)
};
\addplot[name path=BayesianOptimizerLower, draw=none] plot coordinates {
(0.000000, 0.002489)
(0.978097, 0.002489)
(0.978097, 0.001941)
(1.217316, 0.001941)
(1.217316, 0.001533)
(1.315406, 0.001533)
(1.315406, 0.000629)
(2.000000, 0.000629)
};
\addplot[fill=Orange, fill opacity=0.1, draw=none] fill between[of=BayesianOptimizerUpper and BayesianOptimizerLower];
\addplot[color=Orange, line width=0.5mm] plot coordinates {
(0.000000, 0.002489)
(0.978097, 0.002489)
(0.978097, 0.001941)
(1.217316, 0.001941)
(1.217316, 0.001533)
(1.315406, 0.001533)
(1.315406, 0.000629)
(2.000000, 0.000629)
};
\addplot[only marks, color=Orange, mark size=3.0pt, mark=o, line width=0.5mm] coordinates {
(0.181818, 0.002489)
(0.363636, 0.002489)
(0.545455, 0.002489)
(0.727273, 0.002489)
(0.909091, 0.002489)
(1.090909, 0.001941)
(1.272727, 0.001533)
(1.454545, 0.000629)
(1.636364, 0.000629)
(1.818182, 0.000629)
};

\addplot[name path=ConfigOptimizerUpper, draw=none] plot coordinates {
(0.000000, 0.002489)
(0.911759, 0.002489)
(0.911759, 0.001836)
(1.015116, 0.001836)
(1.015116, 0.001093)
(1.185574, 0.001093)
(1.185574, 0.001093)
(2.000000, 0.001093)
};
\addplot[name path=ConfigOptimizerLower, draw=none] plot coordinates {
(0.000000, 0.002489)
(0.911759, 0.002489)
(0.911759, 0.001836)
(1.015116, 0.001836)
(1.015116, 0.001093)
(1.185574, 0.001093)
(1.185574, 0.001093)
(2.000000, 0.001093)
};
\addplot[fill=LightGreen, fill opacity=0.1, draw=none] fill between[of=ConfigOptimizerUpper and ConfigOptimizerLower];
\addplot[color=LightGreen, line width=0.5mm] plot coordinates {
(0.000000, 0.002489)
(0.911759, 0.002489)
(0.911759, 0.001836)
(1.015116, 0.001836)
(1.015116, 0.001093)
(1.185574, 0.001093)
(1.185574, 0.001093)
(2.000000, 0.001093)
};
\addplot[only marks, color=LightGreen, mark size=3.0pt, mark=diamond, line width=0.5mm] coordinates {
(0.181818, 0.002489)
(0.363636, 0.002489)
(0.545455, 0.002489)
(0.727273, 0.002489)
(0.909091, 0.002489)
(1.090909, 0.001093)
(1.272727, 0.001093)
(1.454545, 0.001093)
(1.636364, 0.001093)
(1.818182, 0.001093)
};

    \end{axis}
\end{tikzpicture}%
}%
\hspace{-3mm}
\subfloat[Changing $k$]{
\begin{tikzpicture}[scale=1]
    \begin{axis}[
        axis line style={line width=0.3pt},
        height=\columnwidth/2.2,
        width=\columnwidth/1.7,
        ylabel={Best feasible cost},
        xmin=0, xmax=2.0,
        ymin=0.00003, ymax=0.005,
        xtick={0, 1,2,3,4,5},
        xticklabel style = {font=\footnotesize},
        xticklabels={0, 1,2,3,4,5},
        ymode=log,
        log basis y=10,
        grid=both,
major grid style={line width=0.3pt, draw=gray!35},
minor grid style={line width=0.2pt, draw=gray!20},
        ytick={1e-4,1e-3},
        yticklabels={$10^{-4}$,$10^{-3}$},
        every axis y label/.style={at={(current axis.north west)},anchor=south west, xshift=0mm, yshift=0mm},
        label style={font=\footnotesize},
        tick label style={font=\footnotesize},
after end axis/.code={
  \node[font=\footnotesize,anchor=north] at (rel axis cs:0.5,-0.09) {Search budget in USD};
},
        label style={font=\footnotesize},
        tick label style={font=\footnotesize},
    ]

\addplot[name path=SCOPEOptimizerUpper, draw=none] plot coordinates {
(0.000000, 0.004335894231)
(0.400000, 0.004335894231)
(0.400000, 0.000135196282)
(0.800000, 0.000135196282)
(0.800000, 0.000034351603)
(1.200000, 0.000034351603)
(1.600000, 0.000034351603)
(2.000000, 0.000034351603)
};
\addplot[name path=SCOPEOptimizerLower, draw=none] plot coordinates {
(0.000000, 0.004335894231)
(0.400000, 0.004335894231)
(0.400000, 0.000135196282)
(0.800000, 0.000135196282)
(0.800000, 0.000034351603)
(1.200000, 0.000034351603)
(1.600000, 0.000034351603)
(2.000000, 0.000034351603)
};
\addplot[fill=RedPPAlt, fill opacity=0.1, draw=none] fill between[of=SCOPEOptimizerUpper and SCOPEOptimizerLower];
\addplot[color=RedPPAlt, line width=0.5mm] plot coordinates {
(0.000000, 0.004335894231)
(0.400000, 0.004335894231)
(0.400000, 0.000135196282)
(0.800000, 0.000135196282)
(0.800000, 0.000034351603)
(1.200000, 0.000034351603)
(1.600000, 0.000034351603)
(2.000000, 0.000034351603)
};
\addplot[only marks, color=RedPPAlt, mark size=3.0pt, mark=square*, line width=0.5mm] coordinates {
(0.400000, 0.000135196282)
(0.800000, 0.000034351603)
(1.200000, 0.000034351603)
(1.600000, 0.000034351603)
(2.000000, 0.000034351603)
};

\addplot[name path=BayesianOptimizerUpper, draw=none] plot coordinates {
(0.000000, 0.004335894231)
(0.400000, 0.004335894231)
(0.400000, 0.000628868910)
(0.800000, 0.000628868910)
(1.200000, 0.000628868910)
(1.200000, 0.000333389615)
(1.600000, 0.000333389615)
(2.000000, 0.000333389615)
};
\addplot[name path=BayesianOptimizerLower, draw=none] plot coordinates {
(0.000000, 0.004335894231)
(0.400000, 0.004335894231)
(0.400000, 0.000628868910)
(0.800000, 0.000628868910)
(1.200000, 0.000628868910)
(1.200000, 0.000333389615)
(1.600000, 0.000333389615)
(2.000000, 0.000333389615)
};
\addplot[fill=Orange, fill opacity=0.1, draw=none] fill between[of=BayesianOptimizerUpper and BayesianOptimizerLower];
\addplot[color=Orange, line width=0.5mm] plot coordinates {
(0.000000, 0.004335894231)
(0.400000, 0.004335894231)
(0.400000, 0.000628868910)
(0.800000, 0.000628868910)
(1.200000, 0.000628868910)
(1.200000, 0.000333389615)
(1.600000, 0.000333389615)
(2.000000, 0.000333389615)
};
\addplot[only marks, color=Orange, mark size=3.0pt, mark=o, line width=0.5mm] coordinates {
(0.400000, 0.000628868910)
(0.800000, 0.000628868910)
(1.200000, 0.000333389615)
(1.600000, 0.000333389615)
(2.000000, 0.000333389615)
};

\addplot[name path=ConfigOptimizerUpper, draw=none] plot coordinates {
(0.000000, 0.004335894231)
(0.400000, 0.004335894231)
(0.400000, 0.000628868910)
(0.800000, 0.000628868910)
(1.200000, 0.000628868910)
(1.600000, 0.000628868910)
(2.000000, 0.000628868910)
};
\addplot[name path=ConfigOptimizerLower, draw=none] plot coordinates {
(0.000000, 0.004335894231)
(0.400000, 0.004335894231)
(0.400000, 0.000628868910)
(0.800000, 0.000628868910)
(1.200000, 0.000628868910)
(1.600000, 0.000628868910)
(2.000000, 0.000628868910)
};
\addplot[fill=LightGreen, fill opacity=0.1, draw=none] fill between[of=ConfigOptimizerUpper and ConfigOptimizerLower];
\addplot[color=LightGreen, line width=0.5mm] plot coordinates {
(0.000000, 0.004335894231)
(0.400000, 0.004335894231)
(0.400000, 0.000628868910)
(0.800000, 0.000628868910)
(1.200000, 0.000628868910)
(1.600000, 0.000628868910)
(2.000000, 0.000628868910)
};
\addplot[only marks, color=LightGreen, mark size=3.0pt, mark=diamond, line width=0.5mm] coordinates {
(0.400000, 0.000628868910)
(0.800000, 0.000628868910)
(1.200000, 0.000628868910)
(1.600000, 0.000628868910)
(2.000000, 0.000628868910)
};

    \end{axis}
\end{tikzpicture}%
}%
\vspace{-3mm}
\caption{Best feasible cost when changing reference configurations and kernels.} \label{fig:sensitivity}
\vspace{-2mm}
\end{figure}
\section{Conclusion}
We study \ourproblem. To solve this problem, we propose \ours, a search algorithm that exploits query-level partial evaluations and provides high-probability feasibility and simple-regret guarantees. In experiments, \ours reduces search cost relative to state-of-the-art baselines while better satisfying the quality constraint across multiple tasks. For future work, we plan to extend \ours to richer representations of cost--quality trade-offs, including multi-objective optimization and multiple constraints.

\begin{acks}
This work was supported by the Ministry of Education, Singapore, under Tier-2 Grant MOE-000761-01.
\end{acks}

\bibliographystyle{ACM-Reference-Format}
\bibliography{refs}

\appendix
\section{Experiment Details}\label{appendix:experiments}

\begin{figure*}[!t]
\captionsetup[subfloat]{labelformat=empty}
\centering
\begin{tikzpicture}
    \begin{customlegend}[legend columns=3,
        legend entries={\ours, \ggours, \gours}
        ,
        legend style={at={(0.45,1.05)},anchor=north,draw=none,font=\footnotesize,column sep=0.1cm}]
        
    \addlegendimage{line width=0.35mm,mark size=3pt,mark=square*,color=RedPPAlt}
    \addlegendimage{line width=0.35mm,mark size=3pt,mark=square,color=LightGreen}
    \addlegendimage{line width=0.35mm,mark size=3pt,mark=asterisk,color=Gray}
    \end{customlegend}
\end{tikzpicture}
\\[-\lineskip]
\hspace{-2mm}
\subfloat[]{
\begin{tikzpicture}[scale=1]
    \begin{axis}[
        axis line style={line width=0.3pt},
        height=\columnwidth/2.0,
        width=\columnwidth/1.4,
        ylabel={Best feasible cost},
after end axis/.code={
  \node[font=\footnotesize,anchor=north] at (rel axis cs:0.5,-0.09) {Search budget in USD};
  \node[font=\footnotesize,anchor=north] at (rel axis cs:0.5,-0.23) {{\bf (a) Text-to-SQL}};
},
        xmin=0, xmax=30.0,
        ymin=0.0007, ymax=0.03,
        xtick={0, 10, 20, 30},
        xticklabel style = {font=\footnotesize},
        xticklabels={0, 10, 20, 30},
        ytick={0.001, 0.003, 0.01, 0.03}, 
        yticklabels={$10^{-3}$, $3\times10^{-3}$, $10^{-2}$, $3\times 10^{-2}$}, 
        ymode=log,
        log basis y=10,
        grid=both,
major grid style={line width=0.3pt, draw=Gray!35},
minor grid style={line width=0.2pt, draw=Gray!20},
        every axis y label/.style={at={(current axis.north west)},anchor=south west, xshift=0mm, yshift=0mm},
        label style={font=\footnotesize},
        tick label style={font=\footnotesize},
        label style={font=\footnotesize},
        tick label style={font=\footnotesize},
    ]

\addplot[name path=SCOPEOptimizerUpper, draw=none] plot coordinates {
(0.000000, 0.024519)
(27.631037, 0.024519)
(27.631037, 0.013915)
(30.000000, 0.013915)
};
\addplot[name path=SCOPEOptimizerLower, draw=none] plot coordinates {
(0.000000, 0.024519)
(17.236446, 0.024519)
(17.236446, 0.015335)
(22.106070, 0.015335)
(22.106070, 0.001519)
(22.790217, 0.001519)
(22.790217, 0.001297)
(24.427746, 0.001297)
(24.427746, 0.001009)
(30.000000, 0.001009)
};
\addplot[fill=Gray, fill opacity=0.1, draw=none] fill between[of=SCOPEOptimizerUpper and SCOPEOptimizerLower];
\addplot[color=Gray, line width=0.5mm] plot coordinates {
(0.000000, 0.024519)
(22.106070, 0.024519)
(22.106070, 0.015335)
(24.193782, 0.015335)
(24.193782, 0.013915)
(27.631037, 0.013915)
(27.631037, 0.010603)
(29.628259, 0.010603)
(29.628259, 0.003994)
(30.000000, 0.003994)
};
\addplot[only marks, color=Gray, mark size=3.0pt, mark=asterisk, line width=0.5mm] coordinates {
(2.727273, 0.024519)
(5.454545, 0.024519)
(8.181818, 0.024519)
(10.909091, 0.024519)
(13.636364, 0.024519)
(16.363636, 0.024519)
(19.090909, 0.024519)
(21.818182, 0.024519)
(24.545455, 0.013915)
(27.272727, 0.013915)
};

\addplot[name path=SCOPEOptimizerMMUpper, draw=none] plot coordinates {
(0.000000, 0.024519)
(11.928993, 0.024519)
(11.928993, 0.005083)
(12.667410, 0.005083)
(12.667410, 0.004505)
(13.503384, 0.004505)
(13.503384, 0.002065)
(14.783646, 0.002065)
(14.783646, 0.001976)
(21.194281, 0.001976)
(21.194281, 0.001963)
(22.092770, 0.001963)
(22.092770, 0.001959)
(30.000000, 0.001959)
};
\addplot[name path=SCOPEOptimizerMMLower, draw=none] plot coordinates {
(0.000000, 0.024519)
(7.698171, 0.024519)
(7.698171, 0.002139)
(8.763607, 0.002139)
(8.763607, 0.001570)
(9.658568, 0.001570)
(9.658568, 0.001483)
(13.740767, 0.001483)
(13.740767, 0.001435)
(24.133253, 0.001435)
(24.133253, 0.001092)
(30.000000, 0.001092)
};
\addplot[fill=LightGreen, fill opacity=0.1, draw=none] fill between[of=SCOPEOptimizerMMUpper and SCOPEOptimizerMMLower];
\addplot[color=LightGreen, line width=0.5mm] plot coordinates {
(0.000000, 0.024519)
(10.962798, 0.024519)
(10.962798, 0.004505)
(12.667410, 0.004505)
(12.667410, 0.001976)
(14.783646, 0.001976)
(14.783646, 0.001895)
(18.797852, 0.001895)
(18.797852, 0.001519)
(30.000000, 0.001519)
};
\addplot[only marks, color=LightGreen, mark size=3.0pt, mark=square, line width=0.5mm] coordinates {
(2.727273, 0.024519)
(5.454545, 0.024519)
(8.181818, 0.024519)
(10.909091, 0.024519)
(13.636364, 0.001976)
(16.363636, 0.001895)
(19.090909, 0.001519)
(21.818182, 0.001519)
(24.545455, 0.001519)
(27.272727, 0.001519)
};

\addplot[name path=SCOPEOptimizerPPUpper, draw=none] plot coordinates {
(0.000000, 0.024519)
(5.930226, 0.024519)
(5.930226, 0.002609)
(6.010100, 0.002609)
(6.010100, 0.002339)
(8.520944, 0.002339)
(8.520944, 0.002067)
(9.552145, 0.002067)
(9.552145, 0.001789)
(10.641727, 0.001789)
(10.641727, 0.001765)
(11.563596, 0.001765)
(11.563596, 0.001716)
(24.726841, 0.001716)
(24.726841, 0.001106)
(30.000000, 0.001106)
};
\addplot[name path=SCOPEOptimizerPPLower, draw=none] plot coordinates {
(0.000000, 0.024519)
(4.657106, 0.024519)
(4.657106, 0.002609)
(4.992228, 0.002609)
(4.992228, 0.001946)
(6.274129, 0.001946)
(6.274129, 0.001604)
(7.207738, 0.001604)
(7.207738, 0.001289)
(8.125889, 0.001289)
(8.125889, 0.001106)
(9.942451, 0.001106)
(9.942451, 0.001009)
(19.784206, 0.001009)
(19.784206, 0.000829)
(20.168931, 0.000829)
(20.168931, 0.000769)
(30.000000, 0.000769)
};
\addplot[fill=RedPPAlt, fill opacity=0.1, draw=none] fill between[of=SCOPEOptimizerPPUpper and SCOPEOptimizerPPLower];
\addplot[color=RedPPAlt, line width=0.5mm] plot coordinates {
(0.000000, 0.024519)
(4.992228, 0.024519)
(4.992228, 0.002609)
(5.930226, 0.002609)
(5.930226, 0.001946)
(8.044100, 0.001946)
(8.044100, 0.001562)
(9.048987, 0.001562)
(9.048987, 0.001289)
(9.942451, 0.001289)
(9.942451, 0.001106)
(10.884634, 0.001106)
(10.884634, 0.001098)
(19.784206, 0.001098)
(19.784206, 0.001009)
(20.541891, 0.001009)
(20.541891, 0.000769)
(30.000000, 0.000769)
};
\addplot[only marks, color=RedPPAlt, mark size=3.0pt, mark=square*, line width=0.5mm] coordinates {
(2.727273, 0.024519)
(5.454545, 0.002609)
(8.181818, 0.001562)
(10.909091, 0.001098)
(13.636364, 0.001098)
(16.363636, 0.001098)
(19.090909, 0.001098)
(21.818182, 0.000769)
(24.545455, 0.000769)
(27.272727, 0.000769)
};

    \end{axis}
\end{tikzpicture}%
}%
\hspace{-2mm}
\subfloat[]{
\begin{tikzpicture}[scale=1]
    \begin{axis}[
        axis line style={line width=0.3pt},
        height=\columnwidth/2.0,
        width=\columnwidth/1.4,
        ylabel={Best feasible cost},
        xmin=0, xmax=5.0,
        ymin=0.0003, ymax=0.005,
        xtick={0, 1,2,3,4,5},
        xticklabel style = {font=\footnotesize},
        xticklabels={0, 1,2,3,4,5},
        ymode=log,
        log basis y=10,
        grid=both,
major grid style={line width=0.3pt, draw=Gray!35},
minor grid style={line width=0.2pt, draw=Gray!20},
        ytick={3e-4,1e-3,3e-3},
        yticklabels={$3\!\times\!10^{-4}$,$10^{-3}$,$3\!\times\!10^{-3}$},
        every axis y label/.style={at={(current axis.north west)},anchor=south west, xshift=0mm, yshift=0mm},
        label style={font=\footnotesize},
        tick label style={font=\footnotesize},
after end axis/.code={
  \node[font=\footnotesize,anchor=north] at (rel axis cs:0.5,-0.09) {Search budget in USD};
  \node[font=\footnotesize,anchor=north] at (rel axis cs:0.5,-0.23) {{\bf (b) Data transformation}};
},
        label style={font=\footnotesize},
        tick label style={font=\footnotesize},
    ]

\addplot[name path=SCOPEOptimizerUpper, draw=none] plot coordinates {
(0.000000, 0.004516)
(1.815252, 0.004516)
(1.815252, 0.003396)
(2.517342, 0.003396)
(2.517342, 0.002722)
(3.673903, 0.002722)
(3.673903, 0.002523)
(4.407199, 0.002523)
(4.407199, 0.002166)
(4.617870, 0.002166)
(4.617870, 0.002065)
(4.873382, 0.002065)
(4.873382, 0.001156)
(5.000000, 0.001156)
};
\addplot[name path=SCOPEOptimizerLower, draw=none] plot coordinates {
(0.000000, 0.004516)
(0.704464, 0.004516)
(0.704464, 0.001373)
(1.168346, 0.001373)
(1.168346, 0.001156)
(3.671989, 0.001156)
(3.671989, 0.000864)
(5.000000, 0.000864)
};
\addplot[fill=Gray, fill opacity=0.1, draw=none] fill between[of=SCOPEOptimizerUpper and SCOPEOptimizerLower];
\addplot[color=Gray, line width=0.5mm] plot coordinates {
(0.000000, 0.004516)
(1.168346, 0.004516)
(1.168346, 0.001373)
(3.671989, 0.001373)
(3.671989, 0.001156)
(4.873382, 0.001156)
(4.873382, 0.001090)
(5.000000, 0.001090)
};
\addplot[only marks, color=Gray, mark size=3.0pt, mark=asterisk, line width=0.5mm] coordinates {
(0.454545, 0.004516)
(0.909091, 0.004516)
(1.363636, 0.001373)
(1.818182, 0.001373)
(2.272727, 0.001373)
(2.727273, 0.001373)
(3.181818, 0.001373)
(3.636364, 0.001373)
(4.090909, 0.001156)
(4.545455, 0.001156)
};

\addplot[name path=SCOPEOptimizerMMUpper, draw=none] plot coordinates {
(0.000000, 0.004516)
(2.053956, 0.004516)
(2.053956, 0.004024)
(2.601319, 0.004024)
(2.601319, 0.003615)
(2.687637, 0.003615)
(2.687637, 0.002638)
(2.896060, 0.002638)
(2.896060, 0.001582)
(4.556213, 0.001582)
(4.556213, 0.000951)
(4.823824, 0.000951)
(4.823824, 0.000933)
(4.916882, 0.000933)
(4.916882, 0.000866)
(5.000000, 0.000866)
};
\addplot[name path=SCOPEOptimizerMMLower, draw=none] plot coordinates {
(0.000000, 0.004516)
(0.911632, 0.004516)
(0.911632, 0.001074)
(1.051519, 0.001074)
(1.051519, 0.000857)
(1.198208, 0.000857)
(1.198208, 0.000571)
(1.264115, 0.000571)
(1.264115, 0.000497)
(2.235518, 0.000497)
(2.235518, 0.000373)
(3.544830, 0.000373)
(3.544830, 0.000262)
(3.785925, 0.000262)
(3.785925, 0.000241)
(5.000000, 0.000241)
};
\addplot[fill=LightGreen, fill opacity=0.1, draw=none] fill between[of=SCOPEOptimizerMMUpper and SCOPEOptimizerMMLower];
\addplot[color=LightGreen, line width=0.5mm] plot coordinates {
(0.000000, 0.004516)
(1.759096, 0.004516)
(1.759096, 0.002906)
(2.390558, 0.002906)
(2.390558, 0.002638)
(2.687637, 0.002638)
(2.687637, 0.000846)
(2.760356, 0.000846)
(2.760356, 0.000653)
(3.344344, 0.000653)
(3.344344, 0.000491)
(4.053135, 0.000491)
(4.053135, 0.000429)
(5.000000, 0.000429)
};
\addplot[only marks, color=LightGreen, mark size=3.0pt, mark=square, line width=0.5mm] coordinates {
(0.454545, 0.004516)
(0.909091, 0.004516)
(1.363636, 0.004516)
(1.818182, 0.002906)
(2.272727, 0.002906)
(2.727273, 0.000846)
(3.181818, 0.000653)
(3.636364, 0.000491)
(4.090909, 0.000429)
(4.545455, 0.000429)
};

\addplot[name path=SCOPEOptimizerPPUpper, draw=none] plot coordinates {
(0.000000, 0.004516)
(0.823436, 0.004516)
(0.823436, 0.001839)
(1.799169, 0.001839)
(1.799169, 0.001117)
(1.866028, 0.001117)
(1.866028, 0.000614)
(2.231526, 0.000614)
(2.231526, 0.000610)
(2.321885, 0.000610)
(2.321885, 0.000566)
(2.495342, 0.000566)
(2.495342, 0.000538)
(2.605036, 0.000538)
(2.605036, 0.000510)
(3.080814, 0.000510)
(3.080814, 0.000482)
(3.769174, 0.000482)
(3.769174, 0.000468)
(4.135718, 0.000468)
(4.135718, 0.000456)
(5.000000, 0.000456)
};
\addplot[name path=SCOPEOptimizerPPLower, draw=none] plot coordinates {
(0.000000, 0.004516)
(0.556960, 0.004516)
(0.556960, 0.001839)
(0.926772, 0.001839)
(0.926772, 0.000614)
(1.151860, 0.000614)
(1.151860, 0.000459)
(1.210739, 0.000459)
(1.210739, 0.000424)
(1.870783, 0.000424)
(1.870783, 0.000392)
(2.077924, 0.000392)
(2.077924, 0.000327)
(2.996593, 0.000327)
(2.996593, 0.000324)
(3.008732, 0.000324)
(3.008732, 0.000317)
(3.039793, 0.000317)
(3.039793, 0.000305)
(5.000000, 0.000305)
};
\addplot[fill=RedPPAlt, fill opacity=0.1, draw=none] fill between[of=SCOPEOptimizerPPUpper and SCOPEOptimizerPPLower];
\addplot[color=RedPPAlt, line width=0.5mm] plot coordinates {
(0.000000, 0.004516)
(0.594944, 0.004516)
(0.594944, 0.001839)
(1.151860, 0.001839)
(1.151860, 0.000614)
(1.866028, 0.000614)
(1.866028, 0.000484)
(2.077924, 0.000484)
(2.077924, 0.000392)
(2.236652, 0.000392)
(2.236652, 0.000363)
(2.587799, 0.000363)
(2.587799, 0.000356)
(2.765463, 0.000356)
(2.765463, 0.000355)
(2.883440, 0.000355)
(2.883440, 0.000329)
(2.996593, 0.000329)
(2.996593, 0.000327)
(3.008732, 0.000327)
(3.008732, 0.000324)
(4.581567, 0.000324)
(4.581567, 0.000310)
(5.000000, 0.000310)
};
\addplot[only marks, color=RedPPAlt, mark size=3.0pt, mark=square*, line width=0.5mm] coordinates {
(0.454545, 0.004516)
(0.909091, 0.001839)
(1.363636, 0.000614)
(1.818182, 0.000614)
(2.272727, 0.000363)
(2.727273, 0.000356)
(3.181818, 0.000324)
(3.636364, 0.000324)
(4.090909, 0.000324)
(4.545455, 0.000324)
};

    \end{axis}
\end{tikzpicture}%
}%
\hspace{-2mm}
\subfloat[]{
\begin{tikzpicture}[scale=1]
    \begin{axis}[
        axis line style={line width=0.3pt},
        height=\columnwidth/2.0,
        width=\columnwidth/1.4,
        ylabel={Best feasible cost},
        xmin=0, xmax=2.0,
        ymin=0.00005, ymax=0.005,
        xtick={0, 0.5, 1, 1.5, 2},
        xticklabel style = {font=\footnotesize},
        xticklabels={0, 0.5, 1, 1.5, 2},
        ytick={0.0001, 0.0003, 0.001, 0.003}, 
        yticklabels={$10^{-4}$, $3\times 10^{-4}$, $10^{-3}$, $3\times 10^{-3}$}, 
        ymode=log,
        log basis y=10,
        grid=both,
major grid style={line width=0.3pt, draw=Gray!35},
minor grid style={line width=0.2pt, draw=Gray!20},
        every axis y label/.style={at={(current axis.north west)},anchor=south west, xshift=0mm, yshift=0mm},
        label style={font=\footnotesize},
        tick label style={font=\footnotesize},
after end axis/.code={
  \node[font=\footnotesize,anchor=north] at (rel axis cs:0.5,-0.09) {Search budget in USD};
  \node[font=\footnotesize,anchor=north] at (rel axis cs:0.5,-0.23) {{\bf (c) Data imputation}};
},
        label style={font=\footnotesize},
        tick label style={font=\footnotesize},
    ]

\addplot[name path=SCOPEOptimizerUpper, draw=none] plot coordinates {
(0.000000, 0.004336)
(0.998018, 0.004336)
(0.998018, 0.000967)
(1.069968, 0.000967)
(1.069968, 0.000796)
(1.330479, 0.000796)
(1.330479, 0.000707)
(1.592457, 0.000707)
(1.592457, 0.000433)
(2.000000, 0.000433)
};
\addplot[name path=SCOPEOptimizerLower, draw=none] plot coordinates {
(0.000000, 0.004336)
(0.590422, 0.004336)
(0.590422, 0.001451)
(0.735358, 0.001451)
(0.735358, 0.000796)
(0.854087, 0.000796)
(0.854087, 0.000707)
(1.069968, 0.000707)
(1.069968, 0.000413)
(1.330479, 0.000413)
(1.330479, 0.000363)
(1.518018, 0.000363)
(1.518018, 0.000333)
(2.000000, 0.000333)
};
\addplot[fill=Gray, fill opacity=0.1, draw=none] fill between[of=SCOPEOptimizerUpper and SCOPEOptimizerLower];
\addplot[color=Gray, line width=0.5mm] plot coordinates {
(0.000000, 0.004336)
(0.741162, 0.004336)
(0.741162, 0.001451)
(0.854087, 0.001451)
(0.854087, 0.000796)
(1.069968, 0.000796)
(1.069968, 0.000707)
(1.330479, 0.000707)
(1.330479, 0.000413)
(1.518018, 0.000413)
(1.518018, 0.000363)
(2.000000, 0.000363)
};
\addplot[only marks, color=Gray, mark size=3.0pt, mark=asterisk, line width=0.5mm] coordinates {
(0.181818, 0.004336)
(0.363636, 0.004336)
(0.545455, 0.004336)
(0.727273, 0.004336)
(0.909091, 0.000796)
(1.090909, 0.000707)
(1.272727, 0.000707)
(1.454545, 0.000413)
(1.636364, 0.000363)
(1.818182, 0.000363)
};

\addplot[name path=SCOPEOptimizerMMUpper, draw=none] plot coordinates {
(0.000000, 0.004336)
(0.601365, 0.004336)
(0.601365, 0.001241)
(0.663494, 0.001241)
(0.663494, 0.001168)
(0.695780, 0.001168)
(0.695780, 0.000621)
(0.742812, 0.000621)
(0.742812, 0.000308)
(0.966672, 0.000308)
(0.966672, 0.000285)
(1.475874, 0.000285)
(1.475874, 0.000176)
(2.000000, 0.000176)
};
\addplot[name path=SCOPEOptimizerMMLower, draw=none] plot coordinates {
(0.000000, 0.004336)
(0.562416, 0.004336)
(0.562416, 0.001168)
(0.591483, 0.001168)
(0.591483, 0.000176)
(0.826407, 0.000176)
(0.826407, 0.000055)
(0.869898, 0.000055)
(0.869898, 0.000034)
(2.000000, 0.000034)
};
\addplot[name path=SCOPEOptimizerMMLower, draw=none] plot coordinates {
(0.000000, 0.004336)
(0.562416, 0.004336)
(0.562416, 0.001168)
(0.591483, 0.001168)
(0.591483, 0.000176)
(0.826407, 0.000176)
(0.826407, 0.000055)
(0.869898, 0.000055)
(0.869898, 0.000034)
(2.000000, 0.000034)
};
\addplot[fill=LightGreen, fill opacity=0.1, draw=none] fill between[of=SCOPEOptimizerMMUpper and SCOPEOptimizerMMLower];
\addplot[color=LightGreen, line width=0.5mm] plot coordinates {
(0.000000, 0.004336)
(0.569374, 0.004336)
(0.569374, 0.001239)
(0.591483, 0.001239)
(0.591483, 0.001168)
(0.663494, 0.001168)
(0.663494, 0.000287)
(0.826407, 0.000287)
(0.826407, 0.000176)
(0.837364, 0.000176)
(0.837364, 0.000062)
(0.869898, 0.000062)
(0.869898, 0.000055)
(2.000000, 0.000055)
};
\addplot[only marks, color=LightGreen, mark size=3.0pt, mark=square, line width=0.5mm] coordinates {
(0.181818, 0.004336)
(0.363636, 0.004336)
(0.545455, 0.004336)
(0.727273, 0.000287)
(0.909091, 0.000055)
(1.090909, 0.000055)
(1.272727, 0.000055)
(1.454545, 0.000055)
(1.636364, 0.000055)
(1.818182, 0.000055)
};

\addplot[name path=SCOPEOptimizerPPUpper, draw=none] plot coordinates {
(0.000000, 0.004336)
(0.364637, 0.004336)
(0.364637, 0.000824)
(0.469961, 0.000824)
(0.469961, 0.000620)
(0.579613, 0.000620)
(0.579613, 0.000536)
(0.741463, 0.000536)
(0.741463, 0.000203)
(1.119869, 0.000203)
(1.119869, 0.000183)
(1.166371, 0.000183)
(1.166371, 0.000135)
(1.174879, 0.000135)
(1.174879, 0.000112)
(2.000000, 0.000112)
};
\addplot[name path=SCOPEOptimizerPPLower, draw=none] plot coordinates {
(0.000000, 0.004336)
(0.247597, 0.004336)
(0.247597, 0.000238)
(0.323329, 0.000238)
(0.323329, 0.000112)
(1.174879, 0.000112)
(1.174879, 0.000055)
(1.202525, 0.000055)
(1.202525, 0.000040)
(1.207589, 0.000040)
(1.207589, 0.000024)
(2.000000, 0.000024)
};
\addplot[fill=RedPPAlt, fill opacity=0.1, draw=none] fill between[of=SCOPEOptimizerPPUpper and SCOPEOptimizerPPLower];
\addplot[color=RedPPAlt, line width=0.5mm] plot coordinates {
(0.000000, 0.004336)
(0.273729, 0.004336)
(0.273729, 0.000278)
(0.368482, 0.000278)
(0.368482, 0.000203)
(0.741463, 0.000203)
(0.741463, 0.000112)
(2.000000, 0.000112)
};
\addplot[only marks, color=RedPPAlt, mark size=3.0pt, mark=square*, line width=0.5mm] coordinates {
(0.181818, 0.004336)
(0.363636, 0.000278)
(0.545455, 0.000203)
(0.727273, 0.000203)
(0.909091, 0.000112)
(1.090909, 0.000112)
(1.272727, 0.000112)
(1.454545, 0.000112)
(1.636364, 0.000112)
(1.818182, 0.000112)
};
    
    \end{axis}
\end{tikzpicture}%
}%
\vspace{-5mm}
\caption{Best feasible cost of \ours and its variants.} \label{fig:main-abl}
\vspace{-1mm}
\end{figure*}

\stitle{Candidate LLMs}
The candidate LLMs used in the experiments are listed in Table~\ref{tab:llm-costs}. The pricing values are obtained from the official OpenAI, Google, Anthropic, and DeepInfra platforms as of the submission date. According to these platforms, the cost incurred by invoking an LLM equals (\# input tokens) $\times$ (input price) $+$ (\# output tokens) $\times$ (output price).

\begin{table}[h]
\centering
\caption{Candidate LLMs $\cM$ (price: USD per 1M tokens).}
\label{tab:llm-costs}
\small
\begin{tabular}{lrr}
\toprule
\textbf{Model} & \textbf{Input Price} & \textbf{Output Price} \\
\midrule
\textbf{GPT-5.2}                & \textbf{\$1.75}  & \textbf{\$14.00} \\
GPT-5-mini                & \$0.25  & \$2.00 \\
GPT-5-nano                & \$0.05  & \$0.40 \\
GPT-4.1               & \$2.00  & \$8.00  \\
GPT-4.1 Mini          & \$0.40  & \$1.60  \\
GPT-4.1 Nano          & \$0.10  & \$0.40  \\
Gemini 3 Flash      & \$0.50  & \$3.00  \\
Gemini 2.5 Flash      & \$0.30  & \$2.50  \\
Gemini 2.5 Flash-Lite & \$0.10  & \$0.40  \\
Gemini 2.0 Flash-Lite & \$0.08 & \$0.30  \\
Claude Haiku 4.5     & \$1.00  & \$5.00 \\
Claude Haiku 3.5      & \$0.80  & \$4.00  \\
Claude Haiku 3        & \$0.25  & \$1.25  \\
DeepSeek-V3.2              & \$0.26  & \$0.39 \\
DeepSeek-V3.1-Terminus     & \$0.21  & \$0.79 \\
Qwen3-235B-A22B        & \$0.07 & \$0.46 \\
Qwen3-Next-80B-A3B     & \$0.09  & \$1.10  \\
gemma-3-27b                & \$0.09  & \$0.16  \\
gemma-3-12b                & \$0.04  & \$0.13  \\
gemma-3-4b                & \$0.04  & \$0.08  \\
Mistral-Small-3.2      & \$0.08 & \$0.20  \\
Mistral-Small-3        & \$0.05  & \$0.08  \\
Mistral-Nemo           & \$0.02  & \$0.04  \\
\bottomrule
\end{tabular}
\end{table}

\stitle{Adaptation details}
\llambo uses natural language instructions to prompt an LLM to emulate the BO process, where each instruction includes metadata about the compound AI system, the dataset, the candidate LLMs, and the past observation history. We adapt the instruction template from the original paper~\cite{liu2024large} to extend it to the constrained setting. The underlying LLM for estimation is \texttt{GPT-5.2}. \llmselect~\cite{chen2025optimizingmodelselectioncompound} maximizes the quality $s$ by sequentially updating the LLM configuration starting from a random configuration, without considering cost. In addition, it optionally adopts an LLM diagnostician to analyze the intermediate quality of each module to guide the search. We remove this diagnostician since such intermediate quality is unavailable. \abacus conducts a bandit-style sequential search on sampled query subsets of $\cQ$, but its core mechanism requires knowing the model-wise quality of each module. To obtain this, in each iteration, we evaluate two configurations, where one is obtained from the other by changing the model choice in the module the algorithm is currently searching over. For random search, we iteratively evaluate a configuration randomly sampled from $\Theta$ without replacement on the entire dataset $\cQ$ until the search budget is exhausted.

\section{Ablation and Scalability Studies}\label{appendix:ablation}
In this section, we conduct ablation studies on key components and parameter settings of \ours and scalability evaluations on larger datasets.

\stitle{Ablation study}
We compare \ours with two variants, \ggours and \gours. 
\ggours replaces $\Theta_{\rm init}$ in Line~\ref{alg:scope-main:bc-sample-theta} of the \warmstart subroutine (Algorithm~\ref{alg:bound-calibration}) with a set of configurations sampled uniformly at random from $\Theta$, with the same size as $\Theta_{\rm init}$. 
\gours removes \warmstart and Line~\ref{alg:scope-main:break} from Algorithm~\ref{alg:scope-main}, and sets $t_0=0$. As a result, neither \ggours nor \gours requires $\theta_{\rm base}$ as input, and \gours evaluates the entire dataset in every iteration $i$ until the budget is exhausted.

Figure~\ref{fig:main-abl} reports the best feasible cost $c_{\rm bf}(\budget)$ versus the search budget $\budget$. We also conduct experiments varying other input parameters; across budgets and tasks, all three methods exhibit zero violations, and the output of \ours remains unchanged for $\alpha\in [1/5, 1/2-10^{-9}]$ and $\delta\in [10^{-5}, 10^{-3}]$.
As shown by \ggours, replacing $\Theta_{\rm init}$ in~\eqref{eq:theta-1-def} with a random configuration makes \ggours improve upon $\btheta_0$ more slowly than \ours at small budgets. For example, in Text-to-SQL, \ours reduces $c_{\rm bf}$ to $2.6\times10^{-3}$ by $\budget=5.5$, whereas \ggours first improves upon $\btheta_0$ only at $\budget=11.0$. This is because the randomly sampled configurations can be more expensive, leading to a higher initialization cost and delaying subsequent search progress in terms of budget. However, after surpassing $\btheta_0$, \ggours often converges faster than \ours. For instance, in data imputation, \ggours reduces the cost to $5.5\times 10^{-5}$ before $\budget=0.9$, even lower than \ours's best feasible cost at $\budget_{\rm max}$. To explain, a random configuration set can cover $\Theta$ more broadly, which may improve \warmstart over diverse configurations and help identify better candidates, although this advantage is unstable due to sampling randomness.
In contrast, \gours converges much more slowly than \ours. For instance, at $\budget_{\rm max}$, its best feasible cost is about $3-5\times$ higher than that of \ours, yet still lower than that of \BO, across tasks. While the guarantees in Theorem~\ref{thm:main-T-ours} can also apply to \gours if $Q=1$, \gours discards query-level signal exploitation and thus falls back to dataset-level evaluations, inheriting the cost inefficiency of existing competitors.

\begin{figure}[!t]
\captionsetup[subfloat]{labelformat=empty}
\centering
\begin{tikzpicture}
    \begin{customlegend}[legend columns=4,
        legend entries={\ours, \BO, \config, \llambo}
        ,
        legend style={at={(0.45,1.05)},anchor=north,draw=none,font=\footnotesize,column sep=0.1cm}]
        
    \addlegendimage{line width=0.35mm,mark size=3pt,mark=square*,color=RedPPAlt}
    \addlegendimage{line width=0.35mm,mark size=3pt,mark=o,color=Orange}
    \addlegendimage{line width=0.35mm,mark size=3pt,mark=diamond,color=LightGreen}
    \addlegendimage{line width=0.35mm,mark size=3pt,mark=triangle,color=Green}
    \end{customlegend}

    \begin{customlegend}[legend columns=4,
        legend entries={\abacus, \llmselect, \safeopt, \rand}
        ,
        legend style={at={(0.45,0.5)},anchor=north,draw=none,font=\footnotesize,column sep=0.1cm}]
        
    \addlegendimage{line width=0.35mm,mark size=3pt,mark=pentagon,color=LightBlue}
    \addlegendimage{line width=0.35mm,mark size=3pt,mark=star,color=Pink}
    \addlegendimage{line width=0.35mm,mark size=3pt,mark=triangle,mark options={rotate=180,solid,fill=white},color=DeepBlue}
    \addlegendimage{line width=0.35mm,mark size=3pt,mark=x,color=Gray}
    \end{customlegend}
\end{tikzpicture}
\\[-\lineskip]
\subfloat[]{
\begin{tikzpicture}[scale=1]
    \begin{axis}[
        axis line style={line width=0.3pt},
        height=\columnwidth/2.0,
        width=\columnwidth/1.4,
        ylabel={Best feasible cost},
        xmin=0, xmax=10.0,
        ymin=0.0001, ymax=0.01,
        xtick={0, 2, 4, 6, 8, 10},
        xticklabel style = {font=\footnotesize},
        xticklabels={0, 2, 4, 6, 8, 10},
        ymode=log,
        log basis y=10,
        grid=both,
major grid style={line width=0.3pt, draw=gray!35},
minor grid style={line width=0.2pt, draw=gray!20},
        ytick={0.0001,0.001,0.01},
        yticklabels={$10^{-4}$,$10^{-3}$,$10^{-2}$},
        every axis y label/.style={at={(current axis.north west)},anchor=south west, xshift=0mm, yshift=0mm},
        label style={font=\footnotesize},
        tick label style={font=\footnotesize},
after end axis/.code={
  \node[font=\footnotesize,anchor=north] at (rel axis cs:0.5,-0.09) {budget in USD};
},
        label style={font=\footnotesize},
        tick label style={font=\footnotesize},
    ]

\addplot[name path=SCOPEOptimizerUpper, draw=none] plot coordinates {
(0.000000, 0.005906)
(3.700985, 0.005906)
(3.700985, 0.000354)
(4.004720, 0.000354)
(4.004720, 0.000166)
(5.696888, 0.000166)
(5.696888, 0.000139)
(10.000000, 0.000139)
};
\addplot[name path=SCOPEOptimizerLower, draw=none] plot coordinates {
(0.000000, 0.005906)
(3.700985, 0.005906)
(3.700985, 0.000354)
(4.004720, 0.000354)
(4.004720, 0.000166)
(5.696888, 0.000166)
(5.696888, 0.000139)
(10.000000, 0.000139)
};
\addplot[fill=RedPPAlt, fill opacity=0.1, draw=none] fill between[of=SCOPEOptimizerUpper and SCOPEOptimizerLower];
\addplot[color=RedPPAlt, line width=0.5mm] plot coordinates {
(0.000000, 0.005906)
(3.700985, 0.005906)
(3.700985, 0.000354)
(4.004720, 0.000354)
(4.004720, 0.000166)
(5.696888, 0.000166)
(5.696888, 0.000139)
(10.000000, 0.000139)
};
\addplot[only marks, color=RedPPAlt, mark size=3.0pt, mark=square*, line width=0.5mm] coordinates {
(0.909091, 0.005906)
(1.818182, 0.005906)
(2.727273, 0.005906)
(3.636364, 0.005906)
(4.545455, 0.000166)
(5.454545, 0.000166)
(6.363636, 0.000139)
(7.272727, 0.000139)
(8.181818, 0.000139)
(9.090909, 0.000139)
};

\addplot[name path=BayesianOptimizerUpper, draw=none] plot coordinates {
(0.000000, 0.005906)
(9.913838, 0.005906)
(9.913838, 0.000801)
(10.000000, 0.000801)
};
\addplot[name path=BayesianOptimizerLower, draw=none] plot coordinates {
(0.000000, 0.005906)
(9.913838, 0.005906)
(9.913838, 0.000801)
(10.000000, 0.000801)
};
\addplot[fill=Orange, fill opacity=0.1, draw=none] fill between[of=BayesianOptimizerUpper and BayesianOptimizerLower];
\addplot[color=Orange, line width=0.5mm] plot coordinates {
(0.000000, 0.005906)
(9.913838, 0.005906)
(9.913838, 0.000801)
(10.000000, 0.000801)
};
\addplot[only marks, color=Orange, mark size=3.0pt, mark=o, line width=0.5mm] coordinates {
(0.909091, 0.005906)
(1.818182, 0.005906)
(2.727273, 0.005906)
(3.636364, 0.005906)
(4.545455, 0.005906)
(5.454545, 0.005906)
(6.363636, 0.005906)
(7.272727, 0.005906)
(8.181818, 0.005906)
(9.090909, 0.005906)
};

\addplot[name path=ConfigOptimizerUpper, draw=none] plot coordinates {
(0.000000, 0.005906)
(2.774097, 0.005906)
(2.774097, 0.000801)
(10.000000, 0.000801)
};
\addplot[name path=ConfigOptimizerLower, draw=none] plot coordinates {
(0.000000, 0.005906)
(2.774097, 0.005906)
(2.774097, 0.000801)
(10.000000, 0.000801)
};
\addplot[fill=LightGreen, fill opacity=0.1, draw=none] fill between[of=ConfigOptimizerUpper and ConfigOptimizerLower];
\addplot[color=LightGreen, line width=0.5mm] plot coordinates {
(0.000000, 0.005906)
(2.774097, 0.005906)
(2.774097, 0.000801)
(10.000000, 0.000801)
};
\addplot[only marks, color=LightGreen, mark size=3.0pt, mark=diamond, line width=0.5mm] coordinates {
(0.909091, 0.005906)
(1.818182, 0.005906)
(2.727273, 0.005906)
(3.636364, 0.000801)
(4.545455, 0.000801)
(5.454545, 0.000801)
(6.363636, 0.000801)
(7.272727, 0.000801)
(8.181818, 0.000801)
(9.090909, 0.000801)
};

\addplot[name path=SafeOptimizerUpper, draw=none] plot coordinates {
(0.000000, 0.005906)
(9.913838, 0.005906)
(9.913838, 0.000801)
(10.000000, 0.000801)
};
\addplot[name path=SafeOptimizerLower, draw=none] plot coordinates {
(0.000000, 0.005906)
(9.913838, 0.005906)
(9.913838, 0.000801)
(10.000000, 0.000801)
};
\addplot[fill=DeepBlue, fill opacity=0.1, draw=none] fill between[of=SafeOptimizerUpper and SafeOptimizerLower];
\addplot[color=DeepBlue, line width=0.5mm] plot coordinates {
(0.000000, 0.005906)
(9.913838, 0.005906)
(9.913838, 0.000801)
(10.000000, 0.000801)
};
\addplot[only marks, color=DeepBlue, mark size=3.0pt, mark=triangle, line width=0.5mm, mark options={rotate=180, solid, fill=white}] coordinates {
(0.909091, 0.005906)
(1.818182, 0.005906)
(2.727273, 0.005906)
(3.636364, 0.005906)
(4.545455, 0.005906)
(5.454545, 0.005906)
(6.363636, 0.005906)
(7.272727, 0.005906)
(8.181818, 0.005906)
(9.090909, 0.005906)
};

\addplot[name path=RandomSearcherUpper, draw=none] plot coordinates {
(0.000000, 0.005906)
(2.774097, 0.005906)
(2.774097, 0.000801)
(10.000000, 0.000801)
};
\addplot[name path=RandomSearcherLower, draw=none] plot coordinates {
(0.000000, 0.005906)
(2.774097, 0.005906)
(2.774097, 0.000801)
(10.000000, 0.000801)
};
\addplot[fill=Gray, fill opacity=0.1, draw=none] fill between[of=RandomSearcherUpper and RandomSearcherLower];
\addplot[color=Gray, line width=0.5mm] plot coordinates {
(0.000000, 0.005906)
(2.774097, 0.005906)
(2.774097, 0.000801)
(10.000000, 0.000801)
};
\addplot[only marks, color=Gray, mark size=3.0pt, mark=x, line width=0.5mm] coordinates {
(0.909091, 0.005906)
(1.818182, 0.005906)
(2.727273, 0.005906)
(3.636364, 0.000801)
(4.545455, 0.000801)
(5.454545, 0.000801)
(6.363636, 0.000801)
(7.272727, 0.000801)
(8.181818, 0.000801)
(9.090909, 0.000801)
};

\addplot[name path=LLAMBOOptimizerUpper, draw=none] plot coordinates {
(0.000000, 0.005906)
(2.841191, 0.005906)
(2.841191, 0.001239)
(5.375346, 0.001239)
(5.375346, 0.000279)
(10.000000, 0.000279)
};
\addplot[name path=LLAMBOOptimizerLower, draw=none] plot coordinates {
(0.000000, 0.005906)
(2.841191, 0.005906)
(2.841191, 0.001239)
(5.375346, 0.001239)
(5.375346, 0.000279)
(10.000000, 0.000279)
};
\addplot[fill=Green, fill opacity=0.1, draw=none] fill between[of=LLAMBOOptimizerUpper and LLAMBOOptimizerLower];
\addplot[color=Green, line width=0.5mm] plot coordinates {
(0.000000, 0.005906)
(2.841191, 0.005906)
(2.841191, 0.001239)
(5.375346, 0.001239)
(5.375346, 0.000279)
(10.000000, 0.000279)
};
\addplot[only marks, color=Green, mark size=3.0pt, mark=triangle, line width=0.5mm] coordinates {
(0.909091, 0.005906)
(1.818182, 0.005906)
(2.727273, 0.005906)
(3.636364, 0.001239)
(4.545455, 0.001239)
(5.454545, 0.000279)
(6.363636, 0.000279)
(7.272727, 0.000279)
(8.181818, 0.000279)
(9.090909, 0.000279)
};

\addplot[name path=LLMSelectorOptimizerUpper, draw=none] plot coordinates {
(0.000000, 0.005906)
(8.414596, 0.005906)
(8.414596, 0.003670)
(10.000000, 0.003670)
};
\addplot[name path=LLMSelectorOptimizerLower, draw=none] plot coordinates {
(0.000000, 0.005906)
(8.414596, 0.005906)
(8.414596, 0.003670)
(10.000000, 0.003670)
};
\addplot[fill=Pink, fill opacity=0.1, draw=none] fill between[of=LLMSelectorOptimizerUpper and LLMSelectorOptimizerLower];
\addplot[color=Pink, line width=0.5mm] plot coordinates {
(0.000000, 0.005906)
(8.414596, 0.005906)
(8.414596, 0.003670)
(10.000000, 0.003670)
};
\addplot[only marks, color=Pink, mark size=3.0pt, mark=star, line width=0.5mm] coordinates {
(0.909091, 0.005906)
(1.818182, 0.005906)
(2.727273, 0.005906)
(3.636364, 0.005906)
(4.545455, 0.005906)
(5.454545, 0.005906)
(6.363636, 0.005906)
(7.272727, 0.005906)
(8.181818, 0.005906)
(9.090909, 0.003670)
};

\addplot[name path=AbacusOptimizerUpper, draw=none] plot coordinates {
(0.000000, 0.005906)
(9.038211, 0.005906)
(9.038211, 0.001209)
(10.000000, 0.001209)
};
\addplot[name path=AbacusOptimizerLower, draw=none] plot coordinates {
(0.000000, 0.005906)
(9.038211, 0.005906)
(9.038211, 0.001209)
(10.000000, 0.001209)
};
\addplot[fill=LightBlue, fill opacity=0.1, draw=none] fill between[of=AbacusOptimizerUpper and AbacusOptimizerLower];
\addplot[color=LightBlue, line width=0.5mm] plot coordinates {
(0.000000, 0.005906)
(9.038211, 0.005906)
(9.038211, 0.001209)
(10.000000, 0.001209)
};
\addplot[only marks, color=LightBlue, mark size=3.0pt, mark=pentagon, line width=0.5mm] coordinates {
(0.909091, 0.005906)
(1.818182, 0.005906)
(2.727273, 0.005906)
(3.636364, 0.005906)
(4.545455, 0.005906)
(5.454545, 0.005906)
(6.363636, 0.005906)
(7.272727, 0.005906)
(8.181818, 0.005906)
(9.090909, 0.001209)
};

    \end{axis}
\end{tikzpicture}%
}%

\caption{Best feasible cost on entity resolution with 2293 queries.} \label{fig:scalability}
\vspace{-2mm}
\end{figure}

\stitle{Scalability evaluation}
To further validate \ours's performance, we experiment on an additional task, entity resolution, with a 3-module UniDM-ER system and dataset Amazon-google-dev of 2293 queries~\cite{qian2024unidm}. Figure~\ref{fig:scalability} shows the best feasible cost of each method over the search budget.
At the maximum budget, \ours is superior to other competitors in this new task, similar to the trends reflected in the original experiments. Specifically, the best feasible cost of \ours is 97\% lower than that of the reference configuration, and 50\% lower than that of the second-best method, \llambo.

\section{Proofs}\label{appendix:proof}

\subsection{Preliminaries of Proofs}

\stitle{Algorithm}
Recall that at time $t$, the observation history in an algorithm is $H_t=\{(\btheta_i, q_i, y_{c,i}, y_{g,i})\}_{i=1}^t$, with $H_0=\emptyset$. Let $\cF_t$ be the $\sigma$-algebra generated by $H_t$. We formally define an algorithm $\pi$ (e.g., \ours) as a tuple $(\tau, (\pi_t)_{t=1}^{\infty}, \btheta_{\rm{out}})$, where
\begin{itemize}[topsep=2pt,itemsep=1pt,parsep=0pt,partopsep=0pt,leftmargin=11pt]
    \item  $\tau$ is a stopping time with respect to the filtration $(\mathcal{F}_t)_{t=0}^{\infty}$;
    \item  $(\pi_t)_{t=1}^{\infty}$ is a sequence of $\mathcal{F}_{t-1}$-measurable mappings. For each $t \ge 1$, $\pi_t$ maps $H_{t-1}$ to a probability distribution over $\Theta\times \cQ$, denoted by $\pi_t(\cdot | H_{t-1})$. 
    At each time $t$ (for $1 \le t \leq \tau$), the pair $(\btheta_t, q_t)$ is sampled from $\pi_t(\cdot | H_{t-1})$, yielding observations $y_{c,t}$ and $y_{g,t}$;
    \item $\btheta_{\rm{out}}\in \Theta$ is an $\mathcal{F}_{\tau}$-measurable random variable representing the algorithm's recommendation of the best configuration.
\end{itemize}

\subsection{Proof of Theorem~\ref{cla:bounds}}

We first establish a query-wise confidence bound for each GP (one per query $q$ and metric $\zeta$),
and then aggregate across queries to obtain a confidence bound for the dataset-average metric. Recall $J_{q,t}=|\cJ_{q,t}|$ and let $j_q(1)<\cdots<j_q(J)$ be the times when query $q$ is evaluated up to time $t$.
For fixed $(q,\zeta,t)$, define $\btheta^{(a)}=\btheta_{j_q(a)}$ and
$y^{(a)}=y_{\zeta,j_q(a)}$ for $a\in[J]$, and let $J=J_{q,t}$.

We use the following standard kernelized self-normalized inequality (see, e.g., Theorem 1 and Lemma 1 of~\citep{chowdhury2017kernelized}).

\begin{lemma}[Kernelized self-normalization]\label{lem:kernel-self-normalized}
Let $(\btheta^{(a)},\eta^{(a)})_{a\ge 1}$ be a (possibly adaptive) sequence such that
$\eta^{(a)}$ is conditionally $R$-sub-Gaussian w.r.t.\ the filtration generated by the past.
For each $J\ge 1$, define $\Phi_J:\RR^J\to\cH_k$ by $\Phi_J e_a=\varphi(\btheta^{(a)})$,
$V_J=\Phi_J\Phi_J^\top+\lambda I_{\cH_k}$, and
$S_J=\sum_{a=1}^J \eta^{(a)}\varphi(\btheta^{(a)})$.
Then for any $\delta'\in(0,1)$, with probability at least $1-\delta'$,
simultaneously for all $J\ge 1$,
\[
\|S_J\|_{V_J^{-1}}
\le
R\sqrt{2\Big(\mathcal I_J+\log\frac{1}{\delta'}\Big)}
\le
R\sqrt{2\Big(\gamma(J)+\log\frac{1}{\delta'}\Big)},
\]
where $\mathcal I_J=\frac12\log\det(I+\lambda^{-1}K_J)$ and $K_J=\Phi_J^\top\Phi_J$.
\end{lemma}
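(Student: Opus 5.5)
The plan is to reduce everything to the finite-dimensional self-normalized martingale bound of Abbasi-Yadkori et al., and then convert the determinant term into kernel form. Since $\Theta=\cM^N$ is finite and $k$ is SPD, the RKHS $\cH_k$ is finite-dimensional. I will take an explicit feature map $\varphi:\Theta\to\RR^{|\Theta|}$, for example the columns of $\bK_\Theta^{1/2}$, so that $k(\btheta,\btheta')=\varphi(\btheta)^\top\varphi(\btheta')$ and $\|\cdot\|_{\cH_k}$ becomes a Euclidean norm. Then $\Phi_J$ is an honest $|\Theta|\times J$ matrix, $V_J=\Phi_J\Phi_J^\top+\lambda I$ is positive definite, and $S_J=\Phi_J\bm\eta_J$, where $\bm\eta_J=[\eta^{(1)},\dots,\eta^{(J)}]^\top$.

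For the first inequality I would use the method of mixtures. For each $x\in\RR^{|\Theta|}$, define
\[
M_J(x)=\exp\!\Big(\tfrac{1}{R}x^\top S_J-\tfrac12 x^\top\Phi_J\Phi_J^\top x\Big).
\]
By conditional $R$-sub-Gaussianity of $\eta^{(a)}$, and because $\btheta^{(a)}$ is predictable, $M_J(x)$ is a nonnegative supermartingale with $M_0(x)=1$. Integrating against a Gaussian prior $x\sim\cN(0,\lambda^{-1}I)$ gives the mixture
\[
\bar M_J=\Big(\det(\lambda I)/\det(V_J)\Big)^{1/2}\exp\!\Big(\tfrac{1}{2R^2}\|S_J\|_{V_J^{-1}}^2\Big),
\]
which is again a nonnegative supermartingale with $\EE[\bar M_0]=1$. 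Ville's maximal inequality (equivalently, optional stopping at the first $J$ where the bound fails) then gives $\Pr[\exists J:\bar M_J\ge1/\delta']\le\delta'$. Rearranging yields, uniformly in $J$,
\[
\|S_J\|_{V_J^{-1}}^2\le 2R^2\Big(\tfrac12\log\tfrac{\det V_J}{\det(\lambda I)}+\log\tfrac1{\delta'}\Big).
\]
Sylvester's identity $\det(\lambda I+\Phi_J\Phi_J^\top)/\det(\lambda I)=\det(I_J+\lambda^{-1}\Phi_J^\top\Phi_J)$ identifies the log-determinant term with $\mathcal I_J$, which proves the first inequality.

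For the second inequality I need $\mathcal I_J\le\gamma(J)$. Here $\mathcal I_J$ is the log-determinant over the actually queried points $\btheta^{(1)},\dots,\btheta^{(J)}$, and these may repeat, since the same configuration is evaluated on a query several times across iterations.

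I expect this to be the main obstacle. The definition of $\gamma$ in the paper maximizes over subsets $A\subseteq\Theta$, which contain no repeats. A single point repeated $J$ times gives $\mathcal I_J=\frac12\log(1+J/\lambda)$, which grows with $J$, whereas the subset maximum saturates once $J\ge|\Theta|$. So the inequality cannot hold literally under the subset definition. I would resolve this by reading $\gamma(J)$ as the maximum over multisets (sequences) of at most $J$ points, which is Srinivas et al.'s original definition and the one used by Chowdhury and Gopalan. Under that reading the bound is immediate, because the realized sequence is one admissible candidate in the maximum. I would state this interpretation explicitly.

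I would also check that the greedy approximation of $\gamma$ mentioned in the complexity section is computed over multisets. This is needed so that Theorem~\ref{cla:bounds}, which applies this lemma per query $q$ with $\delta'=\delta/(2Q)$ followed by a union bound, remains valid.
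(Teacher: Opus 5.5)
Your proof is correct. The paper does not prove this lemma itself; it only cites the kernelized self-normalized inequality of Chowdhury and Gopalan. Your route is a self-contained instance of the same method-of-mixtures idea, made elementary by the finiteness of $\Theta$. The map $k(\btheta,\cdot)\mapsto \bK_\Theta^{1/2}e_{\btheta}$ is an isometry of $\cH_k$ onto $\RR^{|\Theta|}$ and carries $V_J$ and $S_J$ to their matrix versions, so $\|S_J\|_{V_J^{-1}}$ is unchanged. Abbasi-Yadkori's Gaussian mixture with prior $\cN(0,\lambda^{-1}I)$, Ville's inequality, and Sylvester's identity then give exactly the normalization $\frac12\log\det(I+\lambda^{-1}K_J)$ for every $\lambda>0$, without Gaussian measures on an infinite-dimensional space. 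The citation buys brevity and generality to infinite-dimensional RKHSs, which the paper never needs. Your version buys a checkable proof in precisely the form used in Theorem~\ref{cla:bounds}. One small point: split off the case $R=0$, which the paper allows. There $\eta^{(a)}=0$ almost surely and the bound is trivial, but your $M_J(x)$ divides by $R$.

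Your diagnosis of the second inequality is also right, and it exposes a defect in the paper rather than in your argument. Under the paper's subset definition, $\gamma(J)=\gamma(|\Theta|)$ for all $J\ge|\Theta|$, whereas one configuration observed $J$ times gives $\mathcal I_J=\frac12\log(1+J/\lambda)$. Even the final conclusion fails in that case. There $\|S_J\|^2_{V_J^{-1}}=(\sum_{a\le J}\eta^{(a)})^2/(J+\lambda)$, and for i.i.d.\ $\cN(0,R^2)$ noise this is almost surely unbounded in $J$ by the law of the iterated logarithm. Hence the event that it stays below any fixed constant for all $J$ has probability zero.

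Such repeats genuinely occur in \ours. A candidate can be reselected in later rounds, \warmstart configurations can be re-evaluated, and by pigeonhole some query sees repeated configurations once $J_{q,\tau}>|\Theta|$. The multiset reading repairs the lemma. The same reading is needed in the proof of Lemma~\ref{lem:sum-variance-bound}, where the information gain of the observed, possibly repeated, configurations is bounded by $\gamma(J_{q,\tau})$ ``since the set has size $J_{q,\tau}$''. Your concern about the greedy approximation resolves as you hoped: greedy with replacement is greedy on a monotone submodular objective over $J$ copies of $\Theta$.

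Be aware of the downstream cost, though. Under the multiset definition $\gamma(J)$ is no longer bounded for finite $\Theta$; only $\gamma(J)\le\frac{|\Theta|}{2}\log\bigl(1+J/(\lambda|\Theta|)\bigr)$ holds. So the remark after Theorem~\ref{thm:main-T-ours} that $\gamma(\tau)$ is a constant no longer holds. The step $\gamma(t)\le\gamma(|\Theta|)$ in the proof of Corollary~\ref{cor:exp-sr-budget} also breaks. As a result, the stated convergence rates pick up additional logarithmic factors.
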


Fix $q\in\cQ,~\zeta\in \{c,g\}$. By Assumption~\ref{ass:rkhs},
$\|f\|_{\mathcal H_k}\le B_\zeta$, where $\cH_k$ is the corresponding RKHS. Define $\Phi:\RR^J\to\mathcal H_k$ by $\Phi e_a=\varphi(\btheta^{(a)})$ and $V=\Phi\Phi^\top+\lambda I_{\mathcal H_k}$.
Let $\eta^{(a)}=y^{(a)}-f(\btheta^{(a)})$ and $\eta=[\eta^{(1)},\dots,\eta^{(J)}]^\top$.
By Assumption~\ref{ass:observation}, $(\eta^{(a)})$ are conditionally $R_\zeta$-sub-Gaussian.
As a standard property of the GP update in Definition~\ref{def:gp}, for any $\btheta\in\Theta$,
\begin{equation}\label{eq:krr-identity-correct}
f(\btheta)-\hat\mu_{\zeta,q,t}(\btheta)
=
\lambda\,\langle f, V^{-1}\varphi(\btheta)\rangle_{\mathcal H_k}
-
\langle S, V^{-1}\varphi(\btheta)\rangle_{\mathcal H_k},
\end{equation}
where $S=\sum_{a=1}^J \eta^{(a)}\varphi(\btheta^{(a)})=\Phi\eta$. Using the Woodbury matrix identity, one can verify the standard equality
\begin{equation}\label{eq:sigma-equals-correct}
\hat\sigma^2_{\zeta,q,t}(\btheta)
=
\lambda\,\langle \varphi(\btheta),V^{-1}\varphi(\btheta)\rangle_{\mathcal H_k}.
\end{equation}
Equivalently,
$\|V^{-1/2}\varphi(\btheta)\|_{\mathcal H_k}=\hat\sigma_{\zeta,q,t}(\btheta)/\sqrt{\lambda}$.
Next, the first term in~\eqref{eq:krr-identity-correct} is bounded by
\begin{align}
\left|\lambda\langle f,V^{-1}\varphi(\btheta)\rangle_{\mathcal H_k}\right|&=\lambda\left|\langle V^{-1/2}f,\,V^{-1/2}\varphi(\btheta)\rangle_{\mathcal H_k}\right|\notag\\
&\leq \sqrt{\lambda}\|f\|_{\mathcal H_k}\cdot \|V^{-1/2}\varphi(\btheta)\|_{\mathcal H_k}\notag\\
&\leq B_\zeta\,\hat\sigma_{\zeta,q,t}(\btheta)\label{eq:bias-final-correct},
\end{align}
where the equality is due to self-adjointness of $V^{-1/2}$, the first inequality uses Cauchy--Schwarz and $V\succeq \lambda I_{\mathcal H_k}$, and the second inequality uses \eqref{eq:sigma-equals-correct} and $\|f\|_{\mathcal H_k}\le B_\zeta$. Furthermore, the second term in ~\eqref{eq:krr-identity-correct} is bounded by
\begin{align}
\left|\langle S,V^{-1}\varphi(\btheta)\rangle_{\mathcal H_k}\right| &\leq \|S\|_{V^{-1}}\cdot \|V^{-1/2}\varphi(\btheta)\|_{\mathcal H_k}\notag\\
&=\frac{\|S\|_{V^{-1}}}{\sqrt{\lambda}}\,\hat\sigma_{\zeta,q,t}(\btheta)\notag\\
&\leq \frac{R_\zeta}{\sqrt{\lambda}}
\sqrt{2\Big(\gamma(J)+\log\frac{1}{\delta'}\Big)}
\cdot \hat\sigma_{\zeta,q,t}(\btheta)\label{eq:noise-final-correct},
\end{align}
where the first and last inequalities use Cauchy--Schwarz and Lemma~\ref{lem:kernel-self-normalized} with parameter $\delta'\in(0,1)$, respectively, and the equality is due to~\eqref{eq:sigma-equals-correct}. Combining \eqref{eq:krr-identity-correct}, \eqref{eq:bias-final-correct}, and \eqref{eq:noise-final-correct} yields
\[
\big|\hat\mu_{\zeta,q,t}(\btheta)-\ell_\zeta(\btheta,q)\big|
\le
\Bigg(
B_\zeta
+
\frac{R_\zeta}{\sqrt{\lambda}}
\sqrt{2\Big(\gamma(J)+\log\frac{1}{\delta'}\Big)}
\Bigg)\hat\sigma_{\zeta,q,t}(\btheta).
\]
Since $J=J_{q,t}\le J_{\max,t}$ and $\gamma(\cdot)$ is nondecreasing, $\gamma(J)\le \gamma(J_{\max,t})$, so the same inequality holds after replacing $\gamma(J)$ by $\gamma(J_{\max,t})$.

As there are $Q$ choices of $q$ and two choices of $\zeta$, we set $\delta'=\delta/(2Q)$ and use union bound. Then, with probability at least $1-\delta$, simultaneously for all $(t,q,\btheta,\zeta)$, we have
\[
\big|\hat\mu_{\zeta,q,t}(\btheta)-\ell_\zeta(\btheta,q)\big|
\!\le\!
\underbrace{\Bigg(\!
B_\zeta
+
\frac{R_\zeta}{\sqrt{\lambda}}
\sqrt{2\Big(\gamma(J_{\max,t})+\log\frac{2Q}{\delta}\Big)}
\Bigg)}_{\beta_{\zeta, t}/\sqrt{Q}}
\hat\sigma_{\zeta,q,t}(\btheta).
\]
Averaging over $q\in\cQ$ and applying Cauchy--Schwarz,
\begin{align*}
\big|\bar\mu_{\zeta,t}(\btheta)-\bar\ell_\zeta(\btheta)\big|
&=
\Big|\frac{1}{Q}\sum_{q\in\cQ}\big(\hat\mu_{\zeta,q,t}(\btheta)-\ell_\zeta(\btheta,q)\big)\Big|\\
&\le
\frac{\beta_{\zeta,t}}{\sqrt{Q}\,Q}\sum_{q\in\cQ}\hat\sigma_{\zeta,q,t}(\btheta)\\
&\le
\frac{\beta_{\zeta,t}}{\sqrt{Q}\,Q}\cdot \sqrt{Q}\,\sqrt{\sum_{q\in\cQ}\hat\sigma^2_{\zeta,q,t}(\btheta)}\\
&=
\beta_{\zeta,t}\sqrt{\sum_{q\in\cQ}\frac{\hat\sigma^2_{\zeta,q,t}(\btheta)}{Q^2}}
\ =\
\beta_{\zeta,t}\,\bar\sigma_{\zeta,t}(\btheta).
\end{align*}
Therefore, with probability at least $1-\delta$, simultaneous over
all $t\in\NN$, $q\in\cQ$, $\btheta\in\Theta$, and $\zeta\in\{c,g\}$,~\eqref{eq:lcb-ucb-property} follows immediately from the definitions of $L_{\zeta,t},U_{\zeta,t}$ in \eqref{eq:lcb-ours}. \qed

\subsection{Proof of Theorem~\ref{thm:main-T-ours}}

\stitle{Notations}
Let $\cE_{\mathrm{conf}}$ denote the high-probability event in~\eqref{eq:lcb-ucb-property}, i.e., all confidence intervals are valid.
Let $(\btheta_{\mathrm{out}},\tau)$ denote the output of Algorithm~\ref{alg:scope-main}, where $\btheta_{\mathrm{out}}$ is measurable w.r.t.\ the $\sigma$-algebra produced by the history
$\{(\btheta_t,q_t,y_{c,t},y_{g,t})\}_{t=1}^{\tau}$.
Accordingly, throughout the proof we analyze the execution restricted to times up to $\tau$.

Let $I$ be the number of executed rounds ($i$-loop iterations) whose candidate is evaluated at least once within times $\{t_0+1,\dots,\tau\}$,
indexed by $i\in[I]$ with strictly increasing start times
$t^{(1)}<\cdots<t^{(I)}$ and corresponding candidates $\btheta_{\mathrm{cand}}^{(i)}$.
Here, $t^{(i)}$ denotes the value of the global counter $t$ at the moment Line~\ref{alg:scope-main:select} is executed in round $i$ (before running the $j$-loop). Additionally, let $t^{(I+1)}=\tau$.
By construction, for each $i\in[I]$, we have $\btheta_{t'}=\btheta_{\mathrm{cand}}^{(i)}$ for all
$t'\in\{t^{(i)}+1, \dots, t^{(i+1)}\}$.

Define the set of certified-feasible rounds as
\[
\cI_{\mathrm{feas}}
\ =\
\Big\{
i\in[I]:\ \exists\, t\in[t^{(i)},\,t^{(i+1)}],\
U_{g,t}(\btheta_{\mathrm{cand}}^{(i)})\le 0
\Big\}.
\]
For each $i\in\cI_{\mathrm{feas}}$, define the set of certification-trigger times as
\[
\cT_{\mathrm{feas}}^{(i)}
\!=\!
\Big\{
t\in[t^{(i)}+1,\,t^{(i+1)}]\!:\!
\min\!\big\{
U_{g,t}(\btheta_{\mathrm{cand}}^{(i)}),\!
U_{g,t-1}(\btheta_{\mathrm{cand}}^{(i)})
\big\}\le 0
\Big\},
\]
and define
\[
t_{\mathrm{feas}}^{(i)}
\ \in\
\arg\min_{t\in\cT_{\mathrm{feas}}^{(i)}} U_{c,t}(\btheta_{\mathrm{cand}}^{(i)}),
\]
breaking ties arbitrarily.
Note that for every $i\in\cI_{\mathrm{feas}}$, $\cT_{\mathrm{feas}}^{(i)}\neq\emptyset$: if
$U_{g,t}(\btheta_{\mathrm{cand}}^{(i)})\le 0$ holds at some $t\in[t^{(i)},t^{(i+1)}]$, then taking
$t'=\max\{t,\,t^{(i)}+1\}\in[t^{(i)}+1,t^{(i+1)}]$ gives
$\min\{U_{g,t'}(\cdot),U_{g,t'-1}(\cdot)\}\le 0$, hence $t'\in\cT_{\mathrm{feas}}^{(i)}$.

\stitle{Proof of $\delta$-correctness}
Initially, $s(\btheta_{\mathrm{out}})=s(\btheta_0)\ge s_0$.
Whenever $\btheta_{\mathrm{out}}$ is updated in Line~\ref{alg:scope-main:update}, it is set to some candidate
$\btheta_{\mathrm{cand}}^{(i)}$ at a time $t$ such that
$\min\{U_{g,t}(\btheta_{\mathrm{cand}}^{(i)}),\,U_{g,t-1}(\btheta_{\mathrm{cand}}^{(i)})\}\le 0$.
Thus either $U_{g,t}(\btheta_{\mathrm{cand}}^{(i)})\le 0$ or $U_{g,t-1}(\btheta_{\mathrm{cand}}^{(i)})\le 0$.
On $\cE_{\mathrm{conf}}$, in either case we have $g(\btheta_{\mathrm{cand}}^{(i)})
\le
U_{g,t'}(\btheta_{\mathrm{cand}}^{(i)})
\le 0$ for some $t'\in\{t,t-1\}$,
hence $s(\btheta_{\mathrm{cand}}^{(i)})\ge s_0$.
Therefore $s(\btheta_{\mathrm{out}})\ge s_0$ holds on $\cE_{\mathrm{conf}}$, implying that \ours is $\delta$-correct.

\stitle{Proof outline for simple regret}
We focus on rounds where the selection threshold is small enough that $\btheta^\star$ satisfies the constraint in Line~\ref{alg:scope-main:select}. For these rounds, the selection rule guarantees that the chosen candidate has a cost lower confidence bound no larger than that of $\btheta^\star$, so $\mathrm{SR}$ can be controlled by the cost confidence width.
We then bound $\mathrm{SR}$ by the smallest such width among candidates that are eventually certified feasible, and replace this minimum by an average over the certified-feasible rounds, which introduces an inverse dependence on their count.
Finally, we lower bound the number of certified-feasible rounds: any round that is not certified feasible must exhibit sufficiently large constraint uncertainty relative to its threshold, while the total accumulated constraint uncertainty is bounded. Combining the two steps yields the stated $\mathrm{SR}$ bound.

Throughout the proof, we carry out all derivations on the event $\cE_{\mathrm{conf}}$; as shown above, $\Pr[\cE_{\mathrm{conf}}]\ge 1-\delta$.
Specifically, denote $\Delta^\star=s(\btheta^\star)-s_0=-g(\btheta^\star)>0$, and let $\cS=\{i^\star,i^\star+1,\dots,I\}\subseteq [I]$, where
\[
i^\star
\ =\
\min\{i\in\mathbb{N}:\ \Delta_i\le \Delta^\star\}
\ =\
\left\lceil\left({\Delta^\star}\right)^{-1/\alpha}\right\rceil. 
\]
Additionally, we assume that $\tau$ is sufficiently large, so that
\begin{equation}\label{eq:sr-assumption}
\cI_{\mathrm{feas}}\cap\cS\neq \emptyset.\tag{$\spadesuit$}
\end{equation}
In particular, this implies that $\cI_{\mathrm{feas}}\neq \emptyset$ and $\cS\neq \emptyset$. We will show at the end of the proof that~\eqref{eq:sr-assumption} is implied by the condition under which~\eqref{eq:sr-bound-T} is stated. Additionally, we will use the following lemma.

\begin{lemma}\label{lem:sum-variance-bound}
For any $\zeta\in\{c,g\}$,
\[
\sum_{t=t_0+1}^{\tau}\bar\sigma_{\zeta,t-1}^2(\btheta_t)
\ \le\
4(\lambda+1)\gamma(J_{\max,\tau}).
\]
\end{lemma}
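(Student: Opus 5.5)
The plan is to decompose the sum query by query and reduce it to the standard elliptical-potential (information-gain) argument for a single GP. From the definition in \eqref{eq:scope-mean-variance}, for each $t>t_0$ we have
\[
\bar\sigma_{\zeta,t-1}^2(\btheta_t)=\frac{1}{Q^2}\sum_{q\in\cQ}\hat\sigma^2_{\bx_{q,t-1},\by_{\zeta,q,t-1}}(\btheta_t)\le \frac{1}{Q^2}\sum_{q\in\cQ} 1,
\]
since each posterior variance is at most $k(\btheta_t,\btheta_t)=1$. Summing naively over $q\ne q_t$ would be too lossy, so I will first bound the summand differently: split $\bar\sigma^2_{\zeta,t-1}(\btheta_t)$ into the term for $q=q_t$ and the remaining terms. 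The key point is that only the term with $q=q_t$ is \emph{updated} at time $t$; for all other $q$ the observation does not feed that GP. I therefore aim to bound the whole quantity by a constant multiple of a per-query potential that telescopes.

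The concrete route I would take is as follows. For each $q$, the times in $\cJ_{q,\tau}$ at which $q$ is observed form a sequential design for the $q$-th GP. By the standard lemma (Srinivas et al., Lemma 5.3 / Chowdhury--Gopalan Lemma 3) applied with $\sigma^2\le 1$ and $\lambda$ as regularizer, we have $\sum_{a=1}^{J_{q,\tau}}\hat\sigma^2_{q,j_q(a)-1}(\btheta_{j_q(a)})\le 2(\lambda+1)\,\frac12\log\det(\bI+\lambda^{-1}\bK)\cdot 2\le 4(\lambda+1)\gamma(J_{q,\tau})$. This uses $x\le (1+\lambda)\log(1+x/\lambda)$ for $x\in[0,1]$ and the chain-rule identity for $\log\det$. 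Since $\gamma$ is nondecreasing, each per-query sum is at most $4(\lambda+1)\gamma(J_{\max,\tau})$.

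To connect the two pieces, I need to control the terms with $q\neq q_t$. Here I would exploit the structure of \ours: within a round, the candidate is fixed and queries are visited in decreasing order of $\varphi_i(q)=\hat\sigma_q(\btheta_{\rm cand})$, and each query is visited at most once per round. Hence at time $t$ in round $i$, every query $q$ not yet visited in this round has variance at most $\hat\sigma_{q_t}$ (by the sorting). Queries already visited this round have had their variance at $\btheta_{\rm cand}$ reduced, since their variance is now at most their pre-round value, which is itself at most that of earlier-sorted queries. Using this monotonicity, I plan to charge $\frac1{Q^2}\sum_q\hat\sigma_q^2(\btheta_t)$ to $\frac1Q\hat\sigma^2_{q_t,t-1}(\btheta_t)$ plus terms that are paid for by earlier observations of the same round. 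The goal is
\[
\sum_{t}\bar\sigma^2_{\zeta,t-1}(\btheta_t)\le \frac{1}{Q}\sum_{q}\sum_{a}\hat\sigma^2_{q,j_q(a)-1}(\btheta_{j_q(a)}),
\]
which is at most $\frac1Q\cdot Q\cdot 4(\lambda+1)\gamma(J_{\max,\tau})$. Calibration observations ($t\le t_0$) only reduce variances and are excluded from the sum, so they cause no harm.

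The main obstacle is this charging step. Because a round may be pruned before visiting all $Q$ queries, unvisited queries cannot be charged within that round. The charging must therefore use the sorted order to dominate each unvisited term by the visited one at time $t$. Summing those dominations over $t$ risks a factor of $Q$ in excess; the $1/Q^2$ normalization is what absorbs it. If the clean charging fails, my fallback is the cruder bound obtained through the sorted-order monotonicity,
\[
\frac1{Q^2}\sum_q\hat\sigma_q^2(\btheta_t)\le \frac1Q\max_q\hat\sigma_q^2(\btheta_t).
\]
I would then argue that the first query of each round attains this maximum, and bound the remaining terms via monotonicity of posterior variance within the round.
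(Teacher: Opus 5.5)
\textbf{There is a genuine gap.} Your architecture matches the paper's: a per-query information-gain bound summed over $q$, then a within-round charging argument that uses the $\varphi_i$-ordering. But the charging inequality $\sum_{t>t_0}\bar\sigma^2_{\zeta,t-1}(\btheta_t)\le\frac1Q\sum_{t>t_0}\hat\sigma^2_{\zeta,q_t,t-1}(\btheta_t)$ is the only non-standard content of the lemma. You state it as a goal and flag it as the main obstacle without proving it, and the mechanism you describe does not reach it.

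Write $v_1\ge\cdots\ge v_m$ for the variances of the queries visited in a round, each taken at the moment it is visited. Your mechanism bounds the $Q-j+1$ unvisited queries at step $j$ by $v_j$ and pays for the visited ones with $v_1,\dots,v_{j-1}$. Summed over $j$, this gives $\frac{1}{Q^2}\sum_{r=1}^m(Q+m-2r+1)v_r$. The coefficients are nonincreasing in $r$ with average $Q$, and so are the $v_r$. By Chebyshev's sum inequality the expression is therefore never below $\frac1Q\sum_r v_r$, and it is about $\frac2Q\sum_r v_r$ when $v_1$ dominates and $m=Q$. This is exactly the paper's factor $(Q+m)/Q^2\le 2/Q$. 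Combined with your per-query constant $4(\lambda+1)\gamma$, you would end at $8(\lambda+1)\gamma(J_{\max,\tau})$, not the stated bound.

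Your fallback is worse. Bounding each step by $\frac{1}{Q}\max_q\hat\sigma_q^2\le\frac1Q v_1$ and summing over a round gives $\frac{m}{Q}v_1$. This cannot be charged to $\frac1Q\sum_r v_r$ when $v_2,\dots,v_m\ll v_1$, so it loses up to a factor $Q$.

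\textbf{Two repairs work.}
\begin{itemize}
\item \emph{The paper's route.} Your per-query chain carries a spurious factor $2$. From $x\le(\lambda+1)\log(1+x/\lambda)$ and the determinant identity, $\sum_a v_{q,a}\le(\lambda+1)\log\det(\bI+\lambda^{-1}\bK)\le 2(\lambda+1)\gamma(J_{\max,\tau})$. With this, the $2/Q$ charging you sketched gives exactly $4(\lambda+1)\gamma(J_{\max,\tau})$.
\item \emph{Your $1/Q$ target, proved more simply.} The target is true, but via a different argument from the one you sketched. Posterior variances only shrink, so during the round every query's variance at $\btheta_{\rm cand}$ is at most its round-start value $V_q=\varphi_i(q)^2$. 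Hence each step contributes at most $S/Q^2$, where $S=\sum_q V_q$. Unvisited queries' GPs are untouched, so $v_r=V_{q^{(r)}}$, and these are the $m$ largest $V_q$. This gives $\sum_{r\le m}v_r\ge mS/Q$, so the round total is at most $mS/Q^2\le\frac1Q\sum_{r\le m}v_r$. With your loose per-query constant this yields the lemma as stated; with the sharp constant $2(\lambda+1)\gamma$ it even gives $2(\lambda+1)\gamma(J_{\max,\tau})$.
\end{itemize}
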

Since $J_{\max,\tau}\le \tau$ and $\gamma(\cdot)$ is nondecreasing, we have $\gamma(J_{\max,\tau})\ \le\ \gamma(\tau).$
Accordingly, in the sequel we may upper bound any occurrence of $\gamma(J_{\max,\tau})$ by $\gamma(\tau)$.

\stitle{Relating $\btheta_{\rm out}$ to $\btheta_{\rm cand}$}
Let $U_{\rm out}$ denote the final value of the variable $U_{\rm out}$ upon termination of
Algorithm~\ref{alg:scope-main}. By construction, $U_{\rm out}$ is initialized as $U_{c,t_0}(\btheta_0)$
and is updated only when Line~\ref{alg:scope-main:update} holds; hence $U_{\rm out}$ is nonincreasing over time.
Moreover, whenever round $i$ triggers the modified update condition at some time
$t\in\cT_{\mathrm{feas}}^{(i)}$, the algorithm compares $U_{c,t}(\btheta_{\mathrm{cand}}^{(i)})$ with the current
$U_{\rm out}$ and updates $U_{\rm out}$ if this value is smaller. Therefore, at termination,
\begin{equation}\label{eq:Uout-min-certified}
U_{\rm out}
\ \le\
U_{c,t_{\mathrm{feas}}^{(i)}}(\btheta_{\mathrm{cand}}^{(i)}),
\qquad \forall i\in\cI_{\mathrm{feas}}.
\end{equation}

Fix any $i\in\cI_{\mathrm{feas}}$. On $\cE_{\mathrm{conf}}$, we have
\begin{align}
    c(\btheta_{\mathrm{out}})
    &\leq U_{\rm out} \notag\\
    &\leq U_{c,t_{\mathrm{feas}}^{(i)}}(\btheta_{\mathrm{cand}}^{(i)}) \tag{using~\eqref{eq:Uout-min-certified}}\\
    &=L_{c,t_{\mathrm{feas}}^{(i)}}(\btheta_{\mathrm{cand}}^{(i)})+
2\beta_{c,t_{\mathrm{feas}}^{(i)}}\bar\sigma_{c,t_{\mathrm{feas}}^{(i)}}(\btheta_{\mathrm{cand}}^{(i)}) \notag\\
    &\leq c(\btheta_{\mathrm{cand}}^{(i)})+
2\beta_{c,t_{\mathrm{feas}}^{(i)}}\bar\sigma_{c,t_{\mathrm{feas}}^{(i)}}(\btheta_{\mathrm{cand}}^{(i)}) \notag\\
    &\leq c(\btheta_{\mathrm{cand}}^{(i)})+
2\beta_{c,t_{\mathrm{feas}}^{(i)}}\bar\sigma_{c,t^{(i)}}(\btheta_{\mathrm{cand}}^{(i)}) \tag{$\bar \sigma$ is nonincreasing}\\
    &\leq c(\btheta_{\mathrm{cand}}^{(i)})+
2\beta_{c,\tau}\bar\sigma_{c,t^{(i)}}(\btheta_{\mathrm{cand}}^{(i)}),\label{eq:cout-bound}
\end{align}
where the last inequality uses that $\beta_{c,t}$ is nondecreasing in $t$.

\stitle{Relating to $\btheta^{\star}$}
By definition of $\cS$, on $\cE_{\rm conf}$, we have
\[
L_{g,t^{(i)}}(\btheta^\star) \le -\Delta^\star \leq
-\Delta_i,
\qquad \forall i\in\cS.
\]
In other words, every index $i\in\cS$ has sufficiently small $\Delta_i$, so that $\btheta^\star$ {\it can} be selected by Line~\ref{alg:scope-main:select} at time $t^{(i)}$. By the selection rule, for any $i \in \cI_{\rm feas}\cap \cS$, we have
\[
L_{c,t^{(i)}}(\btheta_{\mathrm{cand}}^{(i)})
\ \le\
L_{c,t^{(i)}}(\btheta^\star).
\]
On $\cE_{\mathrm{conf}}$, we further bound $c(\btheta_{\mathrm{cand}}^{(i)})-c(\btheta^\star)$ as follows.
\begin{align}
c(\btheta_{\mathrm{cand}}^{(i)})-c(\btheta^\star)
&\le
U_{c,t^{(i)}}(\btheta_{\mathrm{cand}}^{(i)})-L_{c,t^{(i)}}(\btheta^\star)\nonumber\\
&\le
U_{c,t^{(i)}}(\btheta_{\mathrm{cand}}^{(i)})-L_{c,t^{(i)}}(\btheta_{\mathrm{cand}}^{(i)})\nonumber\\
&= 2\beta_{c,t^{(i)}}\bar\sigma_{c,t^{(i)}}(\btheta_{\mathrm{cand}}^{(i)})\nonumber\\
&\leq 2\beta_{c,\tau}\bar\sigma_{c,t^{(i)}}(\btheta_{\mathrm{cand}}^{(i)}).\label{eq:cand-gap}
\end{align}
Combining~\eqref{eq:cout-bound} and~\eqref{eq:cand-gap} yields that, for every
$i\in\cI_{\mathrm{feas}}\cap\cS$,
\[
c(\btheta_{\mathrm{out}})-c(\btheta^\star)
\ \le\
4\beta_{c,\tau}\bar\sigma_{c,t^{(i)}}(\btheta_{\mathrm{cand}}^{(i)}).
\]
Taking the minimum over $i$ yields
\begin{align}
c(\btheta_{\mathrm{out}})-c(\btheta^\star)&\le
4\beta_{c,\tau}\cdot
\min_{i\in \cI_{\mathrm{feas}}\cap\cS}\bar\sigma_{c,t^{(i)}}(\btheta_{\mathrm{cand}}^{(i)})\nonumber\\
&\leq 4\beta_{c,\tau}\cdot \sqrt{ \frac{\sum_{i\in \cI_\mathrm{feas}\cap\cS}\bar\sigma_{c,t^{(i)}}^2(\btheta_{\mathrm{cand}}^{(i)})}{|\cI_\mathrm{feas}\cap\cS|}}\nonumber\\
&\leq 4\beta_{c,\tau}\cdot \sqrt{ \frac{\sum_{t=t_0+1}^{\tau}\bar\sigma_{c,t-1}^2(\btheta_t)}{|\cI_\mathrm{feas}\cap\cS|}}\nonumber\\
&\leq 4\beta_{c,\tau}\cdot \sqrt{\frac{4(\lambda+1)\gamma(J_{\max,\tau})}{|\cI_\mathrm{feas}\cap\cS|}}\nonumber\\
&\leq 4\beta_{c,\tau}\cdot \sqrt{\frac{4(\lambda+1)\gamma(\tau)}{|\cI_\mathrm{feas}\cap\cS|}}
\nonumber\\
&= \frac{\mathsf{A}}{\sqrt{Q}\cdot\sqrt{|\cI_\mathrm{feas}\cap\cS|}},\label{eq:sr-bound-with-size}
\end{align}
where $\mathsf A=8\beta_{c,\tau}\sqrt{Q(\lambda+1)\gamma(\tau)}$.
Here, the second inequality replaces the minimum with the average, the third inequality uses the fact that for each $i\in[I]$, the sum $\sum_{t=t_0+1}^{\tau}\bar\sigma_{c,t-1}^2(\btheta_t)$ contains the term
\[
\bar\sigma_{c,t^{(i)}}^2(\btheta_{t^{(i)}+1})
=
\bar\sigma_{c,t^{(i)}}^2(\btheta_{\mathrm{cand}}^{(i)}),
\]
and the fourth inequality uses Lemma~\ref{lem:sum-variance-bound} with $\zeta=c$. 

\stitle{Bounding $|\cI_\mathrm{feas}\cap\cS|$}
Let $\cI_{\mathrm{bad}}^{\mathrm{feas}}=[I]\setminus\cI_{\mathrm{feas}}$, $I_{\rm bad} = |\cI_{\mathrm{bad}}^{\mathrm{feas}}|$, and fix index $i\in \cI_{\mathrm{bad}}^{\mathrm{feas}}$.
On one hand, since round $i$ starts, the selection rule implies
$L_{g,t^{(i)}}(\btheta_{\mathrm{cand}}^{(i)})\le -\Delta_i$ with $\Delta_i=i^{-\alpha}$.
On the other hand, because $i\notin\cI_{\mathrm{feas}}$, by definition we have
\[
U_{g,t}(\btheta_{\mathrm{cand}}^{(i)})>0,
\qquad \forall t\in[t^{(i)},t^{(i+1)}].
\]
In particular, $U_{g,t^{(i)}}(\btheta_{\mathrm{cand}}^{(i)})>0$. Therefore,
\begin{align*}
0
&<
U_{g,t^{(i)}}(\btheta_{\mathrm{cand}}^{(i)})\\
&= L_{g,t^{(i)}}(\btheta_{\mathrm{cand}}^{(i)}) + 2\beta_{g,t^{(i)}}\bar\sigma_{g,t^{(i)}}(\btheta_{\mathrm{cand}}^{(i)})\\
&\leq -\Delta_i
+
2\beta_{g,t^{(i)}}\bar\sigma_{g,t^{(i)}}(\btheta_{\mathrm{cand}}^{(i)})\\
&\leq -\Delta_i
+
2\beta_{g,\tau}\bar\sigma_{g,t^{(i)}}(\btheta_{\mathrm{cand}}^{(i)})\\
&\Leftrightarrow\ 
2\beta_{g,\tau}\cdot \bar\sigma_{g,t^{(i)}}(\btheta_{\mathrm{cand}}^{(i)})>\Delta_i.
\end{align*}
Squaring and summing over $i\in\cI_{\mathrm{bad}}^{\mathrm{feas}}$ gives
\[
\sum_{i\in\cI_{\mathrm{bad}}^{\mathrm{feas}}}\Delta_i^2
\ \le\
4\beta_{g,\tau}^2
\sum_{i\in\cI_{\mathrm{bad}}^{\mathrm{feas}}}\bar\sigma_{g,t^{(i)}}^2(\btheta_{\mathrm{cand}}^{(i)}).
\]
As each term
$\bar\sigma_{g,t^{(i)}}^2(\btheta_{\mathrm{cand}}^{(i)})$ appears in the sum
$\sum_{t=t_0+1}^{\tau}\bar\sigma_{g,t-1}^2(\btheta_t)$, we apply Lemma~\ref{lem:sum-variance-bound} with $\zeta=g$, which yields
\begin{equation}\label{eq:sum-delta2-ub-alpha}
\sum_{i\in\cI_{\mathrm{bad}}^{\mathrm{feas}}}\Delta_i^2
\ \le\
16(\lambda+1)\beta_{g,\tau}^2\gamma(J_{\max,\tau}).
\end{equation}

For the bound on $I_{\rm bad}$, since $\Delta_i=i^{-\alpha}$, we have $\Delta_i^2=i^{-2\alpha}$, which is nonincreasing in $i$.
Therefore, the minimum of $\sum_{i\in S}\Delta_i^2$ over all $S\subseteq[I]$ with $|S|=I_{\rm bad}$ is achieved at
$S=\{I-I_{\rm bad}+1,\dots,I\}$. It follows that
\begin{align*}
\sum_{i\in \cI_{\mathrm{bad}}^{\mathrm{feas}}}\Delta_i^2
&\ge
\sum_{j=I-I_{\rm bad}+1}^I j^{-2\alpha}\\
&\ge
I_{\rm bad}\cdot I^{-2\alpha},
\end{align*}
where the last inequality uses $j\le I$ for all $j\in\{I-I_{\rm bad}+1,\dots,I\}$.
Combining this with~\eqref{eq:sum-delta2-ub-alpha} yields
\begin{equation}\label{eq:bad-count-alpha}
I_{\rm bad}
\ \le\
16(\lambda+1)\beta_{g,\tau}^2\gamma(J_{\max,\tau})\cdot I^{2\alpha}.
\end{equation}

Now we derive a lower bound of $|\cI_{\mathrm{feas}}\cap\cS|$.
Recall that $\cS=\{i^\star,i^\star+1,\dots,I\}$ and thus $|\cS|=I-i^\star+1$. Therefore,
\begin{align*}
|\cI_{\mathrm{feas}}\cap\cS|
&=
|\cS|-|\cI_{\mathrm{bad}}^{\mathrm{feas}}\cap\cS|\\
& \geq (I-i^\star+1)-I_{\rm bad}\\
& \geq (I-i^\star+1)-
16(\lambda+1)\beta_{g,\tau}^2\gamma(J_{\max,\tau})\cdot I^{2\alpha}.
\end{align*}
Next, relate $I$ and $\tau-t_0$.
Each effective round evaluates at most $Q$ distinct queries, so $\tau-t_0 \le QI$ and thus
$I\ge (\tau-t_0)/Q$.
Additionally, each effective round records at least one observation, so $I\le \tau-t_0$ and hence
$I^{2\alpha}\le (\tau-t_0)^{2\alpha}$.
Substituting these yields
\begin{align}
|\cI_{\mathrm{feas}}\cap\cS|
&\ge \frac{\tau-t_0}{Q}-i^\star
-16(\lambda+1)\beta_{g,\tau}^2\gamma(J_{\max,\tau})\cdot (\tau-t_0)^{2\alpha}\nonumber\\
&\geq \frac{(\tau-t_0)-\mathsf B\cdot(\tau-t_0)^{2\alpha}-\mathsf C}{Q},
\label{eq:bound-feas-size-alpha}
\end{align}
where $\mathsf B
=
16Q(\lambda+1)\beta_{g,\tau}^2\gamma(J_{\tau})$ and $\mathsf C=Q\cdot i^\star$.

\stitle{Conclusion}
Plugging~\eqref{eq:bound-feas-size-alpha} into~\eqref{eq:sr-bound-with-size} gives
\[
\mathrm{SR}(\tau)
\ \le\
\frac{\mathsf{A}}{\sqrt{(\tau-t_0)-\mathsf B\cdot(\tau-t_0)^{2\alpha}-\mathsf C}},
\]
which is well-defined whenever $(\tau-t_0)-\mathsf B\cdot(\tau-t_0)^{2\alpha}-\mathsf C>0$.
By~\eqref{eq:bound-feas-size-alpha}, this condition implies $|\cI_{\mathrm{feas}}\cap\cS|>0$, i.e., \eqref{eq:sr-assumption} holds.
This completes the proof. \qed

\begin{proof}[Proof of Lemma~\ref{lem:sum-variance-bound}]
Fix $\zeta\in\{c,g\}$, and recall that $\cJ_{q,t}=\{j\in[t]:q_j=q\}=\{j_q(1),\cdots,j_q(J_{q,t})\}$ with $J_{q,t}=|\cJ_{q,t}|$ and $j_q(1)<\cdots<j_q(J_{q,t})$. We denote the information gain of a configuration set $A=\{\btheta^{(1)},\dots,\btheta^{(J)}\}\subseteq \Theta$, in any order, by $\cI(A)
\ =\
\frac{1}{2}\log\det(\bI+\lambda^{-1}\bK_A)$.

\stitle{Bounding the sum of query-wise variances}
Fix a query $q\in\cQ$ and define the evaluation sequence
$\btheta_{q,r}=\btheta_{j_q(r)}$ for $r\in[J_{q,\tau}]$.
Let
\[
v_{q,r}
\ =\
\hat\sigma^2_{\zeta,q,\,j_q(r)-1}(\btheta_{q,r}),
\qquad r\in[J_{q,\tau}].
\]
For the observed configurations $\{\btheta_{q,1},\dots,\btheta_{q,J_{q,\tau}}\}$ on query $q$, a standard GP determinant identity yields
\[
\mathcal I(\{\btheta_{q,1},\dots,\btheta_{q,J_{q,\tau}}\})
\ =\
\frac12\sum_{r=1}^{J_{q,\tau}}\log\!\Big(1+v_{q,r}/\lambda\Big).
\]
Since the set has size $J_{q,\tau}$, we have
$\mathcal I(\cdot)\le \gamma(J_{q,\tau})\le \gamma(J_{\max,\tau})$, and thus
\begin{equation}\label{eq:ig-sumlog}
\sum_{r=1}^{J_{q,\tau}}\log\!\Big(1+v_{q,r}/\lambda\Big)
\ \le\
2\gamma(J_{\max,\tau}).
\end{equation}
Moreover, for any $x\in[0,1]$,
$\log(1+x/\lambda)\ge x/(\lambda+x)\ge x/(\lambda+1)$, hence
\begin{equation}\label{eq:x-log}
x \ \le\ (\lambda+1)\log(1+x/\lambda).
\end{equation}
Applying \eqref{eq:x-log} to $x=v_{q,r}$ and summing, then using \eqref{eq:ig-sumlog}, gives
\[
\sum_{r=1}^{J_{q,\tau}} v_{q,r}
\ \le\
(\lambda+1)\sum_{r=1}^{J_{q,\tau}}\log(1+v_{q,r}/\lambda)
\ \le\
2(\lambda+1)\gamma(J_{\max,\tau}).
\]
Summing over $q\in\cQ$ and reindexing by time $t$ yields
\begin{equation}\label{eq:sigma-sum-by-ig-clean}
\sum_{t=1}^{\tau}\hat\sigma^2_{\zeta,q_t,t-1}(\btheta_t)
\ \le\
2(\lambda+1)\,Q\,\gamma(J_{\max,\tau}).
\end{equation}

\stitle{Bounding the aggregated uncertainty sum}
Consider any fixed inner iteration (Lines~\ref{alg:scope-main:for}--\ref{alg:scope-main:update}) in which the algorithm
keeps $\btheta_t=\btheta_{\rm cand}$ for $m\le Q$ consecutive times and selects $m$ distinct queries.
Let the corresponding times be $u_1<\cdots<u_m$ and define
\[
v_j
\ =\
\hat\sigma^2_{\zeta,q_{u_j},u_j-1}(\btheta_{\rm cand}),
\qquad j\in[m].
\]
At time $u_j-1$, for any query $q$ not yet selected in this inner iteration, its posterior variance equals its value at
the iteration start, because no observation of query $q$ has been made.
Moreover, by the pre-ordering of $q_{u_j}$ via $\varphi_i$ in~\eqref{eq:uncertainty-score}, this variance is at most $v_j$.
For any previously selected query $q_{u_r}$ with $r<j$, monotonicity of $\hat\sigma_{\cdot,q,\cdot}$ with respect to time
(implied by the GP update rules) implies its variance at time $u_j-1$ is at most $v_r$ (which is the variance at time
$u_r-1$ before observing $q_{u_r}$ in this inner iteration). Hence,
\[
\sum_{q\in\cQ}\hat\sigma^2_{\zeta,q,u_j-1}(\btheta_{\rm cand})
\ \le\
\sum_{r=1}^{j-1} v_r\ +\ (Q-j+1)\,v_j.
\]
Dividing by $Q^2$ gives an upper bound on $\bar\sigma^2_{\zeta,u_j-1}(\btheta_{\rm cand})$.
Summing over $j=1,\dots,m$ and collecting coefficients yields
\begin{align*}
\sum_{j=1}^m \bar\sigma^2_{\zeta,u_j-1}(\btheta_{\rm cand})
&\le \frac{1}{Q^2}\sum_{r=1}^m\big((m-r)+(Q-r+1)\big)v_r\\
&\le \frac{Q+m}{Q^2}\sum_{r=1}^m v_r\\
&\leq \frac{2}{Q}\sum_{r=1}^m v_r,
\end{align*}
where the last inequality uses $m\le Q$.
Summing the above bound over all inner iterations, and noting that
$\sum_{r=1}^m v_r$ over all inner iterations equals $\sum_{t=t_0+1}^{\tau}\hat\sigma^2_{\zeta,q_t,t-1}(\btheta_t)$,
we obtain
\begin{equation}\label{eq:barsigma-to-hatsigma}
\sum_{t=t_0+1}^{\tau}\bar\sigma^2_{\zeta,t-1}(\btheta_t)
\ \le\
\frac{2}{Q}\sum_{t=t_0+1}^{\tau}\hat\sigma^2_{\zeta,q_t,t-1}(\btheta_t)
\ \le\
\frac{2}{Q}\sum_{t=1}^{\tau}\hat\sigma^2_{\zeta,q_t,t-1}(\btheta_t).
\end{equation}
Combining \eqref{eq:barsigma-to-hatsigma} with \eqref{eq:sigma-sum-by-ig-clean} yields
\[
\sum_{t=t_0+1}^{\tau}\bar\sigma^2_{\zeta,t-1}(\btheta_t)
\ \le\
4(\lambda+1)\gamma(J_{\max,\tau}).
\]
This completes the proof.
\end{proof}

\subsection{Proof of Corollary~\ref{cor:exp-sr-budget}}

We first derive a bound that is explicit in both $\budget$ and a failure probability $\delta\in(0,1)$,
and then plug in $\delta=\budget^{-2}$.

Let $\Delta_C=C_{\max}-C_{\min}$. Since $\ell_c(\btheta,q)\in[C_{\min},C_{\max}]$, we have
$c(\btheta)\in[C_{\min},C_{\max}]$ for all $\btheta$, and hence $\mathrm{SR}(\tau)\le \Delta_C$ always.
Fix $\budget>0$ and $\delta\in(0,1)$, and run \ours with failure probability $\delta$,
where $\beta_{\zeta,t}$ is set equal to the right-hand side of~\eqref{eq:beta-def}.

Let $\cE_{\mathrm{conf}}$ be the confidence event used in the proof of Theorem~\ref{thm:main-T-ours}.
As established in that proof, the bound~\eqref{eq:sr-bound-T} holds on $\cE_{\mathrm{conf}}$, and
\begin{equation}\label{eq:sr-prob-app}
\Pr[\cE_{\mathrm{conf}}]\ \ge\ 1-\delta.
\end{equation}

Because $\Theta$ is finite, $\gamma(J_{\max,t})\le \gamma(|\Theta|)$ and $\gamma(t)\le \gamma(|\Theta|)$ for all $t$.
Since $\beta_{\zeta,t}$ is set equal to the right-hand side of~\eqref{eq:beta-def}, for all $t\in\NN$ and $\zeta\in\{c,g\}$, $\beta_{\zeta,t} \leq \bar\beta_{\zeta}(\delta)$,
where $\bar\beta_{\zeta}(\delta)$ is defined by replacing $\gamma(J_{\max, t})$ in~\eqref{eq:beta-def} with $\gamma(|\Theta|)$.
Since $\gamma(\tau)\le \gamma(|\Theta|)$ and $\beta_{\zeta,\tau}\le \bar\beta_{\zeta}(\delta)$, the constants $\mathsf A,\mathsf B,\mathsf C$
in Theorem~\ref{thm:main-T-ours} also satisfy the deterministic upper bounds
\[
\mathsf A\ \le\ \bar{\mathsf A}(\delta),\qquad
\mathsf B\ \le\ \bar{\mathsf B}(\delta),\qquad
\mathsf C\ =\ Q\left\lceil(-g(\btheta^\star))^{-1/\alpha}\right\rceil,
\]
where
\[
\bar{\mathsf A}(\delta)
=
8\,\bar\beta_c(\delta)\,\sqrt{Q(\lambda+1)\gamma(|\Theta|)},
\quad
\bar{\mathsf B}(\delta)
=
16Q(\lambda+1)\,\bar\beta_g(\delta)^2\,\gamma(|\Theta|).
\]

We define an explicit $m_1(\delta)$ that ensures the denominator in~\eqref{eq:sr-bound-T} is at least $(\tau-t_0)/2$.
Let $m_1(\delta)$ be the smallest integer satisfying
\begin{equation}\label{eq:def-m1-app}
m_1(\delta)\ \ge\ \max\!\left\{\left\lceil(4\bar{\mathsf B}(\delta))^{\frac{1}{1-2\alpha}}\right\rceil,\ 4\mathsf C\right\}.
\end{equation}
Then for any integer $m\ge m_1(\delta)$ we have $\bar{\mathsf B}(\delta)\,m^{2\alpha}\le m/4$ and $\mathsf C\le m/4$, and hence
\begin{equation}\label{eq:denom-lb-app}
m-\mathsf B\,m^{2\alpha}-\mathsf C
\ \ge\
m-\bar{\mathsf B}(\delta)\,m^{2\alpha}-\mathsf C
\ \ge\
\frac{m}{2}.
\end{equation}
Therefore, on $\cE_{\mathrm{conf}}\cap\{\tau-t_0\ge m_1(\delta)\}$,
\begin{align}
\mathrm{SR}(\tau)
&\le \frac{\mathsf A}{\sqrt{(\tau-t_0)-\mathsf B(\tau-t_0)^{2\alpha}-\mathsf C}}
\tag{using Theorem~\ref{thm:main-T-ours}} \notag\\
&\le \frac{\bar{\mathsf A}(\delta)}{\sqrt{(\tau-t_0)-\mathsf B(\tau-t_0)^{2\alpha}-\mathsf C}}
\tag{since $\mathsf A\le \bar{\mathsf A}(\delta)$} \notag\\
&\le \frac{\bar{\mathsf A}(\delta)}{\sqrt{(\tau-t_0)/2}}
\tag{using~\eqref{eq:denom-lb-app} with $m=\tau-t_0$} \notag\\
&\le \frac{\sqrt{2}\,\bar{\mathsf A}(\delta)}{\sqrt{\tau-t_0}}.
\label{eq:sr-vs-tau-app}
\end{align}

Next, we control the probability that the budget-based stopping time $\tau$ is small.
Recall $y_{c,t}=\ell_c(\btheta_t,q_t)+\eta_{c,t}$ and let $M_s=\sum_{t=1}^s \eta_{c,t}$. If $R_c=0$, then $M_s=0$ almost surely for all $s$, and the tail probability below is zero whenever $nC_{\max}<\budget$. Thus it remains to prove the tail bound when $R_c>0$.

\begin{lemma}\label{lem:max-ineq-cond-gauss}
Assume $R_c>0$. For any integer $n\ge 1$ and any $x>0$,
\[
\Pr\!\left[\max_{1\le s\le n} M_s>x\right]\ \le\ \exp\!\left(-\frac{x^2}{2nR_c^2}\right).
\]
\end{lemma}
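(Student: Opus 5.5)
This is a maximal (Doob-type) Chernoff bound for a martingale with conditionally sub-Gaussian increments, and I would prove it by the standard exponential-supermartingale argument. First, $M_s$ is a martingale with respect to $(\cF_s)$, since each $\eta_{c,t}$ is $\cF_t$-measurable and $\EE[\eta_{c,t}\mid\cF_{t-1}]=0$ by Assumption~\ref{ass:observation}. Fix $u>0$ and define
\[
Z_s=\exp\!\Big(uM_s-\tfrac{s u^2R_c^2}{2}\Big),\qquad Z_0=1 .
\]
The conditional sub-Gaussian bound $\EE[\exp(u\eta_{c,s})\mid\cF_{s-1}]\le \exp(u^2R_c^2/2)$ gives $\EE[Z_s\mid\cF_{s-1}]\le Z_{s-1}$. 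Hence $(Z_s)$ is a nonnegative supermartingale with $\EE[Z_s]\le 1$.

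Next, I would use the fact that $Z_s\ge \exp(uM_s-nu^2R_c^2/2)$ for all $s\le n$. This gives the event inclusion
\[
\Big\{\max_{s\le n}M_s>x\Big\}\subseteq\Big\{\max_{s\le n}Z_s>\exp\!\big(ux-\tfrac{nu^2R_c^2}{2}\big)\Big\}.
\]
Ville's maximal inequality for nonnegative supermartingales, $\Pr[\max_{s\le n}Z_s\ge a]\le \EE[Z_0]/a$, then bounds the probability by $\exp(-ux+nu^2R_c^2/2)$. If one prefers not to invoke Ville directly, the same bound follows from optional stopping at the bounded stopping time $\sigma=\min\{s:M_s>x\}\wedge n$. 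In that case $\EE[Z_\sigma]\le 1$, and on the event of interest $Z_\sigma\ge \exp(ux-nu^2R_c^2/2)$, so Markov's inequality applies.

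Finally, I would optimize over $u$ by choosing $u=x/(nR_c^2)$, which is valid because $R_c>0$. This gives the exponent $-x^2/(2nR_c^2)$, as claimed.

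The only delicate point is the supermartingale step. In the worst-case bookkeeping the time horizon $s$ must be replaced by $n$ uniformly, and this works because the compensator $su^2R_c^2/2$ is increasing in $s$. One also has to check measurability: $\eta_{c,t}=y_{c,t}-\ell_c(\btheta_t,q_t)$, and $(\btheta_t,q_t)$ is $\cF_{t-1}$-measurable as a function of $H_{t-1}$ and the policy's randomization. If internal randomization is present, I would enlarge the filtration to include it, and the conditional sub-Gaussian assumption is read with respect to this enlarged filtration.
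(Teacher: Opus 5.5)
Your proof is correct and matches the paper's argument essentially step for step: the same exponential supermartingale $Z_s=\exp(uM_s-su^2R_c^2/2)$, the same use of Ville's inequality after bounding the compensator by its value at $s=n$, and the same choice $u=x/(nR_c^2)$. Your closing measurability caveat is unnecessary. The supermartingale step only needs $\eta_{c,t}=y_{c,t}-\ell_c(\btheta_t,q_t)$ to be $\cF_t$-measurable, and that holds directly because $(\btheta_t,q_t,y_{c,t})$ is part of $H_t$, so no enlargement of the filtration is required.
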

\noindent

Since $\ell_c(\btheta_t,q_t)\le C_{\max}$, for every $s$,
\begin{equation}\label{eq:sumy-bound-app}
\sum_{t=1}^s y_{c,t}
=
\sum_{t=1}^s \ell_c(\btheta_t,q_t)+\sum_{t=1}^s \eta_{c,t}
\le sC_{\max}+M_s.
\end{equation}
By Lemma~\ref{lem:max-ineq-cond-gauss}, for any integer $n\ge 1$ such that $\budget>nC_{\max}$,
\begin{align}
\Pr[\tau\le n]
&\le \Pr\!\left[\exists s\le n:\ \sum_{t=1}^s y_{c,t}>\budget\right]\notag\\
&\le \Pr\!\left[\max_{1\le s\le n} M_s>\budget-nC_{\max}\right]\notag\\
&\le \exp\!\left(-\frac{(\budget-nC_{\max})^2}{2nR_c^2}\right).
\label{eq:tau-tail-app}
\end{align}
Choose $n=\lfloor \budget/(2C_{\max})\rfloor$. Then $\budget-nC_{\max}\ge \budget/2$ and
$n\le \budget/(2C_{\max})$, so~\eqref{eq:tau-tail-app} yields
\begin{equation}\label{eq:tau-tail-simple-app}
\Pr[\tau\le n]
\ \le\
\exp\!\left(-\frac{\budget\,C_{\max}}{4R_c^2}\right).
\end{equation}

Assume in addition that $\budget\ge 4C_{\max}(t_0+1)$. Then $n=\lfloor \budget/(2C_{\max})\rfloor$ satisfies
$n-t_0\ge \budget/(4C_{\max})$.
If also $n-t_0\ge m_1(\delta)$, then on $\cE_{\mathrm{conf}}\cap\{\tau>n\}$ we have $\tau-t_0\ge n-t_0\ge m_1(\delta)$, and hence
\begin{align}
\mathrm{SR}(\tau)
&\le \frac{\sqrt{2}\,\bar{\mathsf A}(\delta)}{\sqrt{\tau-t_0}}
\tag{using~\eqref{eq:sr-vs-tau-app}}\notag\\
&\le \frac{\sqrt{2}\,\bar{\mathsf A}(\delta)}{\sqrt{n-t_0}}
\tag{since $\tau>n$}\notag\\
&\le \frac{2\sqrt{C_{\max}}\,\bar{\mathsf A}(\delta)}{\sqrt{\budget}}.
\tag{since $n-t_0\ge \budget/(4C_{\max})$}\label{eq:sr-on-taun-app}
\end{align}

Taking expectation and using $\mathrm{SR}(\tau)\le \Delta_C$, we have $\EE[\mathrm{SR}(\tau)]$
\begin{align*}
=&
\EE\!\left[\mathrm{SR}(\tau)\,\bm 1\{\cE_{\mathrm{conf}}\cap(\tau>n)\}\right]
+\EE\!\left[\mathrm{SR}(\tau)\,\bm 1\{\cE_{\mathrm{conf}}^c\cup(\tau\le n)\}\right]\\
\le&
\frac{2\sqrt{C_{\max}}\,\bar{\mathsf A}(\delta)}{\sqrt{\budget}}
+\Delta_C\,\Pr[\cE_{\mathrm{conf}}^c]
+\Delta_C\,\Pr[\tau\le n]\\
\le&
\frac{2\sqrt{C_{\max}}\,\bar{\mathsf A}(\delta)}{\sqrt{\budget}}
+\Delta_C\,\delta
+\Delta_C\,\exp\!\left(-\frac{\budget\,C_{\max}}{4R_c^2}\right),
\end{align*}
where the last inequality uses~\eqref{eq:sr-prob-app} and~\eqref{eq:tau-tail-simple-app}. When $R_c=0$, the exponential term is omitted.
This gives an explicit bound in terms of $(\budget,\delta)$, provided that
$\budget\ge 4C_{\max}(t_0+1)$ and $n-t_0\ge m_1(\delta)$. Now set $\delta=\budget^{-2}$ and restrict to $\budget\ge 2$ so that $\delta\in(0,1)$.
For all $\budget\ge 2$, $\log(2Q/\delta)\le \log(2Q)+2\log\budget$.
Thus there exists a constant $K_1>0$, independent of $\budget$, such that for all $\budget\ge 2$,
\[
\bar{\mathsf B}(\budget^{-2})\ \le\ K_1(1+\log \budget).
\]
Therefore, by~\eqref{eq:def-m1-app}, there exist constants $K_2,K_3>0$, independent of $\budget$, such that for all $\budget\ge 2$,
\begin{equation}\label{eq:m1-polylog-app}
m_1(\budget^{-2})
\ \le\
K_2\,(1+\log \budget)^{\frac{1}{1-2\alpha}}+K_3.
\end{equation}
Since $\alpha\in(0,1/2)$ is fixed, there exists an integer $m_2\ge 2$ such that for all $\budget\ge m_2$,
\[
K_2\,(1+\log \budget)^{\frac{1}{1-2\alpha}}+K_3
\ \le\
\frac{\budget}{4C_{\max}}.
\]
Let $m$ be any integer satisfying $m\ge \max\{4C_{\max}(t_0+1),\,m_2\}$.
Then for all $\budget\ge m$, we have $n-t_0\ge \budget/(4C_{\max})\ge m_1(\budget^{-2})$ and $\budget\ge 4C_{\max}(t_0+1)$.
Applying the explicit $(\budget,\delta)$ bound with $\delta=\budget^{-2}$ yields
\[
\EE[\mathrm{SR}(\tau)]
\ \le\
\frac{2\sqrt{C_{\max}}\,\bar{\mathsf A}(\budget^{-2})}{\sqrt{\budget}}
+\Delta_C\,\budget^{-2}
+\Delta_C\,\exp\!\left(-\frac{\budget\,C_{\max}}{4R_c^2}\right),
\]
where the exponential term is omitted when $R_c=0$.
Finally, $\bar{\mathsf A}(\budget^{-2})=O(\sqrt{\log \budget})$ with a hidden constant independent of $\budget$,
and hence for all $\budget\ge m$,
\[
\EE[\mathrm{SR}(\tau)]
=
O\!\left(\sqrt{\frac{\log \budget}{\budget}}\right),
\]
which completes the proof.
\qed

\begin{proof}[Proof of Lemma~\ref{lem:max-ineq-cond-gauss}]
Fix $\lambda>0$ and define, for $s\ge 0$,
\[
Z_s \ =\ \exp\!\left(\lambda M_s-\frac{\lambda^2R_c^2}{2}\,s\right),
\qquad\text{with } M_0=0 \text{ and } Z_0=1.
\]
By Assumption~\ref{ass:observation}, for each $s\ge 1$,
\[
\EE\!\left[\exp(\lambda\eta_{c,s})\mid \cF_{s-1}\right]\ \le\ \exp\!\left(\frac{\lambda^2R_c^2}{2}\right),
\]
and hence
\begin{equation*}
\EE[Z_s\mid \cF_{s-1}]
=
Z_{s-1}\cdot
\EE\!\left[\exp\!\left(\lambda\eta_{c,s}-\frac{\lambda^2R_c^2}{2}\right)\Bigm|\cF_{s-1}\right]\le Z_{s-1}.
\end{equation*}
Therefore, $\{Z_s\}_{s\ge 0}$ is a nonnegative supermartingale with $\EE[Z_0]=1$.
By Ville's inequality, for any $a>0$,
\[
\Pr\!\left[\max_{1\le s\le n} Z_s\ge a\right]\ \le\ \frac{\EE[Z_0]}{a}\ =\ \frac{1}{a}.
\]
If $\max_{1\le s\le n} M_s>x$, then there exists $s\in[n]$ such that $M_s>x$, and thus
\[
Z_s
=\exp\!\left(\lambda M_s-\frac{\lambda^2R_c^2}{2}\,s\right)
\ \ge\
\exp\!\left(\lambda x-\frac{\lambda^2R_c^2}{2}\,n\right).
\]
Consequently,
\begin{align*}
\Pr\!\left[\max_{1\le s\le n} M_s>x\right]
&\le
\Pr\!\left[\max_{1\le s\le n} Z_s\ge \exp\!\left(\lambda x-\frac{\lambda^2R_c^2}{2}\,n\right)\right]\\
&\le
\exp\!\left(-\lambda x+\frac{\lambda^2R_c^2}{2}\,n\right).
\end{align*}
Optimizing over $\lambda>0$ by choosing $\lambda=x/(nR_c^2)$ gives
\[
\Pr\!\left[\max_{1\le s\le n} M_s>x\right]\ \le\ \exp\!\left(-\frac{x^2}{2nR_c^2}\right),
\]
which completes the proof.
\end{proof}

\section{Tightness Results}\label{appendix:impossibility}
To show that the dependence $\Omega(1/(\Delta^\star)^2)$ cannot be removed from any convergent simple-regret guarantee under Problem~\ref{problem-main}, where $\Delta^\star=-g(\btheta^\star)$, we construct a set of counterexamples.
Let $Q=1$, $\cQ=\{q\}$, $\Theta=\{\btheta_0,\btheta_1\}$, $s(\btheta_0)=1$, $\epsilon=1/2$, $c(\btheta_0)=1$ and $c(\btheta_1)=0$, so $s_0=1/2$.
For each $\Delta\in(0,1/2]$, define two instances
\[
\cI^+_\Delta:s(\btheta_1)=1/2+\Delta,\quad \cI^-_\Delta:s(\btheta_1)=1/2-\Delta.
\]
Assume that the observations on $g$ satisfy $y_{g,t}=g(\btheta_t)+\eta_t$ with $\eta_t\sim\cN(0,1)$, and the observations on $c$ satisfy $y_{c,t}=c(\btheta_t)$.
One can verify that these instances satisfy Assumptions~\ref{ass:observation}--\ref{ass:rkhs} with fixed constants.

Fix $\delta\in(0,1/4)$ and an integer $\tau\in\NN$.
Let $\pi$ be any adaptive algorithm that makes $\tau$ observations and outputs $\btheta_{\rm out}\in\Theta$.
Assume that for every $\Delta\in(0,1/2]$, when the true instance is $\cI^-_\Delta$, the algorithm is $\delta$-correct.
The impossibility theorem follows.

\begin{theorem}
\label{thm:inv-gap2-necessary}
Under the above construction, on $\cI^+_{\Delta}$ the simple regret satisfies
\begin{equation}\label{eq:sr-lb-gap-short}
\EE\!\left[\mathrm{SR}(\tau)\right]\ \ge\ \left(1-\delta-\Delta^\star\sqrt{\tau}\right)_+,
\end{equation}
and in particular, any guarantee of the form $\EE[\mathrm{SR}(\tau)]\le \varepsilon$ with $\varepsilon<1-\delta$ requires
\[
\tau\ \ge\ \frac{(1-\delta-\varepsilon)^2}{(\Delta^\star)^2}
\ =\ \Omega\!\left(\frac{1}{(\Delta^\star)^2}\right).
\]
\end{theorem}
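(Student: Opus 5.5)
We argue by a two-point change-of-measure comparing $\cI^+_\Delta$ and $\cI^-_\Delta$, with $\Delta=\Delta^\star$ (on $\cI^+_\Delta$, $\btheta^\star=\btheta_1$ and $-g(\btheta^\star)=\Delta$). The two instances differ only in the mean of the constraint observations at $\btheta_1$: $g(\btheta_1)=-\Delta$ versus $+\Delta$. The cost observations are deterministic and identical, and observations at $\btheta_0$ have the same law. Hence the laws $\mathbb P^\pm$ of the full history $H_\tau$ differ only through the Gaussian noise on pulls of $\btheta_1$.

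\textbf{Step 1 (regret on $\cI^+$).} On $\cI^+_\Delta$ we have $c(\btheta^\star)=0$, so $\mathrm{SR}(\tau)=\bm 1\{\btheta_{\rm out}=\btheta_0\}$. Consequently $\EE^+[\mathrm{SR}(\tau)]=\mathbb P^+[\btheta_{\rm out}=\btheta_0]$.

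\textbf{Step 2 (correctness on $\cI^-$).} On $\cI^-_\Delta$, $\btheta_1$ is infeasible. $\delta$-correctness therefore gives $\mathbb P^-[\btheta_{\rm out}=\btheta_0]\ge 1-\delta$.

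\textbf{Step 3 (indistinguishability).} Since $\btheta_{\rm out}$ is $\cF_\tau$-measurable,
\[
\mathbb P^+[\btheta_{\rm out}=\btheta_0]\ \ge\ \mathbb P^-[\btheta_{\rm out}=\btheta_0]-\mathrm{TV}(\mathbb P^+,\mathbb P^-).
\]
We then bound TV by Pinsker, $\mathrm{TV}\le\sqrt{\KL/2}$. For the KL divergence we use the divergence decomposition for adaptive sampling (chain rule over the $\tau$ steps). The algorithm's kernels $\pi_t$ are identical under both instances, so only the observation kernels contribute. Each pull of $\btheta_1$ contributes $\KL(\cN(-\Delta,1)\|\cN(\Delta,1))=(2\Delta)^2/2=2\Delta^2$, and pulls of $\btheta_0$ contribute $0$. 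Hence
\[
\KL(\mathbb P^-\|\mathbb P^+)=2\Delta^2\,\EE^-[N_1(\tau)]\le 2\Delta^2\tau,
\]
where $N_1$ counts pulls of $\btheta_1$. Pinsker then gives $\mathrm{TV}\le\Delta\sqrt{\tau}$. Combining with Steps 1--2 yields $\EE^+[\mathrm{SR}(\tau)]\ge 1-\delta-\Delta^\star\sqrt\tau$, and since SR is nonnegative we may take the positive part.

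\textbf{Step 4 (consequence).} If $\EE[\mathrm{SR}]\le\varepsilon<1-\delta$, then $\varepsilon\ge 1-\delta-\Delta^\star\sqrt\tau$. Rearranging gives $\sqrt\tau\ge(1-\delta-\varepsilon)/\Delta^\star$, and squaring gives the stated bound.

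The only delicate point is Step 3. We must justify the chain-rule KL computation for an adaptive policy with a fixed horizon $\tau$, including possible internal randomization, which is handled since $\pi_t$ is common to both measures. We also need the KL constant to come out so that Pinsker gives exactly $\Delta\sqrt\tau$; that requires being careful that the mean gap is $2\Delta$ and unit variance.
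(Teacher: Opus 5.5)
Your proposal is correct and follows the paper's proof essentially step for step. It uses the same two-point instances, $\delta$-correctness on $\cI^-_\Delta$, the adaptive KL chain rule giving $2\Delta^2\,\EE_{-\Delta}[N_1]\le 2\tau\Delta^2$, Pinsker to bound the total variation by $\Delta\sqrt{\tau}$, and the identity $\EE[\mathrm{SR}(\tau)]=\Pr_{\cI^+_\Delta}(\btheta_{\rm out}=\btheta_0)$; the only cosmetic difference is that you track the event $\{\btheta_{\rm out}=\btheta_0\}$ directly, whereas the paper bounds its complement $\{\btheta_{\rm out}=\btheta_1\}$.
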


\begin{proof}
Fix $\Delta\in(0,1/2]$ and define the event $A=\{\btheta_{\rm out}=\btheta_1\}$.
On $\cI^-_\Delta$, $\btheta_1$ is infeasible since $g(\btheta_1)=\Delta>0$.
Therefore, $\delta$-correctness implies $\Pr_{\cI^-_\Delta}(A)\le \delta$.

Let $P_{-\Delta}$ and $P_{+\Delta}$ denote the joint laws of the full transcript under $\cI^-_\Delta$ and $\cI^+_\Delta$.
Let $N_1=\sum_{t=1}^{\tau}\bm 1\{\btheta_t=\btheta_1\}$ be the number of evaluations of $\btheta_1$, so $N_1\le \tau$ almost surely.
The two instances differ only in the distribution of $y_{g,t}$ when $\btheta_t=\btheta_1$.
Under $\cI^-_\Delta$, $y_{g,t}\sim\cN(\Delta,1)$, while under $\cI^+_\Delta$, $y_{g,t}\sim\cN(-\Delta,1)$.
Using the chain rule for KL divergence under adaptive sampling~\cite{10.5555/2946645.2946646}, we have
\begin{align*}
\KL(P_{-\Delta}\,\|\,P_{+\Delta})&
=
\EE_{-\Delta}[N_1]\cdot \KL(\cN(\Delta,1)\,\|\,\cN(-\Delta,1))\\
&=
\EE_{-\Delta}[N_1]\cdot 2\Delta^2
\ \le\ 2\tau\Delta^2.
\end{align*}
Pinsker's inequality then gives
\[
\|P_{-\Delta}-P_{+\Delta}\|_{\rm TV}
\ \le\ \sqrt{\KL(P_{-\Delta}\,\|\,P_{+\Delta})/2}
\ \le\ \Delta\sqrt{\tau}.
\]
Therefore
\[
\Pr_{\cI^+_\Delta}(A)
\le
\Pr_{\cI^-_\Delta}(A)+\|P_{-\Delta}-P_{+\Delta}\|_{\rm TV}
\le
\delta+\Delta\sqrt{\tau},
\]
and hence $\Pr_{\cI^+_\Delta}(A^c)\ge (1-\delta-\Delta\sqrt{\tau})_+$.

On $\cI^+_\Delta$, $\btheta_1$ is feasible and optimal since $c(\btheta_1)=0<c(\btheta_0)=1$, so $\btheta^\star=\btheta_1$ and $\Delta^\star=-g(\btheta^\star)=\Delta$.
Thus $\mathrm{SR}(\tau)=1$ exactly on $A^c$, and $\EE[\mathrm{SR}(\tau)]=\Pr(A^c)$.
Substituting $\Delta^\star=\Delta$ yields \eqref{eq:sr-lb-gap-short}.
Rearranging $(1-\delta-\Delta^\star\sqrt{\tau})_+\le \varepsilon$ with $\varepsilon<1-\delta$ gives the stated necessary condition on $\tau$.
\end{proof}

\end{document}